\newcommand{\handout}[5]{
  \noindent
  \begin{center}
    \framebox{
      \vbox{
        \hbox to 5.78in { {\bf } \hfill #2 }
        \vspace{4mm}
        \hbox to 5.78in { {\Large \hfill #5  \hfill} }
        \vspace{2mm}
        \hbox to 5.78in { {\em #3 \hfill #4} }
      }
    }
  \end{center}
  \vspace*{4mm}
}
\newtheorem{theorem}{Theorem}[section]
\newtheorem{corollary}[theorem]{Corollary}
\newtheorem{lemma}[theorem]{Lemma}
\newtheorem{prop}[theorem]{Proposition}
\newtheorem{definition}{Definition} \newtheorem{claim}[theorem]{Claim}
\newtheorem{fact}[theorem]{Fact}
 \newenvironment{proof}{\noindent
  \textbf{Proof:}\nopagebreak[2]}{$\qed$}
\newcommand{\qed}{\hfill\rule{7pt}{7pt} \medskip}
\newcommand{\Stab}{{\mathbb S}}
\newcommand{\eps}{\epsilon}
\newcommand{\Var}{\operatorname{Var}}
\newcommand{\sgn}{\mathrm{sign}}
\newcommand{\sign}{\mathrm{sign}}
\newcommand{\note}[1]{\marginpar{\tiny *note in TeX*}}
\newcommand{\ignore}[1]{}
\newcommand{\tref}[1]{Theorem~\ref{thm:#1}}
\newcommand{\cref}[1]{Corollary~\ref{cor:#1}}
\newcommand{\calN}{{\cal N}}
\newcommand{\calG}{{\cal G}}
\newcommand{\calC}{{\cal C}}
\renewcommand{\phi}{\varphi}
\newcommand{\AS}{\mathrm{AS}}
\newcommand{\NS}{\mathrm{NS}}
\newcommand{\GAS}{\mathrm{GAS}}
\newcommand{\GNS}{\mathrm{GNS}}
\newcommand{\GI}{\mathrm{GI}}
\newcommand{\bits}{\{-1,1\}}
\newcommand{\bn}{\bits^n}
\newcommand{\isafunc}{: \bn \rightarrow \bits}
\newcommand{\R}{\mathbb{R}}
\newcommand{\N}{{\cal N}}
\newcommand{\D}{{\cal D}}
\newcommand{\E}{\operatorname{{\bf E}}}
\newcommand{\Ex}{\mathop{{\bf E}\/}}
\renewcommand{\Pr}{\operatorname{{\bf Pr}}}
\newcommand{\littlesum}{\mathop{\textstyle \sum}}
\newcommand{\littlefrac}{\textstyle \frac}
\newcommand{\poly}{\mathrm{poly}}
\newcommand{\Inf}{\mathrm{Inf}}
\newcommand{\half}{{\textstyle \frac 1 2}}
\newcommand{\eqdef}{\stackrel{\textrm{def}}{=}}
\newcommand{\la}{\langle}
\newcommand{\ra}{\rangle}
\newcommand{\opt}{\mathsf{opt}}
\newcommand{\norm}[1]{\|#1\|}
\newcommand{\namedref}[2]{\hyperref[#2]{#1~\ref*{#2}}}
\newcommand{\claimref}[1]{\namedref{Claim}{#1}}
\newcommand{\corollaryref}[1]{\namedref{Corollary}{#1}}
\newcommand{\equationref}[1]{\namedref{Equation}{#1}}
\newcommand{\theoremref}[1]{\namedref{Theorem}{#1}}
\newcommand{\sectionref}[1]{\namedref{Section}{#1}}
\newcommand{\lemmaref}[1]{\namedref{Lemma}{#1}}
\renewcommand{\th}{\textsuperscript{th}\ }
\newcommand{\one}{\mathbf{1}}
\newcommand{\nfrac}{\nicefrac}
\newcommand{\mper}{\,.}
\newcommand{\mcom}{\,,}
\newcommand{\commented}{yes}
  \newcommand{\rnote}[1]{\footnote{{\bf [[Rocco: {#1}\bf ]]  }}}
  \newcommand{\inote}[1]{\footnote{{\bf [[Ilias: {#1}\bf ]]  }}}
  \newcommand{\lnote}[1]{\footnote{{\bf [[Li-Yang: {#1}\bf ]]  }}}
  \newcommand{\inote}[1]{}
  \newcommand{\lnote}[1]{}
  \newcommand{\rnote}[1]{}
\begin{document}

\title{Average sensitivity and noise sensitivity of polynomial
threshold functions}

\author{Ilias Diakonikolas\thanks{Department of Computer Science, Columbia University.
Email: {\tt ilias@cs.columbia.edu}. Research supported by NSF grant CCF-0728736, and by an Alexander S. Onassis Foundation
Fellowship. Part of this work was done while visiting IBM Almaden.} \and Prasad Raghavendra\thanks{Microsoft Research, New
England. Email: {\tt prasad@cs.washington.edu}. Part of the research done while at the University of Washington and visiting
Carnegie Mellon University supported by NSF CCF--0343672.} \and Rocco A. Servedio\thanks{Department of Computer Science,
Columbia University. Email: {\tt rocco@cs.columbia.edu}. Supported by NSF grants CCF-0347282, CCF-0523664 and CNS-0716245, and
by DARPA award HR0011-08-1-0069.} \and Li-Yang Tan\thanks{Department of Computer Science, Columbia University. Email: {\tt
liyang@cs.columbia.edu}. Supported by DARPA award no. HR0011-08-1-0069 and NSF Cybertrust grant no. CNS-0716245.}}

\setcounter{page}{0}

\maketitle

\thispagestyle{empty}

\begin{abstract}

  We give the first non-trivial upper bounds on the average
  sensitivity and noise sensitivity of degree-$d$ polynomial threshold
  functions (PTFs).  These bounds hold both for PTFs over the Boolean
  hypercube $\bn$ and for PTFs over $\R^n$ under the standard
  $n$-dimensional Gaussian distribution $\mathcal{N}(0,I_n)$. Our
  bound on the Boolean average sensitivity of PTFs represents progress
  towards the resolution of a conjecture of Gotsman and Linial
  \cite{GL:94}, which states that the symmetric function slicing the
  middle $d$ layers of the Boolean hypercube has the highest average
  sensitivity of all degree-$d$ PTFs.  Via the $L_1$ polynomial
  regression algorithm of Kalai et al. \cite{KKMS:08}, our bounds on
  Gaussian and Boolean noise sensitivity yield polynomial-time
  agnostic learning algorithms for the broad class of constant-degree
  PTFs under these input distributions.

  The main ingredients used to obtain our bounds on both average and
  noise sensitivity of PTFs in the Gaussian
  setting are tail bounds and anti-concentration bounds on
  low-degree polynomials in Gaussian random variables
  \cite{Janson:97,CW:01}.  To obtain our bound on the Boolean average
  sensitivity of PTFs, we generalize the ``critical-index'' machinery
  of \cite{Servedio:07cc} (which in that work applies to halfspaces,
  i.e. degree-1 PTFs) to general PTFs.  Together with the ``invariance
  principle'' of \cite{MOO:05}, this lets us extend our techniques
  from the Gaussian setting to the Boolean setting.  Our bound on
  Boolean noise sensitivity is achieved via a simple reduction from
  upper bounds on average sensitivity of Boolean PTFs to corresponding
bounds on noise sensitivity.

\end{abstract}

\ignore{


}

\newpage

\section{Introduction}

A degree-$d$ polynomial threshold function (PTF) over a domain $X
\subseteq \R^n$ is a Boolean-valued function $f: X \to \{-1,+1\},$
\[
f(x) = \sign(p(x_1,\dots,x_n))
\]
where $p : X \to \R$ is a degree-$d$ polynomial with real
coefficients.\ignore{The function $f$ is said to be a {\em
multilinear} PTF if the polynomial $p$ is multilinear.}  When $d=1$
polynomial threshold functions are simply linear threshold functions
(also known as halfspaces or LTFs), which play an important role in
complexity theory, learning theory, and other fields such as voting
theory. Low-degree PTFs (where $d$ is greater than 1 but is not too
large) are a natural generalization of LTFs which are also of
significant interest in these fields.

Over more than twenty years much research effort in the study of Boolean functions has been devoted to different notions of the
``sensitivity'' of a Boolean function to small perturbations of its input, see e.g. \cite{KKL:88, BshoutyTamon:96,
BourgainKalai:97, Friedgut:98, BKS:99, Shi:00, MosselOdonnell:03, MOO:05, OSSS:05, OdonnellServedio:07} and many other works.
In this work we focus on two natural and well-studied measures of this sensitivity, the ``average sensitivity'' and the ``noise
sensitivity.''  As our main results, we give the first non-trivial upper bounds on average sensitivity and noise sensitivity of
low-degree PTFs. These bounds have several applications in learning theory and complexity theory as we describe later in this
introduction.

We now define the notions of average and noise sensitivity in the
setting of Boolean functions $f: \{-1,1\}^n \to \{-1,1\}$.  (Our
paper also deals with average sensitivity and noise sensitivity of
functions $f: \R^n \to \{-1,1\}$ under the Gaussian distribution,
but the precise definitions are more involved than in the Boolean
case so we defer them until later.)

\subsection{Average Sensitivity and Noise Sensitivity}
\label{sec:asns}

The \emph{sensitivity} of a Boolean function $f: \{-1,1\}^n \to \{-1,1\}$ on an input $x \in \{-1,1\}^n$, denoted $s_f(x)$, is
the number of Hamming neighbors $y \in \{-1,1\}^n$ of $x$ (i.e. strings which differ from $x$ in precisely one coordinate) for
which $f(x) \neq f(y).$ The \emph{average sensitivity} of $f$, denoted $\AS(f)$, is simply $\E[s_f(x)]$ (where the expectation
is with respect to the uniform distribution over $\{-1,1\}^n$).  An alternate definition of average sensitivity can be given in
terms of the influence of individual coordinates on $f$.  For a Boolean function $f: \{-1,1\}^n \to \{-1,1\}$ and a coordinate
index $i \in [n]$, the \emph{influence of coordinate $i$ on $f$} is the probability that flipping the $i$-th bit of a uniform
random input $x \in \{-1,1\}^n$ causes the value of $f$ to change, i.e. $\Inf_i(f) = \Pr[f(x) \neq f(x^{\oplus i})]$ (where the
probability is with respect to the uniform distribution over $\{-1,1\}^n$). The sum of all $n$ coordinate influences,
$\littlesum_{i=1}^n \Inf_i(f)$, is called the \emph{total influence} of $f$; it is easily seen to equal $\AS(f).$ Bounds on
average sensitivity have been of use in the structural analysis of Boolean functions (see e.g.
\cite{KKL:88,Friedgut:98,Shi:00}) and in developing computationally efficient learning algorithms (see e.g.
\cite{BshoutyTamon:96,OdonnellServedio:07}).

The average sensitivity is a measure of how $f$ changes when a
single coordinate is perturbed.  In contrast, the noise sensitivity
of $f$ measures how $f$ changes when a random collection of
coordinates are all perturbed simultaneously. More precisely, given
a noise parameter  $0 \leq \eps \leq 1$ and a Boolean function $f:
\{-1,1\}^n \to \{-1,1\}$, the \emph{noise sensitivity of $f$ at
noise rate $\eps$} is defined to be
\[
\NS_\eps(f) = \Pr_{x,y}[f(x) \neq f(y)]
\]
where $x$ is uniform from $\{-1,1\}^n$ and $y$ is obtained from $x$
by flipping each bit independently with probability $\eps.$ Noise
sensitivity has been studied in a range of contexts including
Boolean function analysis, percolation theory, and computational
learning theory
\cite{BKS:99,KOS:04,MosselOdonnell:03,SchrammSteif:05,KOS:08}.

\subsection{Main Results:  Upper Bounds on Average Sensitivity and Noise Sensitivity}

\subsubsection{Boolean PTFs}

In 1994 Gotsman and Linial \cite{GL:94} conjectured that the symmetric function slicing the middle $d$ layers of the Boolean
hypercube has the highest average sensitivity among all degree-$d$ PTFs. Since this function has average sensitivity $\Theta(d
\sqrt{n})$ for every $1 \leq d \leq \sqrt{n}$, this conjecture implies (and is nearly equivalent to) the conjecture that every
degree-$d$ PTF $f$ over $\{-1,1\}^n$ has $\AS(f) \leq d \sqrt{n}.$

Our first main result is an upper bound on average sensitivity which
makes progress toward this conjecture:

\begin{theorem}
\label{thm:boolas} For any degree-$d$ PTF $f$ over $\{-1,1\}^n$, we
have
\[
\AS(f) \leq 2^{O(d)} \cdot\log n \cdot n^{1-1/(4d + 2)}.
\]

\end{theorem}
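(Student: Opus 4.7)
The strategy combines three ingredients: (i) a Gaussian analog of the bound proved via Carbery--Wright anti-concentration, (ii) the MOO invariance principle used to transfer the Gaussian bound to the Boolean cube, and (iii) a generalization of the critical-index machinery from halfspaces (degree $1$) to arbitrary degree $d$, which is needed whenever the defining polynomial is far from ``regular.''

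First, I would prove the Gaussian analog: for any degree-$d$ polynomial $p$ with $\Var[p(G)]=1$ under $G\sim\mathcal{N}(0,I_n)$, the Gaussian sensitivity of $\sign(p)$ is at most $2^{O(d)}$. The key tool is the Carbery--Wright anti-concentration inequality
\[
\Prx_{G}\bigl[\,|p(G)| \leq t\,\bigr] \;\leq\; O(d)\cdot t^{1/d},
\]
combined with a coupling of two correlated Gaussians: $\sign(p)$ can only flip when $|p|$ is small. Next, the MOO invariance principle says that if $p$ is $\tau$-regular (all variables have influence $\leq\tau$ in $p$), then the distribution of $p$ on $\bn$ is close in CDF to its distribution under $\mathcal{N}(0,I_n)$, with error roughly $2^{O(d)}\tau^{\Omega(1/d)}$ after smoothing the $\{|p|\leq t\}$ indicator. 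Thus Carbery--Wright transfers to a Boolean anti-concentration bound, which yields a corresponding bound on $\AS(\sign(p))$ for $\tau$-regular $p$.

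To handle general (non-regular) $p$, I would generalize Servedio's critical-index construction. Order the variables in decreasing order of their polynomial influence $\Inf_i(p)=\|\partial_i p\|_2^2$ (in the multilinear sense), and define the $\tau$-critical index $K$ as the smallest $k$ such that, from position $k+1$ onward, all influences are at most $\tau$ times the remaining variance. A central lemma, generalizing the $d=1$ analysis, will bound $K$ as some function of $\tau$ and $d$, showing that after restricting the $K$ ``head'' variables, the resulting tail polynomial is $\tau$-regular except on a small set of head restrictions. For regular tail restrictions the invariance-based bound of the previous paragraph applies; on the exceptional restrictions we fall back on the trivial bound $\AS \leq n$.

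Finally, I would decompose
\[
\AS(f) \;\leq\; K \;+\; \Ex_\rho\!\bigl[\AS(f|_\rho)\bigr],
\]
where $\rho$ ranges over restrictions of the head. Splitting the expectation into regular and exceptional $\rho$, and choosing the threshold $\tau\asymp n^{-1/(4d+2)}$ (up to logarithmic factors) to balance the critical-index bound against the invariance error, should produce the claimed bound $2^{O(d)}\cdot\log n\cdot n^{1-1/(4d+2)}$. The main obstacle is the critical-index step: when $d=1$ each variable corresponds to a single coefficient, so identifying the ``head'' and analyzing the tail after restriction is clean; for $d\geq 2$ each variable appears in many monomials, so restricting a head variable perturbs the entire coefficient structure of the tail polynomial, and verifying that a typical restriction preserves both the bulk of the variance and $\tau$-regularity of the tail requires a substantially more delicate (likely variance-decomposition or moment-based) argument than the halfspace case.
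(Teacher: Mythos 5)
Your high-level architecture (regular case via invariance plus Carbery--Wright, a critical-index decomposition for the irregular case, restriction of the head variables) is the same as the paper's, but the final combination step as you describe it would fail. The structural fact one can actually prove is much weaker than ``the tail is $\tau$-regular except on a small set of head restrictions'': to lower-bound the denominator $\sum_{j>k}\Inf_j(p_\rho)$ of the regularity ratio one applies hypercube anti-concentration to the degree-$2d$ polynomial $A(\rho)=\sum_{j>k}\Inf_j(p_\rho)$, which only gives $\Pr_\rho[A>\E_\rho A]\geq 2^{-O(d)}$. So the restrictions for which the restricted polynomial is guaranteed regular (or, in the large-critical-index case, for which $f_\rho$ is constant) form only a $2^{-O(d)}$ fraction, and the ``exceptional'' set has measure $1-2^{-O(d)}$, i.e.\ nearly everything. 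Plugging the trivial bound $\AS\leq n$ into $\AS(f)\leq K+\E_\rho[\AS(f|_\rho)]$ on that set gives a bound of order $n$ and proves nothing. The correct move is to note that each exceptional $f_\rho$ is itself a degree-$d$ PTF on at most $n$ variables, bound it by $\AS(n,d)$, and solve the resulting self-referential recursion
\[
\AS(n,d)\ \leq\ K+\alpha\cdot O\bigl(d(\log n)^{1/4}\, n^{1-1/(4d+2)}\bigr)+(1-\alpha)\,\AS(n,d),\qquad \alpha\geq 2^{-O(d)},
\]
which yields $\AS(n,d)\leq 2^{O(d)}\log n\cdot n^{1-1/(4d+2)}$; this recursion is exactly where the $2^{O(d)}$ factor in the theorem comes from, and it is absent from your plan.

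A related gap: you propose a ``central lemma'' bounding the critical index in terms of $\tau$ and $d$. No such bound exists --- the $\tau$-critical index of a degree-$d$ polynomial can be arbitrarily large (even $+\infty$). Instead one truncates at $K=2d\log n/\tau$: when the critical index exceeds $K$, the total tail influence has decayed geometrically to at most $d/n^{2d}$, and then hypercube anti-concentration applied to the head polynomial, together with control of how much a restriction can inflate tail influences, shows that $f_\rho$ is \emph{constant} (not merely regular) for a $2^{-O(d)}$ fraction of head restrictions --- again only a $2^{-O(d)}$ fraction, feeding into the same recursion. Two smaller calibration issues: $\tau\asymp n^{-1/(4d+2)}$ is too large, since the invariance-plus-Carbery--Wright error is $O(d\,\tau^{1/(4d+1)})$ per variable one needs $\tau\asymp n^{-(4d+1)/(4d+2)}$ to get $\Inf_i(f)=O(d\,n^{-1/(4d+2)})$; and the regular case also requires a tail (concentration) bound on the derivative $D_ip$, whose 2-norm is small by regularity, to compare $|p(x)|$ against $|2D_ip(x)|$ --- anti-concentration of $p$ alone does not bound $\Inf_i(f)$. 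Finally, the Gaussian statement you start from is misstated: the Gaussian average sensitivity of a degree-$d$ PTF is not $2^{O(d)}$ but grows with $n$, namely $O(d^2\log n\cdot n^{1-1/2d})$.
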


Using a completely different set of techniques, we also prove a
different bound which improves on \theoremref{thm:boolas} for $d
\leq 4$:

\begin{theorem} \label{thm:boolas2}
For any degree-$d$ PTF $f$ over $\{-1,1\}^n$, we have
\[
\AS(f) \leq 2 n^{1-1/2^d}.
\]

\end{theorem}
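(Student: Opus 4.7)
The natural approach is induction on the degree $d$. For the base case $d=1$, the standard bound $\AS(\text{halfspace}) \leq 2\sqrt{n}$ follows easily from $\Inf_i(\sign(\sum_j a_j x_j - \theta)) \leq \Pr[|p(x)| \leq 2|a_i|]$ together with a Littlewood--Offord-type anti-concentration estimate giving $\Inf_i(f) \leq O(|a_i|/\|a\|_2)$, and then Cauchy--Schwarz to sum $\sum_i |a_i|/\|a\|_2 \leq \sqrt{n}$. Observing that the target exponents satisfy the recursion $1 - 1/2^d = \frac{1}{2} + \frac{1}{2}(1 - 1/2^{d-1})$, the inductive step naturally aims at a bound of the form $\AS(f) \leq \sqrt{n \cdot B(d-1, n)}$, where $B(d-1, n) = 2n^{1-1/2^{d-1}}$ is the bound the inductive hypothesis supplies for degree-$(d-1)$ PTFs on $n$ variables.

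The first move of the inductive case is the direct Cauchy--Schwarz inequality $\AS(f) = \sum_{i=1}^n \Inf_i(f) \leq \sqrt{n \cdot \sum_i \Inf_i(f)^2}$. It thus suffices to prove $\sum_i \Inf_i(f)^2 \leq 2n^{1-1/2^{d-1}}$. To attack this, I would decompose $p(x) = A_i(x_{-i}) + x_i B_i(x_{-i})$ for each coordinate $i$, where $B_i = \partial_i p$ has degree at most $d-1$ (while $A_i$ has degree at most $d$), giving the clean identity $\Inf_i(f) = \Pr_{x_{-i}}[|A_i(x_{-i})| < |B_i(x_{-i})|]$. Writing $\Inf_i(f)^2$ as the probability of the corresponding joint event on two independent copies $x, x'$ of the input then suggests building a companion degree-$(d-1)$ PTF $g$ whose average sensitivity dominates $\sum_i \Inf_i(f)^2$.

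The main obstacle---and the place where the argument has to be most inventive---is this construction of $g$. A natural first attempt is to consider $\sign(\partial_i p)$ for a carefully chosen or averaged index, which is a bona fide degree-$(d-1)$ PTF but whose individual influences only indirectly control $\Inf_i(f)^2$. A second possibility is to pass to a doubled variable space $\{-1,1\}^{2n}$ and consider a degree-$(d-1)$ quantity like the sign of an expression bilinear in $A_i$ and $B_i$ evaluated on the two independent copies, at the cost of enlarging the variable set. Producing the right construction---and verifying that its average sensitivity upper-bounds $\sum_i \Inf_i(f)^2$ with only a constant-factor loss---is the technical crux of the proof.

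Once that reduction is in hand the rest is routine: the inductive hypothesis gives $\sum_i \Inf_i(f)^2 \leq 2n^{1-1/2^{d-1}}$, and combining with the Cauchy--Schwarz step yields $\AS(f) \leq \sqrt{n \cdot 2n^{1-1/2^{d-1}}} = \sqrt{2} \cdot n^{1-1/2^d} \leq 2 n^{1-1/2^d}$, closing the induction with room to spare in the leading constant.
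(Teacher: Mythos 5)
There is a genuine gap, and you have put your finger on it yourself: the ``technical crux'' --- constructing a degree-$(d-1)$ PTF $g$ whose average sensitivity dominates $\sum_i \Inf_i(f)^2$ --- is never carried out, and without it the induction does not close. The difficulty is created by where you apply Cauchy--Schwarz. Bounding $\AS(f) \leq \sqrt{n\sum_i \Inf_i(f)^2}$ first leaves you with $\sum_i \Inf_i(f)^2$, a sum of products of probabilities over two \emph{independent} copies of the input; the coupling between $f$ and its derivatives that the recursion needs is already gone at that point, and neither of your candidate constructions (an averaged $\sgn(\partial_i p)$, or a PTF on the doubled cube) is shown to work. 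Note also that the naive analogue ``$\Inf_i(f)^2 \lesssim \Inf_i(\sgn(D_i p))$-type quantities'' has no evident proof: for $d=1$ the bound $\sum_i \Inf_i(f)^2 \leq 1$ holds only via unateness of LTFs ($\Inf_i(f) = |\wh{f}(\{i\})|$), a property that degree-$d$ PTFs lack, so there is no cheap way to certify even the first nontrivial case of your intermediate claim, let alone the general reduction with a loss small enough (at most a factor $2$) to keep the constant in the theorem.

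The paper's proof keeps the coupling intact by applying Cauchy--Schwarz \emph{inside} a single expectation. The key identity (\lemmaref{lem:key}) is $\Inf_i(f) = \E[f(x)\, x_i\, \sgn(D_i p(x))]$, so that $\AS(f) = \E[f(x) \sum_i x_i \sgn(D_i p(x))] \leq \sqrt{\E[(\sum_i x_i \sgn(D_i p(x)))^2]}$. Expanding the square, the diagonal contributes $n$, and each off-diagonal term $\E[x_i x_j \sgn(D_i p)\sgn(D_j p)]$ is bounded (\lemmaref{lem:2func}, using that $D_i p$ does not depend on $x_i$) by $\tfrac12(\Inf_i(\sgn(D_j p)) + \Inf_j(\sgn(D_i p)))$. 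Since each $\sgn(D_j p)$ is a genuine degree-$(d-1)$ PTF, the off-diagonal sum is at most $n\cdot \AS(n,d-1)$, giving the recursion $\AS(n,d) \leq \sqrt{n + n\,\AS(n,d-1)}$ (\theoremref{thm:recursiverelation}), from which $\AS(n,d)\leq 2n^{1-1/2^d}$ follows by induction starting from $\AS(n,0)=0$. So your high-level instinct (induct on $d$, exploit the exponent recursion $1-1/2^d = \tfrac12 + \tfrac12(1-1/2^{d-1})$, reduce to derivative PTFs) matches the paper, but the missing reduction is exactly the step where the paper's argument does its work, and your proposal as written does not supply a substitute for it.
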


We give a simple reduction which translates any upper bound on
average sensitivity for degree-$d$ PTFs over Boolean variables into
a corresponding upper bound on noise sensitivity.  Combining this
reduction with Theorems~\ref{thm:boolas} and ~\ref{thm:boolas2}, we
establish:

\begin{theorem}
\label{thm:boolns} For any degree-$d$ PTF $f$ over $\{-1,1\}^n$ and
any $0 \leq \eps \leq 1$, we have \begin{eqnarray*} \NS_\eps(f)
&\leq& 2^{O(d)} \cdot \eps^{1/(4d + 2)} \log(1/\eps)\\
\NS_\eps(f) &\leq& O(\eps^{1/2^d}).
\end{eqnarray*}
\end{theorem}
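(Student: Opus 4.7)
The plan is to deduce both bounds in Theorem~\ref{thm:boolns} from the AS bounds of Theorems~\ref{thm:boolas} and~\ref{thm:boolas2} via a single reduction. Specifically, I will establish the following reduction lemma: if $\AS(g) \leq \Phi(m)$ holds for every degree-$d$ Boolean PTF $g$ on $m$ variables, then for every degree-$d$ Boolean PTF $f$ and every $\eps \in (0,1/2]$, $\NS_\eps(f) \leq C \cdot \eps \cdot \Phi(1/\eps)$ for some absolute constant $C$. Granted the lemma, both inequalities follow by plugging in the two AS bounds: with $\Phi_1(m) = 2^{O(d)} \log m \cdot m^{1-1/(4d+2)}$ from Theorem~\ref{thm:boolas} we get $\eps \Phi_1(1/\eps) = 2^{O(d)} \log(1/\eps) \cdot \eps^{1/(4d+2)}$, while with $\Phi_2(m) = 2 m^{1-1/2^d}$ from Theorem~\ref{thm:boolas2} we get $\eps \Phi_2(1/\eps) = 2\eps^{1/2^d}$, matching the two claimed bounds exactly.

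For the reduction lemma I would exploit two standard facts: (i) the Fourier-level inequality $\NS_\eps(f) \leq \eps \cdot \AS(f)$, which follows from $1 - (1-2\eps)^{|S|} \leq 2\eps|S|$ applied termwise to the Plancherel expansion of $\NS_\eps$; and (ii) the closure of degree-$d$ PTFs under restrictions to sub-cubes. In the easy regime $n \leq 1/\eps$, combining (i) with the hypothesis already gives $\NS_\eps(f) \leq \eps \Phi(n) \leq \eps \Phi(1/\eps)$ by monotonicity of $\Phi$. In the substantive regime $n > 1/\eps$, I would write the noise process as $y = x \oplus \mathbf{1}_T$ for a random $T \subseteq [n]$ with each coordinate included independently with probability $\eps$, condition on $T$, and use a path/hybrid argument that unrolls $x \to x \oplus \mathbf{1}_T$ one coordinate at a time to obtain
\[
\Pr_x[f(x) \neq f(x \oplus \mathbf{1}_T)] \;\leq\; \E_{x_{\bar T}}[\AS(f_\rho)],
\]
where $\rho$ denotes the restriction with free set $T$ and fixed part $x_{\bar T}$. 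Since $f_\rho$ is itself a degree-$d$ PTF on $|T|$ variables, the hypothesis yields the pointwise bound $\AS(f_\rho) \leq \Phi(|T|)$.

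To then eliminate the remaining dependence on $n$, I would couple this argument with an auxiliary random restriction that cuts $f$ down to roughly $1/\eps$ free coordinates \emph{before} performing the noise analysis, and invoke the trivial bound (i) on the restricted degree-$d$ PTF. The closure property in (ii) ensures that the AS hypothesis still applies after the restriction, and Chernoff-type concentration of $|T|$ around its mean $\eps n$ controls the low-probability tail where $|T|$ is atypically large. Averaging and collecting constants should yield $\NS_\eps(f) \leq C \cdot \eps \cdot \Phi(1/\eps)$ as required, with the $\log(1/\eps)$ factor in the first bound arising purely from the $\log m$ factor in $\Phi_1$.

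The main obstacle, and the step I expect to require the most care, is the coupling that removes the dependence on $n$ in the regime $n \gg 1/\eps$. A naive attempt of the form $\NS_\eps(f) \leq \E_\sigma[\NS_\eps(f_\sigma)]$ for a random restriction $\sigma$ to $1/\eps$ coordinates fails in general, since simple imbalanced-PTF examples (such as $\mathrm{AND}_n$) have the property that typical restrictions have \emph{smaller} noise sensitivity than the original function. The reduction must therefore be probabilistic in a more delicate way, invoking the AS bound on restrictions rather than any NS bound on them. Getting the quantitative dependence exactly right---losing only a constant factor plus a $\log(1/\eps)$ in the first bound, and only a constant in the second---is the key technical point.
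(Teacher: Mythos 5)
Your overall architecture matches the paper's: the theorem is indeed deduced by combining Theorems~\ref{thm:boolas} and~\ref{thm:boolas2} with a reduction of exactly the form you state (the paper's \theoremref{thm:reduction}, $\NS(\eps,d)\le \frac1m \AS(m,d)$ with $m=\lfloor 1/\eps\rfloor$), and your final bookkeeping with $\Phi_1,\Phi_2$ is correct. The gap is in the proof of the reduction itself, in the only regime that matters ($n\gg 1/\eps$). Your path/hybrid argument, conditioned on the flipped set $T$, gives $\NS_\eps(f)\le \E_T\bigl[\E_{x_{\bar T}}[\AS(f_\rho)]\bigr]\le \E_{|T|\sim\mathrm{Bin}(n,\eps)}[\Phi(|T|)]$, and since $|T|$ concentrates around $\eps n\gg 1/\eps$, this is roughly $\Phi(\eps n)$ --- e.g.\ $(\eps n)^{1-1/2^d}$ rather than the target $\eps^{1/2^d}$ --- so it does not eliminate the dependence on $n$. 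The fix you propose, an auxiliary random restriction down to $\approx 1/\eps$ free coordinates before the noise analysis, is precisely the step you acknowledge fails: any argument that first \emph{fixes} most coordinates loses the noise acting on the fixed coordinates (the dictator $f(x)=x_1$ already shows $\NS_\eps(f)=\eps$ while a random restriction to $1/\eps$ coordinates makes $f$ constant with probability $1-1/(\eps n)$), and switching from an NS bound to an AS bound on the restricted function does not repair this, because the free set in your hybrid argument is $T$ itself, of size $\approx\eps n$, not $1/\eps$. So the key quantitative step of the lemma is left unproved.

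The missing idea, which is how the paper proves the reduction (following Peres), is to \emph{identify} variables rather than restrict them: pick a uniformly random $a\in\bn$ and a random partition $\alpha:[n]\to[m]$ into $m=\lfloor 1/\eps\rfloor$ blocks, and substitute $x_i=a_i z_{\alpha(i)}$ for fresh variables $z\in\bits^m$. Then $x$ is uniform, $f_{a,\alpha}(z)\eqdef f(x)$ is a degree-$d$ PTF in only $m$ variables (the substitution cannot increase degree), and flipping a single uniformly random coordinate $z_r$ flips exactly the coordinates of $x$ in block $r$, i.e.\ produces $\frac1m$-noise on $x$. Hence
\[
\NS_{1/m}(f)=\E_{a,\alpha}\Bigl[\frac1m\sum_{r=1}^m \Inf_r(f_{a,\alpha})\Bigr]\le \frac1m\,\AS(m,d)\le \eps\,\Phi(1/\eps)\mcom
\]
and monotonicity of $\NS_\eps$ in $\eps$ handles $\eps\le 1/m$. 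Every coordinate of $x$ participates in the noise (none is fixed), which is exactly what your restriction-based attempt cannot achieve; this block-identification trick is the ingredient your proposal is missing.
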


\subsubsection{Gaussian PTFs}

Looking beyond the Boolean hypercube, there are well-studied notions of average sensitivity and noise sensitivity for
Boolean-valued functions over $\R^n$, where we view $\R^n$ as endowed with the standard multivariate Gaussian distribution
${\cal N}(0,I_n)$ \cite{Bog:98, MOO:05}. Let $\GAS(f)$ denote the Gaussian average sensitivity of a function $f: \R^n \to
\{-1,1\}$, and let $\GNS_\eps(f)$ denote the Gaussian noise sensitivity at noise rate $\eps.$  (See
\sectionref{sec:gaussianprelim} for precise definitions of these
quantities; here we just note that these are natural analogues of
their uniform-distribution Boolean hypercube counterparts defined
above.)  We prove an upper bound on Gaussian average sensitivity of
low-degree PTFs:

\begin{theorem}
\label{thm:gaussas} For any degree-$d$ PTF $f$ over $\R^n$, we have
\[ \GAS(f) \leq O(d^2\cdot \log n\cdot n^{1-1/2d}). \]
\end{theorem}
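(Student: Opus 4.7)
My plan is to bound each Gaussian influence $\Inf_i^G(f) := \Pr_x[\sign(p(x)) \neq \sign(p(x^{(i)}))]$, where $x^{(i)}$ denotes $x$ with its $i$-th coordinate resampled from an independent standard Gaussian, and then sum over $i$ to obtain $\GAS(f)$. By positive-scale invariance of $\sign(\cdot)$ I may normalize $p$ so that $\Var(p) = 1$, and drop the constant Hermite term. Set $\Delta_i := p(x^{(i)}) - p(x)$; a short Hermite-orthogonality calculation gives $\|\Delta_i\|_2^2 = 2\sigma_i^2$, where
\[
\sigma_i^2 \;:=\; \sum_{\alpha \,:\, \alpha_i > 0} \wh{p}(\alpha)^2
\]
is the Hermite influence of coordinate $i$ on the real-valued polynomial $p$. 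A sign disagreement between $p(x)$ and $p(x^{(i)})$ forces $|p(x)| \le |\Delta_i|$, so for any threshold $T > 0$,
\[
\Inf_i^G(f) \;\le\; \Pr[|\Delta_i| \ge T] + \Pr[|p(x)| \le T].
\]

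The first term I would bound using Janson's hypercontractive tail inequality \cite{Janson:97} for degree-$d$ Gaussian polynomials (applied to $\Delta_i$), and the second by Carbery--Wright anti-concentration \cite{CW:01}: $\Pr[|p(x)| \le T] \le O(d\, T^{1/d})$, using $\|p\|_2 \ge \sqrt{\Var(p)} = 1$. Choosing $T = C\,\sigma_i\,(d \log n)^{d/2}$ drives the Janson tail below $n^{-\omega(1)}$ and yields the per-coordinate bound
\[
\Inf_i^G(f) \;\le\; O\!\bigl(d^{3/2}\sqrt{\log n}\bigr)\cdot \sigma_i^{1/d} \;+\; n^{-\omega(1)}.
\]

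To sum, I would use the fact that every multi-index $\alpha$ with $|\alpha| \le d$ has support of size at most $d$, giving
\[
\sum_{i=1}^n \sigma_i^2 \;=\; \sum_\alpha |\mathrm{supp}(\alpha)|\,\wh{p}(\alpha)^2 \;\le\; d \cdot \Var(p) \;=\; d.
\]
Then H\"older's inequality with exponents $2d/(2d-1)$ and $2d$ produces
\[
\sum_{i=1}^n \sigma_i^{1/d} \;\le\; n^{1-1/(2d)}\Bigl(\sum_i \sigma_i^2\Bigr)^{1/(2d)} \;\le\; d^{1/(2d)}\cdot n^{1-1/(2d)},
\]
so combining with the per-coordinate bound yields $\GAS(f) \le O(d^2 \log n \cdot n^{1-1/(2d)})$. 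The main obstacle is the per-coordinate step: the hypercontractive tail bound and the Carbery--Wright anti-concentration must be calibrated so that the threshold $T$ produces exactly the $\sigma_i^{1/d}$ scaling, which is precisely what the subsequent H\"older summation needs in order to yield the $n^{1 - 1/(2d)}$ factor.
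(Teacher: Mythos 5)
Your proposal is correct, but it takes a genuinely different route from the paper's own proof. The paper proves a per-coordinate lemma (its Lemma 3.1) by expanding $p$ in the Hermite basis of the single coordinate $x_i$, writing $p = \sum_{j=0}^d p_j(x_{-i})\,h_j(x_i)$, then showing that with high probability over the remaining coordinates the $x_i$-independent part $p_0$ is large (Carbery--Wright) while each $p_j$, $j\ge 1$, is small (the degree-$d$ tail bound), and finally invoking a pointwise growth bound on univariate Hermite polynomials (its Lemma 3.4) to conclude that the restricted univariate function $\sign(p_0+\sum_j p_j h_j(x_i))$ is nearly constant; the resulting per-coordinate bound $O(d^2\,\GI_i(p)^{1/(2d)}\log(1/\GI_i(p)))$ is then summed using concavity of $x^{1/(2d)}\log(1/x)$ on $[0,e^{-2d}]$, with the few coordinates of influence at least $e^{-2d}$ handled separately. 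You instead compare $p(x)$ with $p(x^{(i)})$ directly: the resampled-coordinate difference $\Delta_i$ satisfies $\|\Delta_i\|_2^2 = 2\,\GI_i(p)$ by an easy orthogonality computation, a sign flip forces either $|\Delta_i|$ to be atypically large (tail bound) or $|p(x)|$ to be small (Carbery--Wright applied to the whole polynomial, using $\|p\|_2\ge\sqrt{\Var(p)}=1$), and the resulting $\GI_i(p)^{1/(2d)}$ scaling sums cleanly by H\"older with no case split. In effect you have transported the paper's Gaussian noise-sensitivity argument (Section 4) to average sensitivity, where the analogue of its Claim 4.1 becomes trivial; this avoids the univariate conditional analysis and the Hermite-growth lemma entirely and even yields the slightly stronger bound $O(d^{3/2}\sqrt{\log n}\cdot n^{1-1/(2d)})$. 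Two points to tighten in a full writeup: with $T=C\sigma_i(d\log n)^{d/2}$ and $C$ a fixed constant, the tail is only $\exp(-\Omega(d\log n))=n^{-\Omega(d)}$, not $n^{-\omega(1)}$; this still suffices because the $n$ summed tail terms total $n^{1-\Omega(d)}$, which is dominated by the main term once the constant is placed so that the exponent scales with it (e.g. $T=\sigma_i(Cd\log n)^{d/2}$), with small $d$ handled by taking $C$ large. Also, do not literally ``drop the constant Hermite term''---$\sign(p)$ depends on it---but your argument never needs to: it only uses $\|p\|_2\ge\sqrt{\Var(p)}$ and the fact that the constant cancels in $\Delta_i$.
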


We remark that in the case of degree-$d$ \emph{multilinear} PTFs it is possible to obtain a slightly stronger bound of
$\GAS(f)\leq O(d\cdot \log n\cdot n^{1-1/2d})$ using our approach. We also prove an upper bound on the Gaussian noise
sensitivity of degree-$d$ PTFs:

\begin{theorem} \label{thm:gaussns} For any degree-$d$
PTF $f$ over $\R^n$ and any $0 \leq \eps \leq 1$, we have
\[ \GNS_\eps(f) \leq O(d\cdot \log^{1/2}(1/\eps)\cdot \eps^{1/2d}).\]
\end{theorem}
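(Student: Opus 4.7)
The plan is to combine anti-concentration of the defining polynomial $p$ with a concentration bound on how much $p$ changes between correlated Gaussian inputs --- precisely the two ingredients (Carbery--Wright anti-concentration and Janson's hypercontractive tail bound for low-degree Gaussian polynomials) that the abstract flags as the engines of our Gaussian arguments. Write $f = \sign(p)$ with $\deg p = d$ and let $(x,y)$ be a $\rho$-correlated Gaussian pair with $\rho = 1 - \eps$, so that $\GNS_\eps(f) = \Pr[\sign p(x) \neq \sign p(y)]$. Since $f$ is invariant under positive rescaling of $p$, I may normalize $\Var(p) = 1$ (the case $\Var(p)=0$ being trivial).

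For a threshold $\tau > 0$ to be chosen later, I would use the decomposition
$$
\GNS_\eps(f) \;\leq\; \Pr\bigl[|p(x)| \leq \tau\bigr] \;+\; \Pr\bigl[|p(x) - p(y)| > \tau\bigr],
$$
justified by the fact that $\sign p(x) \neq \sign p(y)$ together with $|p(x)| > \tau$ forces $|p(x) - p(y)| > \tau$. The first term is immediately controlled by Carbery--Wright: $\Pr[|p(x)| \leq \tau] \leq C d\, \tau^{1/d}$.

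For the second term, I first compute $\|p(x) - p(y)\|_2^2$ via the Hermite expansion of $p$: using $\E[p(x)p(y)] = \sum_\alpha \rho^{|\alpha|}\hat p(\alpha)^2$ together with the elementary inequality $1-\rho^k \leq k(1-\rho)$, only non-constant components contribute and one obtains
$$
\E\bigl[(p(x) - p(y))^2\bigr] \;\leq\; 2d(1-\rho)\,\Var(p) \;=\; 2d\eps.
$$
Realizing $y = \rho x + \sqrt{1-\rho^2}\, z$ for an independent standard Gaussian $z$, the quantity $p(x) - p(y)$ is a degree-$d$ polynomial in the $2n$ independent Gaussians $(x,z)$, so Janson's hypercontractive tail bound yields $\Pr[|p(x) - p(y)| > \tau] \leq C\exp\!\bigl(-c(\tau/\sqrt{2d\eps})^{2/d}\bigr)$. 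Balancing by setting $\tau = K\sqrt{d\eps}\,\log^{d/2}(1/\eps)$ for a sufficiently large absolute constant $K$ makes the anti-concentration term $O(d \cdot \eps^{1/2d}\log^{1/2}(1/\eps))$ (absorbing $K^{1/d}$ and $d^{1/(2d)}$ into the constant), while the tail contribution becomes $O(\eps^{cK^{2/d}})$ and is dominated by the first. This matches the target $O(d \log^{1/2}(1/\eps) \cdot \eps^{1/2d})$.

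The subtlest step is the $L^2$ computation: the inequality $1 - \rho^{|\alpha|} \leq |\alpha|(1-\rho)$ is what produces a single linear factor of $d$ rather than something worse, and the cancellation of the constant term of $p$ inside $p(x) - p(y)$ is what allows the normalization $\Var(p)=1$ to be used throughout --- otherwise a polynomial whose mass is dominated by its constant term would have $\|p\|_2$ much larger than its standard deviation and have to be treated separately. Everything else is routine bookkeeping given the two external anti-concentration / tail inputs.
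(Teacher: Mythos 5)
Your proposal is correct, and it follows the same skeleton as the paper's proof: the same event decomposition (either $|p(x)|\leq \tau$ or $|p(x)-p(y)|\geq \tau$), Carbery--Wright for the first event, the degree-$d$ hypercontractive tail bound for the second, and the same threshold $\tau = \Theta\bigl(\sqrt{d\eps}\,\log^{d/2}(1/\eps)\bigr)$ up to the value of the constant. Where you genuinely diverge is in the key technical lemma bounding $\|p(x)-p(y)\|_2$. The paper (Claim~\ref{qnorm}) proves $\|q\|_2 = O(d\sqrt{\eps})$ by a fairly long argument: an orthogonality lemma to kill the cross terms, then a per-basis-element bound $\|H_S(x)-H_S(y)\|_2 = O(d\sqrt{\eps})$ obtained from Taylor expansions of univariate Hermite polynomials and explicit derivative identities (Claims~\ref{clm:mv-hermite} and \ref{cl:uv-hermite}, Lemma~\ref{lemma:herm-eps}). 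You instead use the fact that Hermite polynomials are eigenfunctions of the Gaussian noise operator, so that $\E[p(x)p(y)] = \sum_S \rho^{|S|}\widehat{p}(S)^2$ with $\rho = 1-\eps$, giving $\E[(p(x)-p(y))^2] = 2\sum_{|S|>0}(1-\rho^{|S|})\widehat{p}(S)^2 \leq 2d\eps\,\Var[p]$ via $1-\rho^k \leq k(1-\rho)$. This is both shorter and quantitatively a bit stronger ($\|q\|_2 = O(\sqrt{d\eps})$ rather than $O(d\sqrt{\eps})$), handles non-multilinear $p$ with no extra work, and makes transparent why the constant term drops out; it does not improve the final theorem because the $d\cdot\eps^{1/2d}$ factor is dictated by Carbery--Wright. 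The only bookkeeping you leave implicit, which the paper makes explicit, is disposing of large $\eps$ (say $\eps > 2^{-2d-1}$) trivially so that the tail bound's requirement $t > e^d$ is satisfied for your ratio $\tau/\|q\|_2$; with your ``sufficiently large absolute constant $K$'' this is a one-line remark.
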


\ignore{ ROCCO -- seems to me we are getting the stronger bound of d
already in the non-multilinear case!

Here too, for degree-$d$ multilinear PTFs we have a stronger bound
of $\GNS_\eps(f)\leq O(d\cdot \log^{1/2} (1/\eps)\cdot
\eps^{1/2d})$.

}

\subsection{Application:  agnostically learning constant-degree PTFs
in polynomial time}

Our bounds on noise sensitivity, together with machinery developed
in \cite{KOS:04, KKMS:08, KOS:08}, let us obtain the first
efficient agnostic learning algorithms for low-degree polynomial
threshold functions. In this section we state our new learning
results; details are given in \sectionref{sec:learn}.

We begin by briefly reviewing the fixed-distribution agnostic learning framework that has been studied in several recent works,
see e.g. \cite{KKMS:08, KOS:08, BOW:08, GKK:08, KMV:08, SSS:09}. Let $\D_X$ be a (fixed, known) distribution over an example
space $X$ such as the uniform distribution over $\{-1,1\}^n$ or the standard multivariate Gaussian distribution ${\cal
N}(0,I_n)$ over $\R^n.$ Let $\calC$ denote a class of Boolean functions, such as the class of all degree-$d$ PTFs.  An
algorithm $A$ is said to be an \emph{agnostic learning algorithm for $\calC$ under distribution $\D_X$} if it has the following
property:  Let $\D$ be any distribution over $X \times \{-1,1\}$ such that the marginal of $\D$ over $X$ is $\D_X.$  Then if
$A$ is run on a sample of labeled examples drawn independently from $\D$, with high probability $A$ outputs a hypothesis $h: X
\to \{-1,1\}$ such that $\Pr_{(x,y) \sim \D}[h(x) \neq y] \leq \opt + \eps$, where $\opt = \min_{f \in \calC} \Pr_{(x,y) \sim
\D}[f(x) \neq y].$   In words, $A$'s hypothesis is nearly as accurate as the best hypothesis in $\calC.$

Kalai et al. \cite{KKMS:08} gave an $L_1$ polynomial regression
algorithm and showed that it can be used for agnostic learning. More
precisely, they showed that for a class $\calC$ of functions and a
distribution $\D$, if every function in $\calC$ has a low-degree
polynomial approximator (in the $L_2$ norm) under the marginal
distribution $\D_X$, then the $L_1$ polynomial regression algorithm
is an efficient agnostic learning algorithm for $\calC$ under
$\D_X.$  They used this $L_1$ polynomial regression algorithm
together with the existence of low-degree polynomial approximators
for halfspaces (under the uniform distribution on $\{-1,1\}^n$ and
the standard Gaussian distribution $\calN(0,I_n)$ on $\R^n$) to
obtain $n^{O(1/\eps^4)}$-time agnostic learning algorithms for
halfspaces under these distributions.

Using ingredients from \cite{KOS:04}, one can easily convert upper
bounds on Boolean noise sensitivity (such as
\theoremref{thm:boolns}) into results asserting the existence of
low-degree $L_2$-norm polynomial approximators under the uniform
distribution on $\{-1,1\}^n.$  We thus obtain the following agnostic
learning result (a more detailed proof is given in \sectionref{sec:learn}):

\begin{theorem}
\label{thm:learnuniform} The class of degree-$d$ PTFs is agnostically learnable under the uniform distribution on $\{-1,1\}^n$
in time \[ n^{2^{O(d^2)}(\log 1/\eps)^{4d+2}  / \eps^{8d+4}}.\] For $d \leq 4$, this bound can be improved to
$n^{O(1/\eps^{2^{d+1}})}.$
\end{theorem}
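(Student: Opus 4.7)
The plan is to invoke the $L_1$ polynomial regression algorithm of Kalai et al.\ \cite{KKMS:08}: for any concept class $\calC$ and distribution $\calD_X$, if every $f \in \calC$ admits a degree-$k$ polynomial $p$ with $\E_{\calD_X}[(f-p)^2] \le \eps^2$, then running $L_1$-regression over the basis of monomials of degree at most $k$ produces (in time $\mathrm{poly}(n^k,1/\eps)$) a hypothesis achieving error $\opt + O(\eps)$. So the task reduces to exhibiting, for every degree-$d$ PTF $f \isafunc$, a low-degree multilinear polynomial approximating $f$ in $L_2$, and reading off the resulting $k$.

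Next I would convert the noise sensitivity bounds of \theoremref{thm:boolns} into such $L_2$ approximators via the standard observation from \cite{KOS:04}. Writing the Fourier expansion of the noisy inner product, one has
\[
\NS_{\eps'}(f) \;=\; \tfrac{1}{2}\sum_{S\subseteq [n]} \wh{f}(S)^2\bigl(1-(1-2\eps')^{|S|}\bigr),
\]
and since $1-(1-2\eps')^{|S|} \ge 1/2$ whenever $|S| \ge 1/(2\eps')$, this yields
\[
\sum_{|S|>1/(2\eps')} \wh{f}(S)^2 \;\le\; 4\,\NS_{\eps'}(f).
\]
Thus the truncation of $f$ to Fourier levels at most $k := \lceil 1/(2\eps')\rceil$ has squared $L_2$-error at most $4\NS_{\eps'}(f)$, and it suffices to choose $\eps'$ so that $4\NS_{\eps'}(f) \le \eps^2$.

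Finally I would plug in the bounds of \theoremref{thm:boolns}. Using the first bound, the inequality $4\cdot 2^{O(d)}\,(\eps')^{1/(4d+2)}\log(1/\eps')\le \eps^2$ is satisfied by taking
\[
\eps' \;=\; \frac{\eps^{8d+4}}{2^{O(d^2)}\,(\log(1/\eps))^{4d+2}},
\]
which gives $k = 2^{O(d^2)}(\log 1/\eps)^{4d+2}/\eps^{8d+4}$ and hence running time $n^k$ as claimed. For $d \le 4$, plugging the sharper bound $\NS_{\eps'}(f) \le O((\eps')^{1/2^d})$ into the same procedure yields $\eps' = \Theta(\eps^{2\cdot 2^d})$ and so $k = O(1/\eps^{2^{d+1}})$, which is the improved exponent stated in the theorem.

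There is no genuine obstacle here beyond the noise sensitivity bounds already established in \theoremref{thm:boolns}; the remaining work is the bookkeeping in the last paragraph, where the only subtlety is that $L_1$-regression requires $L_2$ error $\eps$ (equivalently squared error $\eps^2$), which is what is responsible for the doubling $4d+2 \mapsto 8d+4$ and $2^d \mapsto 2^{d+1}$ in the exponents of $1/\eps$.
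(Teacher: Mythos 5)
Your proposal is correct and follows essentially the same route as the paper: combine the $L_1$ polynomial regression guarantee of \cite{KKMS:08} with the conversion of Boolean noise sensitivity bounds into low-degree $L_2$-approximators (stated in the paper as Fact~\ref{fact:KOS04}, which your Fourier-truncation computation simply re-derives), and then plug in the two bounds of \theoremref{thm:boolns}. Your parameter bookkeeping, including the doubling of the exponents of $1/\eps$ coming from the requirement of squared error $\eps^2$, matches the calculation the paper leaves implicit.
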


Similarly, using ingredients from \cite{KOS:08}, one can easily
convert upper bounds on Gaussian noise sensitivity (such as
\theoremref{thm:gaussns}) into results asserting the existence of
low-degree $L_2$-norm polynomial approximators under $\calN(0,I_n).$
This lets us obtain

\begin{theorem}
\label{thm:learngaussian} The class of degree-$d$ PTFs is agnostically learnable under any $n$-dimensional Gaussian
distribution in time $n^{(d/\eps)^{O(d)}}$.
\end{theorem}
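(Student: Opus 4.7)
The plan is to combine the $L_1$ polynomial regression algorithm of \cite{KKMS:08} with our Gaussian noise-sensitivity bound \theoremref{thm:gaussns} via the standard noise-sensitivity-to-Hermite-approximation reduction from \cite{KOS:08}. Recall that the KKMS machinery reduces agnostic learning to the existence of low-degree $L_2$ approximators: if every $f\in\calC$ admits a polynomial $p$ of degree $k$ with $\E_{x\sim\calN(0,I_n)}[(f(x)-p(x))^2]\leq \eps^2$, then $\calC$ is agnostically learnable under $\calN(0,I_n)$ in time $\poly(n^k,1/\eps)$.  The optimal degree-$k$ $L_2$ approximator to $f$ is simply its Hermite truncation, and Cauchy--Schwarz converts $L_2$-error $\eps$ into the $L_1$-error $\eps$ required by the KKMS analysis.

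It therefore suffices to show that every degree-$d$ PTF $f$ has Hermite tail weight
\[
W^{>k}(f) \,:=\, \sum_{|S|>k}\hat f(S)^2 \,\leq\, \eps^2 \qquad \text{for} \qquad k = (d/\eps)^{O(d)}.
\]
Expanding in the multivariate Hermite basis, the Ornstein--Uhlenbeck identity gives $2\,\GNS_\delta(f) = \sum_S \bigl[1-(1-\delta)^{|S|}\bigr]\hat f(S)^2$, so $W^{>k}(f) \leq 4\,\GNS_\delta(f)$ whenever $k \geq C/\delta$ for a suitable constant $C$, because then $1-(1-\delta)^{|S|} \geq 1/2$ for every $|S|>k$.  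Applying \theoremref{thm:gaussns}, which gives $\GNS_\delta(f) \leq O(d\cdot\log^{1/2}(1/\delta)\cdot \delta^{1/2d})$, and choosing $\delta$ small enough that this is at most $\eps^2/4$, yields $\delta = (\eps/d)^{\Theta(d)}$ after absorbing the logarithmic factor.  Taking $k = \Theta(1/\delta)$ then gives $k = (d/\eps)^{O(d)}$.

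Plugging $k = (d/\eps)^{O(d)}$ into the KKMS runtime bound produces the claimed $n^{(d/\eps)^{O(d)}}$-time agnostic learner.  The conceptual heavy lifting is entirely contained in \theoremref{thm:gaussns}; the steps above are the standard reduction, and the only point that needs care is the bookkeeping to confirm that the $\sqrt{\log(1/\delta)}$ factor and the polynomial dependence on $d$ in the noise-sensitivity bound are absorbed into the $(d/\eps)^{O(d)}$ exponent on $n$.  I anticipate no serious obstacle beyond what has already been surmounted in proving \theoremref{thm:gaussns} itself.
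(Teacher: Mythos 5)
Your proposal is correct and takes essentially the same route as the paper: the paper also combines the KKMS $L_1$ regression theorem (\theoremref{thm:l1}) with the Gaussian noise-sensitivity bound \theoremref{thm:gaussns}, invoking Fact~\ref{fact:KOS08} — whose content is exactly the Ornstein--Uhlenbeck/Hermite-tail computation you carry out inline — to obtain a degree-$(d/\eps)^{O(d)}$ polynomial with $L_2$-error $\eps^2$, and the parameter bookkeeping matches yours. The one point you leave out is that the theorem asserts learnability under \emph{any} $n$-dimensional Gaussian, not just $\calN(0,I_n)$; the paper handles this in one line by observing that its bounds hold for all (not merely multilinear) degree-$d$ PTFs, so the affine change of variables described in Appendix~C of \cite{KOS:08} reduces the general Gaussian case to the standard one.
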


For $\eps$ constant, these results are the first polynomial-time agnostic learning algorithms for constant-degree PTFs.

\subsection{Other applications}

The results and approaches of this paper have found other recent
applications beyond the agnostic learning results presented above;
we describe two of these below.

Gopalan and Servedio \cite{GS:09} have combined the average
sensitivity bound given by \theoremref{thm:boolas} with techniques
from \cite{LMN:93} to give the first sub-exponential time algorithms
for learning $AC^0$ circuits augmented with a small (but
super-constant) number of arbitrary threshold gates, i.e. gates that
compute arbitrary LTFs which may have weights of any magnitude.
(Previous work using different techniques \cite{JKS:02} could only
handle $AC^0$ circuits augmented with majority gates.)

In other recent work Diakonikolas et al. \cite{DSTW:09} have refined
the approach used to prove \theoremref{thm:boolas} to establish a
``regularity lemma'' for low-degree polynomial threshold functions.
Roughly speaking, this lemma says that any degree-$d$ PTF can be
decomposed into a constant number of subfunctions, almost all of
which are ``regular'' degree-$d$ PTFs.  \cite{DSTW:09} apply this
regularity lemma to extend the positive results on the existence of
low-weight approximators for LTFs, proved in \cite{Servedio:07cc},
to low-degree PTFs.

\paragraph{Related work.}  Simultaneously and independently of this work,
Harsha et al. \cite{HKM:09} have obtained very similar results on
average sensitivity, noise sensitivity, and agnostic learning of
low-degree PTFs using techniques very similar to ours.

\subsection{Techniques}
In this section we give a high-level overview of how Theorems~\ref{thm:boolas},~\ref{thm:gaussas} and ~\ref{thm:gaussns} are
proved.  (As mentioned earlier, \theoremref{thm:boolas2} is proved using completely different techniques; see
\sectionref{sec:booleanas2}.)  The arguments are simpler for the Gaussian setting so we begin with these.

\subsubsection{The Gaussian case}

We sketch the argument for the Gaussian noise sensitivity bound
\theoremref{thm:gaussns}; the Gaussian average sensitivity bound
\theoremref{thm:gaussas}, follows along similar lines

Let $f=\sign(p)$ where $p: \R^n \to \R$ is a degree-$d$ polynomial.  The Gaussian noise sensitivity $\GNS_\eps(f)$ of $f$ is
equal to $\Pr_{x,y}[f(x) \neq f(y)]$ where $x$ is distributed according to $\calN(0,I_n)$ and $y$ is an ``$\eps$-perturbed''
version of $x$ (see
\sectionref{sec:gaussianprelim} for the precise definition).  Intuitively,
the event $f(x) \neq f(y)$ can only take place if either
\begin{itemize}

\item $x$ lies close to the boundary of $p$, i.e. $|p(x)|$ is
``small'', or

\item $|p(x)-p(y)|$ is ``large''.

\end{itemize}

We use an anti-concentration result for polynomials in Gaussian
random variables, due to Carbery and Wright \cite{CW:01}, to show
that $|p(x)|$ is ``small'' only with low probability.  For the
second bullet, it turns out that $p(x)-p(y)$ can be expressed as a
low-degree polynomial in independent Gaussian random variables, and
thus we can apply tail bounds for this setting \cite{Janson:97} to
show that $|p(x)-p(y)|$ is ``large'' only with low probability. We
can thus argue that $\Pr_{x,y}[f(x) \neq f(y)]$ is low, and bound
the Gaussian noise sensitivity of $f.$  (We note that this
high-level explanation glosses over some significant technical
issues.  In particular, since we are dealing with general degree-$d$
PTFs which may not be multilinear, it is nontrivial to establish the
conditions that allow us to apply the tail bound; see the proof of
Claim~\ref{qnorm} in Section~\ref{sec:qnorm}.)

\subsubsection{The Boolean case}

One advantage of working over the Boolean domain $\{-1,1\}^n$ is
that without loss of generality we may consider only
\emph{multilinear} PTFs, where $f=\sign(p(x))$ for $p$ a multilinear
polynomial.  However, this advantage is offset by the fact that the
uniform distribution on $\{-1,1\}^n$ is less symmetric than the
Gaussian distribution; for example, every degree-1 PTF under the
Gaussian distribution $\N^n$ is equivalent simply to $\sign(x_1 -
\theta)$, but this is of course not true for degree-1 PTFs over
$\{-1,1\}^n$.  Our upper bound on Boolean average sensitivity uses
ideas from the Gaussian setting but also requires significant
additional ingredients.

An important notion in the Boolean case is that of a ``regular'' PTF; this is a PTF $f=\sign(p)$ where every variable \emph{in
the polynomial $p$} has low influence. (See \sectionref{sec:prel} for a definition of the influence of a variable on a
real-valued function; note that the definition from \sectionref{sec:asns} applies only for Boolean-valued functions.)  If $f$
is a regular PTF, then the ``invariance principle'' of \cite{MOO:05} tells us that $p(x)$ (where $x$ is uniform from
$\{-1,1\}^n$) behaves much like $p(\calG)$ (where $\calG$ is drawn from $\calN(0,I_n)$), and essentially the arguments from the
Gaussian case can be used.

It remains to handle the case where $f$ is not a regular PTF, i.e.
some variable has high influence in $p$.  To accomplish this, we
generalize the notion of the ``critical-index'' of a halfspace (see
\cite{Servedio:07cc,DGJ+09}) to apply to PTFs. We show that a
carefully chosen random restriction (one which fixes only the
variables up to the critical index -- very roughly speaking, only the
highest-influence variables -- and leaves the other ones free) has
non-negligible probability of causing $f$ to collapse down to a
regular PTF.  This lets us give a recursive bound on average
sensitivity which ends up being not much worse than the bound that can
be obtained for the regular case; see \sectionref{sec:booloverview}
for a detailed explanation of the recursive argument.

\subsection{Organization}
Formal definitions of average sensitivity and noise sensitivity
(especially in the Gaussian case), and tail bounds and
anticoncentration results for low degree
polynomials are presented in \sectionref{sec:prel}.
In \sectionref{sec:gaussianas}, we show an upper bound on the Gaussian
average sensitivity of PTFs (\theoremref{thm:gaussas}).  Upper bounds
on Gaussian noise sensitivity (\theoremref{thm:gaussns}) are obtained in the section that follows
(\sectionref{sec:gaussianns}).

The main result of the paper -- a bound on the Boolean average
sensitivity (\theoremref{thm:boolas}) -- is proved in
\sectionref{sec:booleanas}.  In \sectionref{sec:booleanas2}, an alternate bound for Boolean
average sensitivity that is better for degrees $d \leq 4$
(\theoremref{thm:boolas2}) is shown.  This is followed by a
reduction from Boolean average sensitivity bounds to corresponding
noise sensitivity bounds (\theoremref{thm:reduction}) in
\sectionref{sec:booleanns}.  We present the applications of these upper bounds to agnostic learning of PTFs in
\sectionref{sec:learn}.  Section~\ref{sec:discussion} concludes by proposing a
direction for future work towards the resolution of the Gotsman--Linial conjecture.

\section{Definitions and Background}
\label{sec:prel}
\label{sec:gaussianprelim}

\subsection{Basic Definitions} \label{sec:basicdef}

\ignore{
For a degree-$d$ polynomial $p : \bits^n \to \R$ we may write:
\[
p(x) = \sum_{|S| \leq d, S \subseteq [n]} \widehat{p}(S)x_S \quad
\text{~or simply~}\quad p(x) = \sum_{|S| \leq d} \widehat{p}(S)x_S.
\]

We will sometimes write $\|p\|$ to denote the $l_2$-norm of $p$,
i.e. $\|p\| \eqdef \sqrt{\sum_{S \subseteq [n]} \widehat{p}(S)^2}$.
}

In this subsection we record the basic notation and definitions used
throughout the paper. For $n \in \mathds{N}$, we denote by $[n]$ the
set $\{ 1, 2, \ldots, n\}$. We write $\N$ to denote the standard
univariate Gaussian distribution $\N(0,1).$

For a degree-$d$ polynomial $p : X \to \R$ we denote by $\|p\|_2$
its $l_2$ norm, $\|p\|_2 = \E_{x}[p(x)^2]^{1/2}$, where the intended
distribution over $x \in \R^n$ (which will always be either uniform
over $\{-1,1\}^n$, or the $\N^n$ distribution) will always be clear
from context.  We note that for multilinear $p$ the two notions are
always equal (see e.g. Proposition 3.5 of \cite{MOO:05}).

We now proceed to define the notion of influence for real-valued functions in a product probability space. Throughout this
paper we consider either the uniform distribution on the hypercube $\{\pm 1\}^n$ or the standard $n$-dimensional Gaussian
distribution in $\R^n$. However, for the sake of generality, we adopt this more general setting.

Let $(\Omega_1,\mu_1),\ldots,(\Omega_n,\mu_n)$ be probability spaces and let $(\Omega = \otimes_{i=1}^n \Omega_i, \mu =
\otimes_{i=1}^n \mu_i)$ denote the corresponding product space. Let $f : \Omega \to \R $ be any square integrable function on
$(\Omega, \mu)$, i.e. $f \in L^2(\Omega, \mu)$. The influence of the $i$th coordinate on $f$ \cite{MOO:05} is
\[ \Inf_i^{\mu}(f) \eqdef \E_\mu[\Var_{\mu_i}[f]]\] and the total influence
of $f$ is $\Inf^{\mu}(f)\eqdef\sum_{i=1}^n\Inf^{\mu}_i(f)$.

For a function $f : \bits^n \to \R$ over the Boolean hypercube endowed with the uniform distribution, the influence of variable
$i$ on $f$ can be expressed in terms of the Fourier coefficients of $f$ as,
$$\Inf_i(f) = \sum_{S\ni i} \widehat{f}(S)^2,$$
and as mentioned in the introduction it is easily seen that $\AS(f)
= \Inf(f)$ for Boolean-valued functions $f\isafunc$.

In this paper we are concerned with variable influences for
functions defined over $\bn$ under the uniform distribution, and
over $\R^n$ under $\N(0,I_n)$; we shall adopt the convention that
$\Inf_i(f)$ denotes the former and $\GI_i(f)$ the latter. We also
denote by $\GAS(f) = \sum_{i \in [n]} \GI_i(f)$ the Gaussian average
sensitivity.

Note that for a function $f : \R^n \to \{-1,1\}$, the Gaussian influence $\GI_i(f)$ can be equivalently written as: $ \GI_i(f)
= 2 \Pr_{x, x^{i}}[f(x) \neq f(x^{i})]$, where $x \sim \N^n$ and $x^{i}$ is obtained by replacing the $i$\th coordinate of $x$
by an independent random sample from $\N$.

We proceed to define the notion of noise sensitivity for
Boolean-valued functions in $(\R^n, \N^n)$. For the domain
$\{-1,1\}^n$, the notion has been defined already in the
introduction. (We remark that ``noise sensitivity'' can be defined
in a much more general setting and also for real-valued functions;
but such generalizations are not needed here.)

\begin{definition}[Gaussian Noise Sensitivity]  Given $f: \R^n \to \{-1,1\}$,
the ``Gaussian noise sensitivity of $f$ at noise rate $\eps \in
[0,1]$'' is
\[
\GNS_\eps(f) \eqdef        \Pr_{x,z}[f(x) \neq f(y)]; \]
where $x \sim \N^n$ and $y \eqdef (1-\eps)\,x + \sqrt{2\eps - \eps^2}\,z$ for an
independent Gaussian {\it noise} vector $z \sim \N^n$.
\end{definition}

\noindent {\bf Fourier and Hermite Analysis.} We assume familiarity
with the basics of Fourier analysis over the Boolean hypercube
$\{-1,1\}^n$.  We will also require similar basics of Hermite
analysis over the space $\R^n$ equipped with the standard
$n$-dimensional Gaussian distribution $\N^n$; a brief review is
provided in Appendix~\ref{ap:hermite}.

\subsection{Probabilistic Facts} \label{ssec:prob}

In this subsection, we record the basic probabilistic tools we use in our proofs.

We first recall the following well-known consequence of hypercontractivity (see e.g. Lecture 16 of~\cite{ODonnell:07} for the
boolean setting and~\cite{Bog:98} for the Gaussian setting):

\begin{theorem} \label{cor:deg-d-hyper}
Let $p: X \to \R$ be a degree-$d$ polynomial, where $X$ is either $\bn$ under the uniform distribution or $\R^n$ under $\N^n$,
and fix $q>2$. Then $$||p||^2_q \leq (q-1)^d ||p||^2_2.$$
\end{theorem}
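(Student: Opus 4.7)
The plan is to derive this inequality as a direct consequence of the Bonami--Beckner (in the Boolean case) / Nelson (in the Gaussian case) hypercontractive inequality applied to an appropriately inverted polynomial. Both of these inequalities state that for the noise operator $T_\rho$ acting on $L^2(X,\mu)$ (where $T_\rho$ multiplies the Fourier/Hermite coefficient of a level-$k$ basis function by $\rho^k$), and for the specific choice $\rho = 1/\sqrt{q-1}$ with $q>2$, we have $\|T_\rho g\|_q \leq \|g\|_2$ for every $g \in L^2(X,\mu)$.

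Given the degree-$d$ polynomial $p$, I would define $g$ to be the (formal) image of $p$ under $T_\rho^{-1}$, that is the polynomial obtained by multiplying the Fourier/Hermite coefficient of every level-$k$ basis function ($k \leq d$) by $\rho^{-k}$. Since $p$ has degree at most $d$, this is a well-defined polynomial of degree at most $d$, and by construction $T_\rho g = p$. Applying hypercontractivity then yields
\[
\|p\|_q = \|T_\rho g\|_q \leq \|g\|_2.
\]

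It remains to bound $\|g\|_2^2$ in terms of $\|p\|_2^2$. By orthonormality of the Fourier basis (resp.\ Hermite basis) under the uniform distribution on $\bn$ (resp.\ $\mathcal{N}^n$), Parseval gives
\[
\|g\|_2^2 \;=\; \sum_{k \leq d} \rho^{-2k}\!\!\sum_{|S|=k} \widehat{p}(S)^2 \;\leq\; \rho^{-2d}\sum_{k \leq d} \sum_{|S|=k} \widehat{p}(S)^2 \;=\; (q-1)^d \|p\|_2^2,
\]
using $\rho^{-2} = q-1$ and the fact that $\rho^{-2k} \leq \rho^{-2d}$ for $k \leq d$. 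Squaring the previous display and combining gives $\|p\|_q^2 \leq (q-1)^d \|p\|_2^2$, as desired.

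The only potential obstacle is invoking the two hypercontractive inequalities with matching statements across the Boolean and Gaussian settings; this is standard material (\cite{ODonnell:07} for the Boolean case and \cite{Bog:98} for the Gaussian case), and the paper already cites precisely these references, so I would simply quote them. Everything else is a one-line Parseval computation.
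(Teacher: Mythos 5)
Your proposal is correct and matches the intended argument: the paper does not prove this statement itself but cites exactly the hypercontractivity references (\cite{ODonnell:07} for the Boolean cube, \cite{Bog:98} for the Gaussian case), and your derivation --- writing $p = T_\rho g$ with $\rho = 1/\sqrt{q-1}$, applying $\|T_\rho g\|_q \leq \|g\|_2$, and bounding $\|g\|_2^2 \leq \rho^{-2d}\|p\|_2^2$ by Parseval --- is the standard proof behind those citations. No gaps; the only point worth stating explicitly is that in the Gaussian case every degree-$d$ polynomial lies in the span of Hermite polynomials $H_S$ with $|S|\leq d$, so the Ornstein--Uhlenbeck operator acts on its coefficients exactly as you describe.
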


We will need a concentration bound for low-degree polynomials over independent random signs or standard Gaussians. It can be
proved (in both cases) using Markov's inequality and hypercontractivity, see e.g. \cite{Janson:97,ODonnell:07,aushas09}.

\begin{theorem}[``degree-$d$ Chernoff bound''] \label{thm:deg-d-chernoff}
Let $p(x)$ be a degree-$d$ polynomial. Let $x$ be drawn either from the uniform distribution in $\bn$ or from $\N^n$. For any
$t > e^d$, we have \[ \Pr_x[|p(x)| \geq t \|p\|_2 ]\leq \exp(-\Omega(t^{2/d})). \]
\end{theorem}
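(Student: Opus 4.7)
The plan is a textbook ``moment-method'' derivation: combine the hypercontractive bound on $\|p\|_q$ from \theoremref{cor:deg-d-hyper} with Markov's inequality applied to the random variable $|p(x)|^q$, and then optimize the choice of $q$. The argument is identical in the two input models since \theoremref{cor:deg-d-hyper} is already stated uniformly for both the uniform distribution on $\bn$ and the Gaussian $\N^n$.

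First I would write, for any $q>2$,
\[
\Prx_x\!\left[|p(x)|\ge t\|p\|_2\right] \;=\; \Prx_x\!\left[|p(x)|^q\ge t^q\|p\|_2^q\right] \;\le\; \frac{\E[|p(x)|^q]}{t^q\|p\|_2^q} \;=\; \frac{\|p\|_q^q}{t^q\|p\|_2^q},
\]
which is just Markov's inequality applied to $|p|^q$. Plugging in the hypercontractive bound $\|p\|_q \le (q-1)^{d/2}\|p\|_2$ from \theoremref{cor:deg-d-hyper} gives
\[
\Prx_x\!\left[|p(x)|\ge t\|p\|_2\right] \;\le\; \left(\frac{(q-1)^{d/2}}{t}\right)^{\!q}.
\]

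Next I would optimize $q$ so as to balance the base $(q-1)^{d/2}/t$ against the exponent $q$. The clean choice is $q-1 := (t/e)^{2/d}$, which makes $(q-1)^{d/2}/t = 1/e$ and turns the right-hand side into $e^{-q} = \exp\!\bigl(-(t/e)^{2/d}-1\bigr) = \exp\!\bigl(-\Omega(t^{2/d})\bigr)$, as desired. The hypothesis $t>e^d$ is exactly what is needed to ensure that this choice satisfies $q>2$ (indeed $(t/e)^{2/d} > e^{2-2/d}\ge 1$ for every $d\ge 1$), so that the hypercontractive inequality is legitimately applicable.

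There is really no obstacle here beyond bookkeeping; the only spot that demands a moment of care is checking the $q>2$ side condition against the stated range $t>e^d$, and verifying that the $\Omega(\cdot)$ in the exponent hides only an absolute constant (namely, the constant is $e^{-2/d}\ge e^{-2}$, uniformly in $d$). Everything else is algebraic, and the proof fits comfortably in a few lines.
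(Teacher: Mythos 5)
Your proof is correct: the Markov-plus-hypercontractivity argument with the choice $q-1=(t/e)^{2/d}$ is exactly the standard derivation, and your check that $t>e^d$ guarantees $q>2$ (so that \theoremref{cor:deg-d-hyper} applies) and that the hidden constant $e^{-2/d}\ge e^{-2}$ is absolute are the only delicate points, both handled properly. The paper does not prove this theorem but explicitly attributes it to Markov's inequality and hypercontractivity (citing \cite{Janson:97,ODonnell:07,aushas09}), which is precisely the route you take.
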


\ignore{ \rnote{This is the version of the bound
    that Ryan stated in his email. With suitable modifications it even
    holds for non-multilinear polynomials.  He said it is in the
    Janson book, maybe also in the LeDoux paper.  He didn't mention
    the requirement that $t \geq e^d$, but I threw it in anyway -- I
    assume any time we use the theorem $t$ will be at least that large
    (otherwise the bound is trivial).} \inote{This bound holds as is
    even for the case that the constant term is not zero.  (The proof
    does not use any assumption about this constant.)  Recall we had
    the same issue in the Boolean setting.} }

The second fact is a powerful anti-concentration bound for
low-degree polynomials over Gaussian random variables.  (We note
that this result does not hold in the Boolean setting.)

\begin{theorem}[\cite{CW:01}]
\label{thm:carberywright} Let $p: \R^n \to \R$ be a degree-$d$ polynomial. Then for all $\eps>0$, we have
\[ \Pr_{x \sim \N^n}[|p(x)|\leq\eps \|p\|_2 ]\leq O(d\eps^{1/d}) .\]
\end{theorem}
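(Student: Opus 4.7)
The plan is to prove the anti-concentration bound by induction on the degree $d$, reducing the multivariate statement to a univariate anti-concentration fact about polynomials. By homogeneity we normalize so that $\|p\|_2 = 1$ and aim to show $\Pr_{x \sim \N^n}[|p(x)| \leq \eps] \leq O(d\eps^{1/d})$. The workhorse is a classical univariate observation: any degree-$d$ polynomial $q(t) = a_d \prod_{i=1}^d (t - r_i)$ factors over $\mathbb{C}$, so $|q(t)| \leq \delta$ forces some factor $|t - r_i| \leq (\delta/|a_d|)^{1/d}$. Hence the Lebesgue measure of $\{t : |q(t)| \leq \delta\}$ is at most $2d(\delta/|a_d|)^{1/d}$, and since the Gaussian density on $\R$ is bounded by $1/\sqrt{2\pi}$, the $\N$-measure of this set is $O(d(\delta/|a_d|)^{1/d})$.

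The base case $d=1$ of the main statement falls directly out of the explicit Gaussian distribution of $p(x) = b + \langle a, x \rangle$, which satisfies $b^2 + \|a\|^2 = 1$: when $\|a\|^2 \geq 1/2$, bounding the Gaussian density gives $\Pr[|p| \leq \eps] = O(\eps)$, and when $\|a\|^2 < 1/2$ one has $|b| \geq 1/\sqrt{2}$ and a standard Gaussian tail bound makes the probability super-polynomially small in $\|a\|$. For the inductive step, I would exploit the rotational invariance of $\N^n$ by choosing a direction $\theta \in S^{n-1}$ and writing $x = y + t\theta$ with $t \sim \N$ and $y \sim \N^{n-1}$ independent. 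Restricted to the line through $y$ in direction $\theta$, $p$ becomes a univariate degree-$d$ polynomial in $t$ whose leading coefficient is $p_{=d}(\theta)$, the value at $\theta$ of the top-degree homogeneous part of $p$. Conditioned on $|p_{=d}(\theta)| \geq \eta$, the univariate lemma above yields $\Pr_t[|p(y + t\theta)| \leq \eps \mid \theta, y] \leq O(d(\eps/\eta)^{1/d})$; averaging over $\theta$ and combining with a bound on $\Pr_\theta[|p_{=d}(\theta)| < \eta]$, the two error contributions are balanced by tuning $\eta$.

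The main obstacle is establishing the needed anti-concentration estimate for $p_{=d}(\theta)$ on the sphere (or equivalently for a Gaussian direction): this is itself a degree-$d$ polynomial anti-concentration question, and closing the induction cleanly requires that the recursive bound not blow up with $d$. Achieving this is the technical heart of the Carbery--Wright argument, and the natural path uses $L^q$-to-$L^2$ norm comparisons (Theorem~\ref{cor:deg-d-hyper}) to get upper tail control on $p_{=d}$, which is then converted via a reverse-H\"older-type argument into the sharp lower-tail bound. The delicate balancing of constants across the degree induction is what produces the clean $d\eps^{1/d}$ dependence rather than a cruder quantity that degrades exponentially with $d$; getting this right, and in particular ensuring that the leading coefficient bound feeds back into the induction without losing a factor of $d$ at each level, is the subtle point.
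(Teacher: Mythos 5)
This statement is not proved in the paper at all: it is quoted as an external result of Carbery and Wright \cite{CW:01}, so there is no internal argument to compare against, and any self-contained proof you give must actually be complete. Your proposal is a sketch whose central step is missing, and you say so yourself: the anti-concentration of the top-degree part $p_{=d}(\theta)$ over a random direction, and the bookkeeping needed so that the degree induction yields $O(d\,\eps^{1/d})$ rather than a bound degrading exponentially in $d$, are exactly the content of the Carbery--Wright theorem. Deferring them as ``the technical heart'' leaves the proof attempt without its main ingredient.

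Beyond the acknowledged gap, the specific reduction you set up would fail as stated. Your univariate lemma controls the sublevel set of $t \mapsto p(y+t\theta)$ only in terms of its \emph{leading} coefficient $p_{=d}(\theta)$, but nothing forces the degree-$d$ homogeneous part of $p$ to carry a constant fraction of $\|p\|_2$. If, say, $p(x) = \sqrt{1-\delta^2}\,x_1 + \delta\, x_1^d$ with $\delta$ arbitrarily small, then $|p_{=d}(\theta)|$ is uniformly tiny, the event $|p_{=d}(\theta)| \geq \eta$ has negligible probability for any useful $\eta$, and your two error terms cannot be balanced; yet the theorem's conclusion for such $p$ is essentially the $d=1$ statement and is certainly true. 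To rescue the scheme you would either need a univariate anti-concentration bound normalized by a norm of the restricted polynomial (e.g.\ via a Remez-type inequality on an interval) rather than by its top coefficient, or an explicit case analysis splitting on whether $\|p_{=d}\|_2$ is large or small, with the small case fed back into the inductive hypothesis after controlling the perturbation by the top part (via Theorem~\ref{thm:deg-d-chernoff}-style tail bounds). Neither of these is carried out, and tracking how the constants accumulate through that recursion is precisely where the clean $d\eps^{1/d}$ dependence has to be earned; as written, the proposal does not establish the theorem.
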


We also make essential use of a (weak) anti-concentration property
of low-degree polynomials over the hypercube $\{-1,1\}^n$:

\begin{theorem} [\cite{DFKO06, aushas09}]
 \label{thm:aushas}
Let $p:\bn\to\R$ be a degree-$d$ polynomial with $\Var [p] \equiv \sum_{0<|S|\leq d} \widehat{p}(S)^2 =1$ and $\E[p] =
\widehat{p}(\emptyset) = 0.$ Then we have
$$\Pr[p(x)>1/2^{O(d)}]>1/2^{O(d)} \quad \quad \text{and hence}\quad \quad
\Pr[|p(x)| \geq 1/2^{O(d)}] > 1/2^{O(d)}.$$

\ignore{I put the following version on ice:
 Let
$p:\bn\to\R$ be a degree-$d$ multilinear polynomial with $\Var [p]
\equiv \sum_{0<|S|\leq d} \widehat{p}(S)^2 =1$. We have
\[ \Pr [p(x)>\nfrac{1}{2^{O(d)}}] > 1/2^{O(d)} .\]
}
\end{theorem}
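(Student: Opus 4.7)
The plan is to derive the one-sided bound from a three-step chain: hypercontractivity gives a good upper bound on higher moments of $p$, which (via interpolation) converts into a lower bound on $\|p\|_1$, which (via the mean-zero hypothesis) becomes a lower bound on $\E[p^+]$, and finally Paley--Zygmund applied to $p^+$ yields the desired one-sided anti-concentration.

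First I would apply \theoremref{cor:deg-d-hyper} with $q=4$ to obtain $\|p\|_4 \leq 3^{d/2}\|p\|_2 = 3^{d/2}$, so $\E[p^4] \leq 9^d$. Next, by the standard log-convexity of $L^q$ norms (interpolation between $L^1$ and $L^4$ with $\theta = 1/3$), we have $\|p\|_2 \leq \|p\|_1^{1/3}\|p\|_4^{2/3}$, which combined with $\|p\|_2 = 1$ gives
\[
\|p\|_1 \;\geq\; \frac{\|p\|_2^3}{\|p\|_4^2} \;\geq\; \frac{1}{3^d} \;=\; 2^{-O(d)}.
\]
Writing $p^+ = \max(p,0)$ and $p^- = \max(-p,0)$, the hypothesis $\E[p]=0$ forces $\E[p^+]=\E[p^-]=\tfrac{1}{2}\E[|p|]$, so $\E[p^+] \geq 2^{-O(d)}$.

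I would finish by applying the Paley--Zygmund inequality to the nonnegative random variable $p^+$. Since $\E[(p^+)^2] \leq \E[p^2] = 1$, Paley--Zygmund with threshold $\theta = 1/2$ gives
\[
\Pr\!\left[p^+ > \tfrac{1}{2}\E[p^+]\right] \;\geq\; \tfrac{1}{4}\cdot\frac{(\E[p^+])^2}{\E[(p^+)^2]} \;\geq\; \tfrac{1}{4}\cdot(\E[p^+])^2 \;\geq\; 2^{-O(d)}.
\]
On this event, $p = p^+ > \tfrac{1}{2}\E[p^+] \geq 2^{-O(d)}$, which is exactly the one-sided conclusion; the two-sided statement is then immediate.

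The main conceptual point — and the step I expect to require the most care — is the passage from a \emph{two-sided} tail bound (which would follow immediately from Paley--Zygmund applied directly to $p^2$) to a genuinely \emph{one-sided} bound $\Pr[p > 2^{-O(d)}]$. A naive application of Paley--Zygmund to $p^2$ only controls $\Pr[|p| > c]$ and leaves open the possibility that essentially all of the mass sits on the negative side. The mean-zero hypothesis, used via $\E[p^+] = \tfrac{1}{2}\|p\|_1$, is exactly what breaks this symmetry; the rest is routine hypercontractivity plus second-moment arithmetic. No multilinearity is needed, which is consistent with the statement as written.
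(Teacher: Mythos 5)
Your proof is correct: the chain (hypercontractivity via Theorem~\ref{cor:deg-d-hyper}, log-convexity to get $\|p\|_1 \geq 3^{-d}$, the mean-zero identity $\E[p^+]=\tfrac12\E[|p|]$, and Paley--Zygmund applied to $p^+$ with $\E[(p^+)^2]\leq \E[p^2]=1$) is sound and does yield the one-sided bound $\Pr[p>2^{-O(d)}]>2^{-O(d)}$ without any multilinearity assumption. The paper itself gives no proof of this statement (it is quoted from \cite{DFKO06, aushas09}), and your argument is essentially the standard one used in those references, so there is nothing further to reconcile.
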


\ignore{
\begin{lemma}
\label{lem:slab}
Let $p(x)=\sum_{|S| \leq d} \widehat{p}(S)x_S$ be a degree-$d$
multilinear polynomial with $\sum_{0 < |S| \leq d}
\widehat{p}(S)^2=1$.  Then,
\[ \Pr[|p(x)| \geq \nfrac{1}{2^{O(d)}} ] \geq 1/2^{O(d)}\]
\end{lemma}

\begin{proof}
  This follows directly from the above theorem. We apply the theorem
  twice: for the polynomial $q = p - \widehat{p}(\emptyset)$ and for
  $-q$. It is clear that for any value of the constant
  $\widehat{p}(\emptyset)$ we get the desired result.
\end{proof}

}

The following is a restatement of the invariance principle,
specifically Theorem
$3.19$ under hypothesis \textbf{H4} in \cite{MOO:05}.
\begin{theorem}[\cite{MOO:05}]
\label{thm:invariance} Let $p(x)=\sum_{|S| \leq d} \widehat{p}(S)x_S$ be a degree-$d$ multilinear polynomial with $\sum_{0 <
|S| \leq d} \widehat{p}(S)^2=1$. Suppose each variable $i\in[n]$ has low influence $\Inf_i(p) \leq \tau$, i.e. $\sum_{S\ni i}
\widehat{p}(S)^2 \leq \tau$.  Let $x$ be drawn uniformly from $\bn$ and $\mathcal{G} \sim \N^n$. Then,
\[ \sup_{t \in \R}|\Pr[p(x)\leq t] -
\Pr[p(\mathcal{G})\leq t]| \leq O(d\tau^{1/(4d+1)}).\]
 \end{theorem}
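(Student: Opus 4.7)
The plan is a Lindeberg-style hybrid/swapping argument followed by a standard indicator-to-smooth-test-function conversion using anti-concentration on the Gaussian side. Write $S_k(y) = p(G_1,\dots,G_k,y_{k+1},\dots,y_n)$ for the $k$-th hybrid, so $S_0 = p(x)$ and $S_n = p(\mathcal{G})$. First I would prove the statement for a smooth test function $\psi : \R \to \R$ with bounded fourth derivative, namely that
\[
\bigl|\,\E[\psi(p(x))] - \E[\psi(p(\mathcal{G}))]\,\bigr| \;\leq\; C\,\|\psi^{(4)}\|_\infty \cdot 9^d\, d\,\tau.
\]
To get this, I swap coordinates one at a time. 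Since $p$ is multilinear, I write $p = x_i q_i + r_i$ with $q_i = \partial_i p$ and $r_i$ both independent of the $i$-th coordinate, and Taylor expand $\psi(r_i + x_i q_i)$ around $r_i$ to fourth order. The first three moments of a uniform $\pm 1$ variable and of a standard Gaussian agree ($0,1,0$), so the first three Taylor terms cancel in the swap and only the fourth-order remainder contributes. Bounding the remainder gives an error of order $\|\psi^{(4)}\|_\infty\,\E[q_i^4]$, and hypercontractivity (Theorem~\ref{cor:deg-d-hyper}) applied to the degree-$(d-1)$ polynomial $q_i$ gives $\E[q_i^4] \leq 9^{d-1}\,\Inf_i(p)^2 \leq 9^{d-1}\,\tau\,\Inf_i(p)$. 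Summing on $i$ and using $\sum_i \Inf_i(p) = \sum_S |S|\,\wh p(S)^2 \leq d$ produces the displayed smooth bound.

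Next I would upgrade from smooth test functions to half-line indicators. Fix a window $\lambda > 0$ and pick $C^4$ functions $\psi_-$ and $\psi_+$ with $\psi_-(s)\leq \mathbf{1}_{s\leq t}\leq \psi_+(s)$, both equal to $\mathbf{1}_{s\leq t}$ outside the interval $[t-\lambda,t+\lambda]$, and with $\|\psi_\pm^{(4)}\|_\infty = O(\lambda^{-4})$. Sandwiching gives
\[
\Pr[p(x)\leq t] - \Pr[p(\mathcal{G})\leq t] \;\leq\; \E[\psi_+(p(x))] - \E[\psi_+(p(\mathcal{G}))] + \Pr\bigl[\,|p(\mathcal{G}) - t| \leq \lambda\,\bigr],
\]
and similarly from below. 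The Gaussian anti-concentration inequality of Carbery--Wright (Theorem~\ref{thm:carberywright}), applied to the degree-$d$ polynomial $p-t$ whose $L_2$ norm is at least $\|p\|_2 = 1$, bounds the slab probability by $O(d\,\lambda^{1/d})$. Combining with the smooth bound from the previous paragraph yields a total error of order $d\,\lambda^{1/d} + 9^d\,d\,\tau\,\lambda^{-4}$.

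Finally I would optimize $\lambda$. Setting the two terms equal gives $\lambda = (9^d\,\tau)^{d/(4d+1)}$, and plugging back in produces
\[
\sup_{t\in\R} \bigl|\Pr[p(x)\leq t] - \Pr[p(\mathcal{G})\leq t]\bigr| \;\leq\; O\!\bigl(d\cdot 9^{d/(4d+1)}\,\tau^{1/(4d+1)}\bigr),
\]
and since $9^{d/(4d+1)} \leq 9^{1/4}$ is absolute, this is $O(d\,\tau^{1/(4d+1)})$ as claimed.

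The main obstacle is the fourth-order control in the swapping step: one must genuinely exploit multilinearity (so that $p$ is affine in each coordinate and the Taylor series truncates cleanly at order~$4$), and one must use the moment-matching of $\pm 1$ and $\calN(0,1)$ through the third moment. Without all four of these ingredients (multilinearity, matching moments, hypercontractivity on the derivative $\partial_i p$, and the $\Inf_i(p)\leq \tau$ bound to trade $\Inf_i^2$ for $\tau\cdot\Inf_i$), the sum over coordinates does not telescope to a $d\tau$-type estimate. A secondary but important subtlety is that the anti-concentration step uses Carbery--Wright, which is available only on the Gaussian side; fortunately the one-sided sandwich needed for Kolmogorov distance suffices, so there is no need to establish Boolean anti-concentration for $p(x)$ directly.
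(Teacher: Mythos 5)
Your proposal is correct and is essentially the standard proof of this invariance principle: the paper itself does not prove Theorem~\ref{thm:invariance} but imports it from \cite{MOO:05}, and your Lindeberg swap with third-moment matching, $(2,4)$-hypercontractivity applied to $\partial_i p$, mollified indicators with $\|\psi^{(4)}\|_\infty = O(\lambda^{-4})$, and Carbery--Wright on the Gaussian side, optimized at $\lambda \approx (9^d\tau)^{d/(4d+1)}$, is exactly the argument given there. The only point worth stating explicitly is that the hypercontractive bound $\E[q_i^4]\leq 9^{d-1}\Inf_i(p)^2$ must be applied to polynomials in a \emph{mixed} ensemble of Rademacher and Gaussian coordinates (the hybrid variables), which is fine because $(2,4)$-hypercontractivity holds with the same constant for both and tensorizes, though the version stated as Theorem~\ref{cor:deg-d-hyper} covers only the pure cases.
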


\section{Gaussian Average Sensitivity}
\label{sec:gaussianas}

\ignore{
It is well known that $\GI_i(f) = \sum_{J: J_i \neq 0}
\widetilde{f}(J)^2$ where $\widetilde{f}(J)$ is the $J$-th Hermite
coefficient of $f$, i.e. $f(x) = \sum_J \widetilde{f}(J)h_J(x)$;  we
will not need this, though.

\subsubsection{Tools}

We write $\|p\|^2_2$ to denote $\E_{\mathcal{G}}[p(\mathcal{G})^2].$
For multilinear $p(x)=\sum_\sigma p_\sigma x_\sigma$ this equals
$\sum_{\sigma}p_\sigma^2.$

\begin{theorem} \label{thm:deg-d-chernoff}[Concentration bound for
  low-degree multilinear polynomials over $\mathcal{G}$]: Let $p(x) =
  \sum_{\sigma} p_\sigma x_\sigma$ be a degree-$d$ multilinear
  polynomial with no constant term, i.e. $p_0=0$ (so
  $\|p\|=\Var[p(\mathcal{G})]$). Then for any $t>e^d$, we have
$$\Pr_{\mathcal{G}}[|p(\mathcal{G})|\geq t \|p\|_2]\leq 2
\exp(-\Omega(t^{2/d})).$$
\rnote{This is the version of the bound that Ryan stated in his
  email. With suitable modifications it even holds for non-multilinear
  polynomials.  He said it is in the Janson book, maybe also in the
  LeDoux paper.  He didn't mention the requirement that $t \geq e^d$,
  but I threw it in anyway -- I assume any time we use the theorem $t$
  will be at least that large (otherwise the bound is trivial).}
\end{theorem}
}

In this section we prove an upper bound on the Gaussian average
sensitivity of degree-$d$ PTFs (\theoremref{thm:gaussas}).

The following lemma, which relates the influence of a variable on
$f$ to its influence on the polynomial $p$, is central to the
argument.

\begin{lemma} \label{lemma:GIbound}
Let $p: \R^n \to \R$ be a degree-$d$ polynomial over Gaussian inputs
with $\Var[p]=1$ and let $f=\sign(p).$  Then for each $i \in [n]$,
$$ \GI_i(f) \leq O(d^2\cdot \GI_i(p)^{1/(2d)} \cdot
\log (1/\GI_i(p))).$$
\end{lemma}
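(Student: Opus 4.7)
The plan is to bound $\GI_i(f) = 2\Pr_{x,x^i}[f(x)\neq f(x^i)]$ by splitting the bad event into the ``$p$ is near zero'' event (which is rare by Carbery--Wright) and the ``$p$ moves a lot when only $x_i$ is resampled'' event (which is rare by the degree-$d$ Chernoff bound, because resampling one coordinate can change $p$ only by an amount controlled by $\GI_i(p)$). Concretely, if $\sign(p(x))\neq \sign(p(x^i))$ then for any threshold $\eta>0$ we must have either $|p(x)|\le \eta$ or $|p(x)-p(x^i)|\ge \eta$, so
\[
\tfrac12\GI_i(f) \;\le\; \Pr[\,|p(x)|\le \eta\,] \;+\; \Pr[\,|p(x)-p(x^i)|\ge \eta\,]\mper
\]

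The first step I would carry out is to bound the anti-concentration term. Since $\Var[p]=1$, we have $\|p\|_2\ge 1$, and Theorem~\ref{thm:carberywright} immediately yields $\Pr[|p(x)|\le \eta]\le O(d\,\eta^{1/d})$. The second, and more delicate, step is to estimate $\|p(x)-p(x^i)\|_2$ as a polynomial in the independent Gaussians $(x_1,\dots,x_n,x^i_i)$. Writing $p$ in the Hermite basis $p=\sum_J \widehat p(J)\,H_J$, the fact that $x$ and $x^i$ differ only in the $i$-th coordinate collapses the sum to
\[
p(x)-p(x^i) \;=\; \sum_{J:\,J_i\ne 0}\widehat p(J)\Bigl(H_{J_i}(x_i)-H_{J_i}(x^i_i)\Bigr)\!\prod_{k\ne i}H_{J_k}(x_k)\mper
\]
Using orthonormality of Hermites together with $\E[H_{J_i}(x_i)]=\E[H_{J_i}(x^i_i)]=0$ when $J_i\ne 0$, all cross terms vanish and one obtains the clean identity $\|p(x)-p(x^i)\|_2^{\,2} = 2\sum_{J:J_i\ne 0}\widehat p(J)^2 = 2\,\GI_i(p)$.

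Since $p(x)-p(x^i)$ is a degree-$d$ polynomial in independent standard Gaussians, Theorem~\ref{thm:deg-d-chernoff} then gives
\[
\Pr\!\Bigl[\,|p(x)-p(x^i)| \ge T\sqrt{2\GI_i(p)}\,\Bigr] \;\le\; \exp\!\bigl(-\Omega(T^{2/d})\bigr)\qquad \text{for } T>e^d\mper
\]
Setting $\eta = T\sqrt{2\GI_i(p)}$ and choosing $T$ so that $T^{2/d}\asymp \log(1/\GI_i(p))$, that is $T\asymp \log(1/\GI_i(p))^{d/2}$, makes the tail contribution polynomially small in $\GI_i(p)$, while the first term becomes $O\!\bigl(d\,\GI_i(p)^{1/(2d)}\,T^{1/d}\bigr) = O\!\bigl(d\,\GI_i(p)^{1/(2d)}\,\log(1/\GI_i(p))^{1/2}\bigr)$, which (after slack in the constants and factors of $d$) is at most the claimed $O(d^2\,\GI_i(p)^{1/(2d)}\log(1/\GI_i(p)))$. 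The bound is trivial (since $\GI_i(f)\le 1$) when $\GI_i(p)$ is not tiny, so one only needs to work in the regime where $\log(1/\GI_i(p))$ is large enough to validate the $T>e^d$ hypothesis.

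The main obstacle is the Hermite computation of $\|p(x)-p(x^i)\|_2^{\,2}$; beyond this, everything is a matter of plugging in the two probabilistic estimates and optimizing $\eta$. A subtle point worth guarding against is that $p$ need not be multilinear and need not have zero mean, so one must use $\Var[p]\le \|p\|_2^{\,2}$ in Carbery--Wright and verify that the mean cancels cleanly in the difference $p(x)-p(x^i)$ — which it does automatically since $x$ and $x^i$ are identically distributed.
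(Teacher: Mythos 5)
Your proposal is correct, but it takes a genuinely different route from the paper's proof of \lemmaref{lemma:GIbound}. The paper expands $p$ along the Hermite basis in the $i$-th coordinate, writing $p=\sum_{j=0}^d p_j(x_{-i})h_j(x_i)$, applies \theoremref{thm:carberywright} to $p_0$ and \theoremref{thm:deg-d-chernoff} to each $p_j$ with $j\geq 1$ separately, and then needs a pointwise bound $|h_j(x)|\leq (ej)^{j/2}\max\{1,|x|^j\}$ together with a Gaussian tail bound on $x_i$ to argue that, conditioned on the good events, $\sign(p)$ is nearly constant as a function of $x_i$, so the conditional variance is small. You instead use the resampling formulation $\GI_i(f)=2\Pr[f(x)\neq f(x^i)]$ (which the paper states explicitly) and run the same scheme as the paper's proof of \theoremref{thm:gaussns}: a single Carbery--Wright application to $p$ and a single Chernoff application to the difference polynomial $q=p(x)-p(x^i)$, viewed as a degree-$d$ polynomial in the $n+1$ independent Gaussians. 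Your key identity $\|q\|_2^2=2\,\GI_i(p)$ is correct and is exact by Hermite orthonormality (the cross terms vanish because $\E[h_{J_i}]=0$ for $J_i\neq 0$ and $x_i,x_i'$ are independent); it is the resampling analogue of \claimref{qnorm}, but far cleaner, since replacing one coordinate by an independent copy preserves the Hermite basis structure and avoids the Taylor-expansion estimates needed there. Your route also dispenses with the per-coefficient union bound and the $(ed)^{d/2}$ factors, and in fact yields the slightly sharper bound $O\bigl(d\cdot \GI_i(p)^{1/(2d)}\log^{1/2}(1/\GI_i(p))\bigr)$, which implies the stated one. The parameter bookkeeping is fine: with $T=C^{d/2}\log^{d/2}(1/\tau)$ for a large absolute constant $C$ the hypothesis $T>e^d$ holds once $\tau$ is below a fixed constant, and your dismissal of the large-$\tau$ regime as trivial is the same convention the paper itself adopts (``we may assume $\tau<1/4$'').
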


\begin{proof} [of \lemmaref{lemma:GIbound}]
  Let $p(x)$ be a degree-$d$ polynomial where $\|p\|_2 = 1$.  For
  notational convenience let us fix $i=1$ and let $\tau = \GI_1(p)$.
  We may assume that $\tau < 1/4$ since otherwise the claimed bound
  holds trivially. We express $p(x)$ as a univariate polynomial in
  $x_1$ as follows,
\[ p(x) = p(x_1,\ldots,x_n) = \sum_{i=0}^d p_i(x_2,\ldots,x_n)\cdot
h_i(x_1) \] where $h_i(x_1)$ is the univariate degree-$i$ Hermite
polynomial.  Note that for any multi-index $S = (S_2,\ldots,S_n)\in
\mathbb{N}^{n-1}$ and $0\leq i\leq d$, we have $\widehat{p_i}(S) =
\widehat{p}(S')$ where $S' = (i,S_2,\ldots,S_{n})\in \mathbb{N}^n$.
As a result, using Parseval's identity for the Hermite basis, we
have that
\[ \norm{p}^2 = \sum_{i=0}^d \norm{p_i}^2. \]

We further have
\[ \norm{p_i}^2 = \sum_{S\in\mathbb{N}^{n-1}}\widehat{p_i}(S)^2 \quad
\text{ and }  \quad  \GI_i(p) = \sum_{S:S_i>0}\widehat{p}(S)^2. \]
Consequently the 2-norms of $p_1,\dots,p_d$ are ``small'' and the
2-norm of $p_0$ is ``large'':
\[ \sum_{i=1}^d \norm{p_i}^2 = \sum_{S : S_1 > 0}\widehat{p}(S)^2 =
\GI_1(p) = \tau \quad \text{ and }  \quad \norm{p_0}^2 = 1-\tau \geq 1/2.\]

Let $t=C^{d/2}\tau^{1/2}\log^{d/2}(1/\tau)$ and $\gamma =
d^2\cdot\tau^{1/2d}\log(1/\tau)$ where $C$ is an absolute constant
that will be defined later in \claimref{claim:pismall}.  We can
assume that $\gamma < 1/10$ since otherwise the bound of
Lemma~\ref{lemma:GIbound} holds trivially. For these values of $t$
and $\gamma$, the proof strategy is as follows:

\begin{itemize}

\item We use the ``small ball probability'' bound
  (\theoremref{thm:carberywright}) to argue that with high
  probability $p_0(g_2,\dots,g_n)$ is not too small: more precisely,
  $\Pr_{(g_2,\ldots,g_n) \sim \N^{n-1}}[|p_0(g_2,\dots,g_n)|\leq
  td(2ed \log (1/\gamma))^{d/2}] \leq O(\gamma)$ (see
  \claimref{claim:p0large}).

\item We use the concentration bound (\theoremref{thm:deg-d-chernoff})
  to argue that with high probability each $p_i(g_2,\dots,g_n)$,
  $i\in [d]$, is not too large: more precisely, $\Pr_{(g_2,\ldots,g_n)
    \sim \N^{n-1}}[|p_i(g_2,\dots,g_n)| \geq t] \leq O(\gamma)$ (see
  \claimref{claim:pismall}).

\item We use elementary properties of the $\N(0,1)$ distribution to
  argue that if $|a|\geq td(2ed \log (1/\gamma))^{d/2}$ and
  $|b_i|\leq t,$ then the function $\sign(a +
  \sum_{i=1}^db_ih_i(g_1))$ (a function of one $\N(0,1)$ random
  variable $g_1)$ is $O(\gamma)$-close to the constant function
  $\sign(a)$ (see \claimref{claim:univarbound}).

\item Thus we know that with probability at least $1 - O(\gamma)$ over
  the choice of $g_2,\dots,g_n$, we have
  $\Var_{g_1}[\sign(p(g_1,\dots,g_n))] \leq O(\gamma (1 - \gamma))
  \leq O(\gamma).$ For the remaining (at most) $O(\gamma)$ fraction of
  outcomes for $g_2,\dots,g_n$ we always have
  $\Var_{g_1}[\sign(p(g_1,\dots,g_n))] \leq 1$, so overall we get
  $\GI_1(\sign(p)) \leq O(\gamma).$

\end{itemize}

Thus, to complete the proof of the lemma, it suffices to prove the
three aforementioned claims.

\begin{claim}
\label{claim:p0large} With probability at least $1-O(\gamma)$ over
draws $(g_2,\ldots,g_n)\sim\N^{n-1}$, the polynomial
$p_0(g_2,\ldots,g_n)$ has magnitude at least $td(2ed \log
(1/\gamma))^{d/2}$.
\end{claim}

\begin{proof}
Applying \theoremref{thm:carberywright} to the polynomial
$p_0(x_2,\ldots,x_n)$ we get:
\[ \Pr_{g_2,\ldots,g_n}\left[ |p_0(g_2,\ldots,g_n)|\leq
  td(2ed \log
(1/\gamma))^{d/2}\right] \leq O(d)\cdot \left(\frac{td(2ed \log
(1/\gamma))^{d/2}}{\norm{p_0}}\right)^{1/d}.\] Recall that
$\norm{p_0}\geq \frac{1}{2}$, and so by our choice of $t$ and
$\gamma$ it follows that the right hand side is:
\[ O(d^{3/2})\cdot O(\tau^{1/2d}\log^{1/2}(1/\tau)\cdot \log^{1/2}(1/\gamma))
= O(\gamma).\]
\end{proof}

\begin{claim}
\label{claim:pismall}
  For each $i\in [d]$, the polynomial $p_i(g_2,\ldots,g_n)$ has
  magnitude larger than $t$ with probability at most
  $\gamma/d$. Therefore, the probability that any
  $p_i(g_2,\ldots,g_n)$ has magnitude larger than $t$ is at most
  $\gamma$.
\end{claim}

\begin{proof}
First note that since $\sum_{i=1}^d\norm{p_i}^2 = \tau$, certainly
for each $i\in [d]$ we have $\norm{p_i}\leq \sqrt{\tau}$. Therefore,
\[ |\E[p_i]| \leq \E[p_i^2]^{1/2} = \norm{p_i} \leq \sqrt{\tau}. \]
Let $p_i' = p_i - \E[p_i]$, so $\E[p'_i]=0.$  Applying
\theoremref{thm:deg-d-chernoff}, we get:
\[
\Pr_{g_2,\ldots,g_n}\left[|p_i'(g_2,\ldots,g_n)|>
  \frac{t-\sqrt{\tau}}{\norm{p_i'}}\cdot\norm{p_i'}\right] \leq 2\exp
\left(-\Omega\left(\left(\frac{t-\sqrt{\tau}}{\norm{p_i'}}\right)^{2/d}\right)\right).\]
Given our bound on $\norm{p_i'}\leq\norm{p_i}\leq \sqrt{\tau}$ and
choice of $t$, we know that the probability bound is at most
$2\exp(-\Omega(C\log(1/\tau)))$. For a sufficiently large absolute
constant $C$ this is at most $\exp(-4\log(1/\tau)) = \tau^4 \leq
\gamma/d$. To complete the proof note that if $|p'_i| \leq
t-\sqrt{\tau}$ then certainly $|p_i|\leq t$.
\end{proof}

We will need the following lemma in the proof of
Claim~\ref{claim:univarbound}:

\begin{lemma}
\label{lemma:hermitebound} The degree-$d$ Hermite polynomial
$h_d(x)$, $d\geq 1$, satisfies the following bound for all $x$:
\[ |h_d(x)| \leq (ed)^{d/2} \cdot \max\{1,|x|^d\}.\]
\end{lemma}

\begin{proof}  The lemma is immediate for $d=1.$ For $d\geq 2$, we
note that the polynomial $h_d(x)$ has at most $d$ terms, each of
which has coefficients of magnitude at most $\sqrt{d!}\leq
d^{d-1}/\sqrt{d!}$. This directly gives $|h_d(x)| \leq
(d^d/\sqrt{d!}) \cdot \max\{1,|x|^d\}.$  The claimed equality
follows easily from this using Stirling's approximation.
\end{proof}

\begin{claim}
\label{claim:univarbound} Suppose $|a|\geq td(2ed \log
(1/\gamma))^{d/2}$, $|b_i|\leq t$ for all $i\in [d]$, and $\gamma <
1/10$. Then,
\[ \Pr_{g_1\sim\N(0,1)}\left[\sgn(a+\sum_{i=1}^d b_i h_i(g_1)) \neq \sgn(a)\right]
\leq O(\gamma). \]
\end{claim}

\begin{proof}
If $\sgn(a+\sum_{i=1}^d b_i h_i(x)) \neq \sgn(a)$ then it has to be
the case that:
\[ \left|\sum_{i=1}^d b_ih_i(x)\right| \geq |a|. \]
By \lemmaref{lemma:hermitebound} we know that for all $x$, we have
\[ \left|\sum_{i=1}^d b_ih_i(x)\right|
\leq td\cdot \max_{1\leq i\leq d}|h_i(x)| < td(ed)^{d/2} \cdot
\max\{1,|x|^d\}.
\] Now if $|x|$ is at most $\sqrt{2\log(1/\gamma)}$,
since $\gamma < 1/10$ we have $\sqrt{2 \log (1/\gamma)}>1$ and hence
it follows that
\[  \left|\sum_{i=1}^d b_ih_i(x)\right|
\leq td(2ed\log(1/\gamma))^{d/2} \leq |a|. \] In other words, if
$\sgn(a+\sum_{i=1}^db_ih_i(x))$ differs from $\sgn(a)$, it must
necessarily be the case that $|x|\geq \sqrt{2\log(1/\gamma)}$. The
standard tail bound on Gaussians,
\[ \Pr_{g_1\sim\N(0,1)}[g_1<c]\leq
\frac{1}{\sqrt{2\pi}|c|}\exp(-c^2/2) \quad \quad \text{for $c < 0$},\]
completes the proof.
\end{proof}

The proof of \lemmaref{lemma:GIbound} is now complete.
\end{proof}

\ignore{
\begin{lemma} \label{lemma:one} Let $p: \R^n \to R$ be a
degree-$d$ multilinear polynomial over Gaussian inputs with
$\|p\|_2=1$ and let $f=\sign(p).$  Then for each $i \in [n]$,
$$ \GI_i(f) \leq O(d\cdot \GI_i(p)^{1/(2d)} \cdot
\log (1/\GI_i(p))).$$
\end{lemma}

\begin{proof}[of Lemma~\ref{lemma:one}]

For notational convenience let us fix $i=1.$  Let $\tau =
\GI_i(p)$. We may assume
that $\GI_i(p) = \tau < 1/4$ since
otherwise the claimed bound holds trivially.

Let us express $p(x)$ as a linear form in $x_1$, i.e.
\begin{eqnarray*}
p(x) = p(x_1,\dots,x_n) &=& a(x_2,\dots,x_n) + x_1 b(x_2,\dots,x_n).
\end{eqnarray*}

By assumption we have $\sum_S \widehat{b}(S)^2 = \tau$ and thus
$\sum_S \widehat{a}(S)^2 = 1-\tau \geq 1/2.$

The high-level idea of the proof is to show that with high
probability, a draw of $(g_2, \ldots, g_n) \sim \N^{n-1}$ causes
$\sign(p(g_1, g_2, \dots, g_n))$ (viewed as a function of $g_1$ only)
to be quite unbalanced.  Let $t = C^{d/2} \tau^{1/2} \log^{d/2}
(1/\tau)$ and $\gamma = d \cdot \tau^{1/(2d)} \log (1/\tau)$ where $C$
is an absolute constant that will be defined later in
\claimref{claim:second}. For these values of $t$ and $\gamma$, the
proof strategy is as follows:

\begin{itemize}

\item We use the ``small ball probability'' bound
  (Theorem~\ref{thm:carberywright}) to argue that with high
  probability, $a(g_2,\dots,g_n)$ is not too small: more precisely,
  $\Pr_{(g_2,\ldots,g_n) \sim \N^{n-1}}[|a(g_2,\dots,g_n)|\leq t
  \sqrt{2 \ln (1/\gamma)}] \leq O(\gamma)$ (see
  \claimref{claim:first}).

\item We use the concentration bound
  (Theorem~\ref{thm:deg-d-chernoff}) to argue that with high
  probability, $b(g_2,\dots,g_n)$ is not too large: more precisely,
  $\Pr_{(g_2,\ldots,g_n) \sim \N^{n-1}}[|b(g_2,\dots,g_n)| \geq t]
  \leq O(\gamma)$ (see \claimref{claim:second}).

\item We use elementary properties of the $N(0,1)$ distribution to
  argue that if $|a|\geq t \sqrt{2 \ln (1/\gamma)}$ and $|b|\leq t,$
  then the function $\sign(a + g_1 b)$ (a function of one $N(0,1)$
  random variable $g_1)$ is $O(\gamma)$-close to the constant function
  $\sign(a)$ (see \claimref{claim:third}).

\item Thus we know that with probability at least $1 - O(\gamma)$ over
  the choice of $g_2,\dots,g_n$, we have
  $\Var_{g_1}[\sign(p(g_1,\dots,g_n))] \leq O(\gamma (1 - \gamma))
  \leq O(\gamma).$ For the remaining (at most) $O(\gamma)$ fraction of
  outcomes for $g_2,\dots,g_n$ we always have
  $\Var_{g_1}[\sign(p(g_1,\dots,g_n))] \leq 1$, so overall we get
  $\GI_1(\sign(p)) \leq O(\gamma).$

\end{itemize}

Thus, to complete the proof of the lemma, it suffices to prove the
three aforementioned claims.

\begin{claim} \label{claim:first}
Let $a(g_2,\dots,g_n)$ be a degree-$d$ multilinear polynomial with
$\|a\|_2 \geq 1/2.$  Then
$\Pr_{g_2,\dots,g_n}[|a(g_2,\dots,g_n)|\leq t \sqrt{2 \ln
(1/\gamma)}] \leq O(\gamma)$. \end{claim}

\begin{proof} The claim follows directly from
  \theoremref{thm:carberywright} for the above choice of parameters.
  By applying \theoremref{thm:carberywright} to polynomial $a(x)$ we
  get
$$ \Pr_{g_2,\dots,g_n}\left[|a(g_2,\dots,g_n)| \leq \frac{t \sqrt{2 \ln
(1/\gamma)}}{\|a\|} \cdot \|a\|\right]  \leq O\left(d
\left(\frac{t \sqrt{2 \ln
(1/\gamma)}}{\|a\|} \right)^{1/d}\right) \mper $$
Recall that $\|a\| \geq \frac{1}{2}$, $t = C^{d/2}\tau^{1/2} \log^{d/2}
(1/\tau)$ and $\gamma = d \cdot \tau^{1/2d}
\log(1/\tau)$.  Substituting these in the previous
inequality yields the upper bound $O(\gamma)$ on the right hand side.
\end{proof}

\begin{claim} \label{claim:second} Let $b(g_2,\dots,g_n)$ be a
  degree-$(d-1)$ multilinear polynomial with $\|b\|_2^2 =\tau.$ There
  exists an absolute constant $C$ such that if $t = C^{d/2} \tau^{1/2}
  \log^{d/2}(1/\tau)$ then $\Pr_{g_2,\dots,g_n}[|b(g_2,\dots,g_n)|>t]
  \leq \gamma.$
\end{claim}

\begin{proof} Applying \theoremref{thm:deg-d-chernoff} on the
polynomial $b$ we get
$$ \Pr_{g_2,\dots,g_n}\left[|b(g_2,\dots,g_n)| > \frac{t}{\|b\|} \cdot
  \|b\| \right] \leq 2
\exp\left(-\Omega\left(\left(\frac{t}{\|b\|}\right)^{2/d}\right)\right)
\mper$$ Substituting the value of $\|b\| = \sqrt{\tau}$ and $t =
C^{d/2}\tau^{1/2} \log^{d/2} (1/\tau)$, the right hand side is upper
bounded by $2 \exp(-\Omega(C\log(1/\tau)))$.  For a sufficiently large
absolute constant $C$, this probability is less than $\exp(-4 \log
(1/\tau)) = \tau^4 \leq \gamma$.
\end{proof}

\begin{claim} \label{claim:third}  Fix any $t>0$ and $\gamma <{\frac 1 3}.$
If $|a|\geq t\sqrt{2\ln(1/\gamma)}$ and $|b| \leq t,$ then $\Pr_{g_1
\sim N(0,1)}[\sign(a+g_1b) \neq \sign(a)] < O(\gamma).$
\end{claim}

\begin{proof}
  We may assume wlog that $a,b>0.$ We have $\sign(a+g_1b) \neq
  \sign(a)$ only if $g_1 < -a/b$; since $-a/b \leq
  -\sqrt{2\ln(1/\gamma)}$, the standard bound

\[
\Pr_{g_1 \sim N(0,1)}[g_1 < c] \leq {\frac 1 {\sqrt{2 \pi}|c|}}
\exp(-c^2/2) \quad \quad \text{~for~}c<0
\]
(see e.g. p.6 of Durrett's \emph{Probability:  Theory and Examples})
gives the claim.
\end{proof}

The proof of Lemma~\ref{lemma:one} is now complete.
\end{proof}
}

We can now complete the proof of \theoremref{thm:gaussas}.

\begin{proof}[Proof of \theoremref{thm:gaussas}]
  Let us denote $\GI_i(p)$ by $\tau_i$ for $i \in [n]$. Note that
  since $p$ is of degree $d$, we have
\begin{equation}\label{eq:sumofinf}
\sum_{i \in [n]} \tau_i = \sum_{i\in [n]} \sum_{S\ni i} \widehat{p}(S)^2 =
\sum_{|S|\leq d}|S|\cdot\widehat{p}(S)^2 \leq d.
\end{equation}
Let $a_d(x) = d^2 x^{1/2d} \log(1/x)$.  By \lemmaref{lemma:GIbound} the
average sensitivity of $f$ can be bounded as
$$ \GAS(f) = \sum_{i \in [n]} \GI_i(f) \leq O(\sum_{i \in [n]}
a_d(\tau_i)). $$ The function $a_d(x)$ is monotone increasing and
concave in $[0, e^{-2d}]$.
In this light, we split the summation into terms greater than
$e^{-2d}$ and the rest. Let $S = \{i | \tau_i \geq e^{-2d}\}$ and $T =
[n]\setminus S$.  From \eqref{eq:sumofinf}, we have $|S| \leq
de^{2d}$.  Observe that for $n < (27d^2)^{2d}$, \theoremref{thm:gaussas}
holds trivially since $\GAS(f) \leq n \leq 27d^2n^{1-1/2d} \leq
27d^2n^{1-1/2d}\log n$.  Hence we may assume $n \geq (27d^2)^{2d}$, and
consequently $|T|$ is at least $n/2$.  Using concavity and
monotonicity of $a_d$, we can write
\begin{align*}
\sum_{i \in T} a_d(\tau_i)  \leq   |T| \cdot a_d\left((\sum_{i\in T}
\tau_i)/|T|\right)  \leq n a_d\left(\frac{2d}{n}\right) \leq O(d^2
n^{1-1/2d} \log n) \mper
\end{align*}
Therefore, the average sensitivity of $f$ is bounded by
\begin{align*}
\GAS(f) & = \sum_{i \in S} \GI_i(f) + \sum_{i \in T} \GI_i(f) \\
        & \leq   |S| + O(\sum_{i \in T} a_d(\tau_i)) \leq d e^{2d} +
O(d^2 n^{1-1/2d} \log n)   \mper
\end{align*}
For all $d \geq 1$ we have
\[
de^{2d} < e^{3d} < (3 d^{1/3})^{3d} \leq (27d^2)^d \leq n^{1/2},
\quad \quad \text{since~}n \geq (27d^2)^{2d}.\] Consequently we have
$\GAS(f) \leq n^{1/2} + O(d^2 n^{1-1/2d} \log n)  = O(d^2 n^{1-1/2d}
\log n)$, and the proof is complete.
\end{proof}

%
\section{Gaussian Noise Sensitivity}
\label{sec:gaussianns}

In this section we prove an upper bound on the noise sensitivity of
degree-$d$  PTFs.

\ignore{
In this section, we will prove an upper bound on the noise sensitivity
of a degree $d$ multilinear PTFs.  First, we setup some notation for
multilinear polynomials.
\begin{definition}
  A multilinear polynomial of degree $d$ is
  \[ p(x) = \sum_\sigma p_\sigma x_\sigma. \] Here each $\sigma$ is a
  sequence $(\sigma_1,\ldots,\sigma_n)$ in $\{0,1\}^n$ with at most
  $d$ ones, $x_\sigma$ denotes the monomial $\prod_{i=1}^n
  x_i^{\sigma_i}$, and $p_\sigma$ is a real coefficient.
\lnote{I guess we should keep this notation if we
  plan to generalize our Gaussian results to non-multilinear
  polynomials?}
\end{definition}
Specifically, we will be interested in $f = \sign(p)$ where $p$ is a
degree-$d$ multilinear polynomial given by $p(u) = \sum_{|\sigma|
  \leq d} p_\sigma x_\sigma$.  Without loss of generality, we may
assume that $\| p \|_2^2 = \sum_{|\sigma| \leq d} p^2_\sigma = 1$.

The proof strategy is as follows:

Our goal is to prove:

\begin{theorem} \label{thm:gaussianmultilinear-NS}
Let $p$ be an $n$-variable degree-$d$ multilinear polynomial over
Gaussian inputs. Then $\GNS_\eps(f)$, the Gaussian noise
sensitivity of $f=\sign(p)$ at noise rate $\eps$, is at most
$O(d \cdot \eps^{1/(2d)} \log(1/\eps))$.
\end{theorem}

\subsection{Background}

\begin{definition}
A multilinear polynomial of degree $d$ is
\[ p(x) = \sum_\sigma p_\sigma x_\sigma. \]
Here each $\sigma$ is a sequence $(\sigma_1,\ldots,\sigma_n)$ in
$\{0,1\}^n$ with at most $d$ ones, $x_\sigma$ denotes the monomial
$\prod_{i=1}^n x_i^{\sigma_i}$, and $p_\sigma$ is a real
coefficient.
\end{definition}

We define the Gaussian noise operator $T_\rho$ at noise rate $\rho$
in the usual way.  Given $f \in L^2(\R^n,\N^n)$, the noise operator
acts by
\[
(T_\rho f)(x) = \E_{y \sim \N^n}[f(\rho x + \sqrt{1 - \rho^2} y)]
\]
(This is just the Ornstein-Uhlenbeck operator with a slightly
different parameterization.)

\medskip

The \emph{noise stability of $f$ at noise rate $\rho$} is $
\Stab_\rho(f) = \la f, T_\rho f \ra$, where $\la f, g \ra$ denotes
$\E_{x \sim \N^n}[f(x) g (x)].$

\medskip

We now define the \emph{Gaussian noise sensitivity of $f$ at noise
  rate $\eps$}:

\noindent By definition of $T_{1-\eps}$, we have that
\begin{eqnarray}
  \GNS_\eps(f) &=& \half - \half \la f, T_{1-\eps} f \ra \nonumber\\
  &=& \half - \half \E_{x, z \sim \N^n}[f(x)f(y)], \qquad \text{where
    $y \eqdef (1-\eps)\,x + \sqrt{2\eps - \eps^2}\,z$}
  \nonumber\\
  & = & \Pr_{x,z}[f(x) \neq f(y)]; \label{eqn:NSprob}
\end{eqnarray}
i.e., $\GNS_\eps(f)$ is the probability that two
``$(1-\eps)$-correlated'' Gaussians land on opposite ``sides'' of
$f$.

\subsection{Tools}

We write $\|p\|^2_2$ to denote $\E_{\mathcal{G}}[p(\mathcal{G})^2].$
For multilinear $p(x)=\sum_\sigma p_\sigma x_\sigma$ this equals
$\sum_{\sigma}p_\sigma^2.$

\begin{theorem} \label{thm:deg-d-chernoff}[Concentration bound for
  low-degree multilinear polynomials over $\mathcal{G}$]: Let $p(x) =
  \sum_{\sigma} p_\sigma x_\sigma$ be a degree-$d$ multilinear
  polynomial with no constant term, i.e. $p_0=0$ (so $\|p\|_2^2 =
  \Var[p(\mathcal{G})]$). Then for any $t>e^d$, we have
$$\Pr_{\mathcal{G}}[|p(\mathcal{G})|\geq t \|p\|_2]\leq 2
\exp(-\Omega(t^{2/d})).$$
\rnote{This is the version of the bound that Ryan stated in his
  email. With suitable modifications it even holds for non-multilinear
  polynomials.  He said it is in the Janson book, maybe also in the
  LeDoux paper.  He didn't mention the requirement that $t \geq e^d$,
  but I threw it in anyway -- I assume any time we use the theorem $t$
  will be at least that large (otherwise the bound is trivial).}
\inote{This bound holds as is even for the case that the constant term
  is not zero.  (The proof does not use any assumption about this
  constant.)  Recall we had the same issue in the Boolean setting.}
\end{theorem}

\begin{theorem} \label{thm:cw} [Low-degree multilinear polys over
  $\mathcal{G}$ have low small ball probabilities]: There is a
  universal constant $C>0$ such that for any degree $d$ multilinear
  polynomial $p(x)=\sum_{\sigma} p_\sigma x_\sigma$ and any $\eps>0$,
  we have
\[ \Pr[|p(\mathcal{G})|\leq\eps]\leq Cd(\eps/\|p(\mathcal{G})\|_2)^{1/d}. \]
\end{theorem}
}

\begin{proof}[of \theoremref{thm:gaussns}]
Let $f = \sign(p)$, where $p=p(x_1,\dots,x_n)$ is a degree-$d$
polynomial with $\E_{x \sim \N^n}[p(x)^2]^{1/2} = \| p \|_2 = 1$.
Recall that $\GNS_{\eps}(f)$ equals  $\Pr_{x,z}[f(x) \neq f(y)]$
where $x \sim \N^n$, $z \sim \N^n$; $x$ and $z$ are independent; and
$y = \alpha x + \beta z$, with $\alpha \eqdef 1-\eps$ and $\beta =
\sqrt{2\eps-\eps^2}$.

We can assume wlog that $\eps \leq 2^{-2d-1}$, since otherwise the theorem trivially holds.

\ignore{  OLD DECOMPOSITION FOR MULTILINEAR CASE:

We can write
\[p(x) = \sum_{|S| \leq d} \widehat{p}(S) x_S\] and
\[p(y) = p(\alpha x + \beta z) = \sum_{|S| \leq d} \widehat{p}(S)
(\alpha x + \beta z)_S .\]

END OLD DECOMPOSITION FOR MULTILINEAR CASE}

Let us define the function
\[q(x,z) = p(x) - p(y) .\]
Note that $q$ is a degree-$d$ polynomial over $2n$ variables.

Fix a real number $t^{\ast} > 0$.  It is easy to see that $f(x)
\neq f(y)$ only if at least one of the following two events hold:
\[ (\text{Event } \mathcal{E}_1) \quad |p(x)| \leq t^{\ast} \qquad \text{ OR
} \qquad (\text{Event } \mathcal{E}_2) \quad |q(x,z)| \geq t^{\ast}.
\] We will upper bound the probability of these two events for a
carefully chosen $t^{\ast}$.
 We will bound the probability of the event
$\mathcal{E}_1$ using Carbery-Wright (\theoremref{thm:carberywright}), the probability of event $\mathcal{E}_2$ using the tail
bound for degree-$d$ polynomials (\theoremref{thm:deg-d-chernoff}) and then apply a union bound.

The choice of $t^{\ast}$ will be dictated by
\theoremref{thm:deg-d-chernoff}. More precisely, to apply
\theoremref{thm:deg-d-chernoff}, a bound on $\|q\|_2$ is needed.  To
this end, we show the following claim:

\begin{claim} \label{qnorm}
We have $\| q\|_2  = O(d \cdot \sqrt{\eps})$.
\end{claim}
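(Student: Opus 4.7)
The plan is to compute $\|q\|_2^2$ directly via the multivariate Hermite expansion of $p$, exploiting the fact that the coupling $(x,y)$ consists of $(1-\eps)$-correlated standard Gaussians. First I note that since $\alpha^2+\beta^2 = (1-\eps)^2 + (2\eps-\eps^2) = 1$, the vector $y$ is marginally $\calN^n$, and for each coordinate $i$ the pair $(x_i,y_i)$ is a centered Gaussian with covariance $1-\eps$, while different coordinates are independent. Then I write
\[
p = \sum_{J\in\mathds{N}^n,\,|J|\leq d} \widetilde{p}(J)\,H_J,
\]
where $H_J(x) = \prod_i h_{J_i}(x_i)$ are the orthonormal multivariate Hermite polynomials and $|J| = J_1+\cdots+J_n$; only terms with $|J|\leq d$ appear because $p$ has total degree at most $d$. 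Parseval gives $\sum_J \widetilde{p}(J)^2 = \|p\|_2^2 = 1$.

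The key computation is that for $(1-\eps)$-correlated Gaussians one has $\E[H_J(x)H_{J'}(y)] = (1-\eps)^{|J|}\mathbf{1}_{J=J'}$; this is the standard eigenfunction property of the Ornstein--Uhlenbeck semigroup (equivalently the Mehler formula), which tensorizes from the one-dimensional identity $\E[h_k(x_i)h_\ell(y_i)] = (1-\eps)^k \mathbf{1}_{k=\ell}$. Combining this with bilinearity and $\E[p(x)^2] = \E[p(y)^2] = 1$ yields
\[
\|q\|_2^2 = \E[p(x)^2] - 2\,\E[p(x)p(y)] + \E[p(y)^2] = 2\sum_{J:\,|J|\leq d}\widetilde{p}(J)^2\bigl(1-(1-\eps)^{|J|}\bigr).
\]
Finally I apply the Bernoulli-type inequality $1-(1-\eps)^k \leq k\eps \leq d\eps$ valid for $0\leq k\leq d$ and $\eps\in[0,1]$, pull this factor out of the sum, and invoke Parseval to conclude $\|q\|_2^2 \leq 2d\eps$, so $\|q\|_2 \leq \sqrt{2d\eps}$, which is even tighter than the stated $O(d\sqrt{\eps})$.

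I do not anticipate a serious obstacle: the argument is essentially a one-line Hermite-analytic calculation once the basis is in place. The only point worth flagging is that this approach works uniformly for arbitrary (not necessarily multilinear) degree-$d$ polynomials, because all that matters is the total Hermite degree $|J|\leq d$; this is important here since the claim is stated for general degree-$d$ $p$ rather than only the multilinear case, which is precisely the technical issue the authors allude to in the introduction.
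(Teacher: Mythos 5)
Your proof is correct, and it takes a genuinely different---and in fact sharper---route than the paper's. The paper also expands $p$ in the multivariate Hermite basis, but it only uses orthogonality to kill the cross terms (\lemmaref{lem:orthog}) and then bounds each diagonal term $\|H_S(x)-H_S(y)\|_2$ by $O(d\sqrt{\eps})$ through a chain of Taylor-expansion estimates on univariate Hermite polynomials (\claimref{clm:mv-hermite}, \claimref{cl:uv-hermite}, and \lemmaref{lemma:herm-eps}), which is where the hypothesis $\eps \leq 2^{-2d-1}$ and most of the technical work enter. You instead invoke the Mehler/Ornstein--Uhlenbeck eigenrelation $\E[H_J(x)H_{J'}(y)] = (1-\eps)^{|J|}\mathbf{1}_{J=J'}$ for $(1-\eps)$-correlated standard Gaussians, which gives the exact identity $\|q\|_2^2 = 2\sum_{J}\widehat{p}(J)^2\bigl(1-(1-\eps)^{|J|}\bigr)$ and hence, by Bernoulli's inequality, $\|q\|_2 \leq \sqrt{2d\eps}$. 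This is cleaner (one orthogonality identity replaces two claims and a lemma), requires no restriction on $\eps$, handles non-multilinear $p$ exactly as the paper's argument does (both only use that the total Hermite degree is at most $d$), and yields the quantitatively stronger bound $O(\sqrt{d\eps})$ in place of $O(d\sqrt{\eps})$; that improvement gets absorbed by the $1/d$-th root in the Carbery--Wright step, so the statement of \theoremref{thm:gaussns} is unchanged, but the proof of \claimref{qnorm} is substantially shortened. The only ingredient you use that the paper does not state explicitly is the one-dimensional identity $\E[h_k(x_1)h_\ell(y_1)] = (1-\eps)^k\mathbf{1}_{k=\ell}$, but this is a standard property of the Gaussian noise operator and its tensorization across independent coordinates is immediate, so there is no gap.
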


The proof of this claim is somewhat involved and is deferred to
Section~\ref{sec:qnorm}.

\ignore{ OLD PROOF OF Q-NORM BOUND FOR MULTILINEAR CASE:

\begin{proof}
  Note that $q$ is multilinear, hence its $l_2$ norm is just the
  square root of the sum of its squared coefficients.  Let us expand
  the term $(\alpha x + \beta z)_S$ that appears in $p(y)$ with
  coefficient $\widehat{p}(S)$.  For notational convenience we assume
  that $S = [k]$, for
  some $k \leq d$.  We have that
\[ (\alpha x + \beta z)_S = \prod_{i=1}^k (\alpha x_i + \beta
z_i) = \sum_{T \subseteq [k]} \alpha^{|T|} x_{T} \beta^{k-|T|}
z_{\bar{T}} = \alpha^k x_{[k]} + \sum_{T \subsetneq [k]} \alpha^{|T|}
x_{T} \beta^{k-|T|} z_{\bar{T}} \]

Now, we observe that in the polynomial $q$ the monomial $x_S =
x_{[k]}$ will have coefficient $\widehat{p}(S)\cdot (1 - \alpha^k) = O(d \cdot
\eps) \cdot \widehat{p}(S)$. The monomials $x_{T} z_{\bar{T}}$ are all
distinct, even across different $S$'s (this is important since
it means there are no cancelations across different $S$'s).
For a given $S = [k]$,
we have that the sum of the squares of the coefficients of all these
monomials is exactly
\[\widehat{p}(S)^2 \cdot \sum_{i=0}^{k-1} \binom{k}{j}\alpha^{2j}
\beta^{2(k-j)} = \widehat{p}(S)^2 \cdot [(\alpha^2 + \beta^2)^k -
\alpha^{2k}] = \widehat{p}(S)^2 \cdot [1 - (1-\eps)^{2k}] = \widehat{p}(S)^2
\cdot O(d \cdot \eps).\]

By summing up across all $S$'s and taking square root we get the claim.
\end{proof}

END OF OLD PROOF OF Q-NORM BOUND FOR MULTILINEAR CASE}

Fix $t^{\ast} = \Theta (d \sqrt{\eps} \log^{d/2} (1/\eps))$.  By \theoremref{thm:carberywright}, we have:
\[ \Pr_{x \sim \N^n} [|p(x)| \leq t^{\ast}]  = O(d \cdot
(t^{\ast})^{1/d}) = O(d \cdot \eps^{1/(2d)} \cdot
\log^{1/2}(1/\eps)) .
\]

Since both $x$ and $y$ are individually distributed according to $\N^n$, we have $\E[q(x,z)] = \E[p(x)-p(y)] = 0.$  By
\theoremref{thm:deg-d-chernoff} and \claimref{qnorm}, we get
\[ \Pr_{x,z \sim \N^{2n}} \left[|q(x,z)| \geq \frac{t^{\ast}}{\|q\|_2}
\cdot \|q\|_2\right] \leq 2
\exp\left(-\Omega\left(\left(\frac{t^{\ast}}{\|q\|_2}\right)^{2/d}\right)\right)
\leq \eps \mper \] Hence, by a union bound the noise sensitivity is
$O(d \cdot \eps^{1/(2d)} \cdot \log^{1/2}(1/\eps))$.  This completes
the proof of \theoremref{thm:gaussns}.
\end{proof}

\subsection{Proof of Claim~\ref{qnorm}}~\label{sec:qnorm}
Let $p:\R^n \to \R$ be a degree-$d$ polynomial over independent standard Gaussian random variables. Let us assume that $\| p
\|_2 =1$ and that $\eps \leq 2^{-2d-1}$. We will show that $$\|q\|_2 = O(d \sqrt{\eps}).$$

It will be convenient for the proof to express $p$ in an appropriate orthonormal basis. Let $p(x) = \sum_{S \in \mathcal{S}}
\widehat{p}(S) H_S(x)$ be its Hermite expansion; $\mathcal{S}$ is a family of multi-indices where each $\{H_S\}_{S \in
\mathcal{S}}$ has degree at most $d$. By orthonormality of the basis we have that $$\|p\|_2^2 = \sum_{S\in \mathcal{S}}
\widehat{p}(S)^2.$$ Note that $q(x,z) = \sum_{S\in \mathcal{S}} \widehat{p}(S) \big( H_S(x)-H_S(y) \big)$ and
$$q^{2}(x,z) = \sum_{S\in \mathcal{S}} \widehat{p}^2(S) \big( H_S(x)-H_S(y) \big)^2 + \sum_{S, T\in \mathcal{S}, S \neq T} \widehat{p}(S) \widehat{p}(T) \big ( H_S(x)-H_S(y))(H_T(x)-H_T(y)) \big).$$
Let us denote the second summand in the above expression by $q'(x,z)$. We will first show that
$$\E_{x,z} [q'(x,z)] = 0.$$
By linearity of expectation we can write
\[\E_{x,z}[q'(x,z)] =  \sum_{S, T\in \mathcal{S}, S \neq T} \widehat{p}(S) \widehat{p}(T) \E_{x,z}\Big[ \big( H_S(x)-H_S(y))(H_T(x)-H_T(y)) \big) \Big] = 0.\]
Hence, it suffices to show that for all $S \neq T$ we have
\[ \E_{x,z}\Big[ \big( H_S(x)-H_S(y))(H_T(x)-H_T(y)) \big) \Big] = 0.\]
By orthogonality of the Hermite basis, and the fact that $y$ is distributed according to $\N^n$, the above expression equals
\[ -\E_{x,z}\Big[ H_S(x) H_T(y)\Big] - \E_{x,z}\Big[ H_S(y) H_T(x)\Big].\]

Thus, the desired result follows from the following lemma:

\begin{lemma} \label{lem:orthog}
For all $S \neq T$ it holds
\[ \E_{x,z} \big[H_S(x) H_T(y)\big] = 0.\]
\end{lemma}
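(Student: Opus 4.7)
The plan is to use the fact that although $y$ is defined in terms of both $x$ and $z$, it is distributed marginally as $\N^n$: this is because $\alpha^2+\beta^2=(1-\eps)^2+(2\eps-\eps^2)=1$ and $z$ is independent of $x$, so each coordinate $y_i=\alpha x_i+\beta z_i$ is a standard Gaussian and distinct coordinates remain independent. Moreover, $\E[x_i y_j]=\alpha\cdot\delta_{i,j}$, so each pair $(x_i,y_i)$ is a bivariate standard Gaussian with correlation $\rho=\alpha=1-\eps$, and the pairs $(x_i,y_i)$ are independent across $i$. This is exactly the setup in which the multivariate Hermite polynomials behave nicely.

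First I would recall the standard tensor-product structure of the orthonormal Hermite basis reviewed in Appendix~\ref{ap:hermite}: for a multi-index $S=(s_1,\ldots,s_n)$ one has the factorization $H_S(x)=\prod_{i=1}^n h_{s_i}(x_i)$, where $h_k$ denotes the univariate orthonormal Hermite polynomial of degree $k$. Combining this factorization with coordinate-wise independence of the pairs $(x_i,y_i)$, I would write
\[
\E_{x,z}\bigl[H_S(x)\,H_T(y)\bigr]
\;=\;\prod_{i=1}^n \E\bigl[h_{s_i}(x_i)\,h_{t_i}(y_i)\bigr].
\]
The proof then reduces to a one-coordinate computation: if $(X,Y)$ is a pair of standard Gaussians with correlation $\rho$, then $\E[h_s(X)\,h_t(Y)]=\rho^{s}\,\delta_{s,t}$. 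Equivalently, this is the statement that $H_T$ is an eigenfunction of the Ornstein--Uhlenbeck semigroup with eigenvalue $\rho^{|T|}$, i.e.\ $\E_z[H_T(\alpha x+\beta z)]=\alpha^{|T|}H_T(x)$, a standard fact (Mehler's formula).

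Once this identity is in hand, the claim is immediate. Since $S\neq T$ as multi-indices, there must be some coordinate $i$ with $s_i\neq t_i$; then the $i$th factor $\E[h_{s_i}(x_i)\,h_{t_i}(y_i)]$ equals $\alpha^{s_i}\delta_{s_i,t_i}=0$, and the entire product vanishes, giving $\E_{x,z}[H_S(x)H_T(y)]=0$. I do not anticipate a real obstacle here: the only thing to be careful about is using the orthonormal (rather than ``physicists'\,'') Hermite convention so that the $\delta_{s,t}$ identity holds without extra factorial factors, and verifying that the eigenfunction identity for $T_\alpha$ applies to the specific perturbation $y=\alpha x+\beta z$ used in the definition of $\GNS_\eps(f)$. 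Both are routine once the marginal and covariance structure of $(x,y)$ is identified.
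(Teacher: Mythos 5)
Your proof is correct, and its skeleton is the same as the paper's: exploit the tensor structure $H_S(x)=\prod_i h_{S_i}(x_i)$ and the independence of the pairs $(x_i,y_i)$ across coordinates to reduce the claim to a univariate statement about correlated standard Gaussians. The difference is in how that univariate statement is handled. You invoke the Mehler/Ornstein--Uhlenbeck eigenfunction identity $\E_{z_1}[h_t(\alpha x_1+\beta z_1)\mid x_1]=\alpha^{t}h_t(x_1)$, which gives $\E[h_s(x_1)h_t(y_1)]=\alpha^{t}\delta_{s,t}$ exactly (your exponent $\alpha^{s_i}$ should really be $\alpha^{t_i}$, but since the factor vanishes whenever $s_i\neq t_i$ this is immaterial for the lemma). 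The paper instead stays elementary and self-contained: since $\alpha^2+\beta^2=1$, the pair $(x_1,\alpha x_1+\beta z_1)$ is exchangeable, so one may assume $s>t$; then $h_t(\alpha x_1+\beta z_1)$ is a polynomial of degree $t$ in $(x_1,z_1)$ and expands as $\sum_{i,j\leq t}c_{ij}h_i(x_1)h_j(z_1)$, after which independence of $x_1,z_1$ and orthogonality of $h_i$ with $h_s$ (as $i\leq t<s$) annihilate every term. Both routes are valid; yours buys the precise correlation value (the full noise-stability formula, which would be useful if one wanted to compute $\|q\|_2$ exactly rather than bound it), at the cost of quoting a standard external fact, while the paper's uses nothing beyond orthogonality of the Hermite basis.
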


\begin{proof}
Since $S \neq T$, it suffices to prove the result for univariate Hermite polynomials. The result for the multivariate case then
follows by independence. That is, for $x_1, z_1 \in \N(0,1)$ and $s \neq t \in [d]$, we need to show that
\[ \E_{x_1, z_1} \big[ h_s(x_1) h_t(\alpha x_1 + \beta z_1) \big] =0 .\]
Since $\alpha^2+\beta^2=1$, we have that the joint distribution of $(x_1, \alpha x_1 +\beta z_1)$ is identical to the joint
distribution of $(\alpha x_1 + \beta z_1, x_1)$, and thus we can assume wlog that $s>t$. Since $h_t(\alpha x_1 + \beta z_1)$ is
a degree-$t$ polynomial in $x_1, z_1$ it can be written in the form
\[ \sum_{i,j=0}^{t} c_{ij} h_i(x_1)h_j(z_1)\]
for some real coefficients $c_{ij}$. Hence, by linearity of expectation and independence, the desired expectation is
\[ \sum_{i,j=0}^{t} c_{ij} \E [h_i(x_1) h_s(x_1)] \cdot \E [h_j(z_1)]\]
which equals $0$ by orthogonality of the Hermite basis.
\end{proof}

At this point, we need the following claim whose proof is deferred to the following subsection:

\begin{claim} \label{clm:mv-hermite}
Let $H_d(x)$ be a degree-$d$ multivariate Hermite polynomial. Then
$$\|H_d(x) - H_d(y) \|_2  = O(d \cdot \sqrt{\eps}).$$
\end{claim}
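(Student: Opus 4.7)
The strategy is to exploit the defining property of the multivariate Hermite polynomials as eigenfunctions of the Ornstein--Uhlenbeck semigroup. Because $\alpha^2+\beta^2 = (1-\eps)^2 + (2\eps-\eps^2) = 1$, the random vector $y = \alpha x + \beta z$ is marginally distributed as $\N^n$, and the pair $(x,y)$ is jointly Gaussian with correlation $\alpha$ in each coordinate. The standard Mehler-type identity (which reduces to the one-dimensional fact $T_\alpha h_k = \alpha^k h_k$ by tensorization over coordinates) then gives
\[ \E_z\bigl[H_S(y) \mid x\bigr] \;=\; \alpha^{|S|}\, H_S(x). \]
This is the one non-trivial ingredient; I would invoke it as a standard fact about Hermite analysis under the OU semigroup, pointing to Appendix~\ref{ap:hermite} for background rather than proving it from scratch.

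Given this identity, the rest is a short computation. By iterated expectations and orthonormality of the Hermite basis,
\[ \E_{x,z}\bigl[H_S(x) H_S(y)\bigr] \;=\; \E_x\!\left[H_S(x) \cdot \E_z\!\left[H_S(y) \mid x\right]\right] \;=\; \alpha^{|S|}\, \E_x\!\left[H_S(x)^2\right] \;=\; \alpha^{|S|}. \]
Since $y$ is marginally $\N^n$ we also have $\E[H_S(y)^2] = \E[H_S(x)^2] = 1$, so expanding the square yields
\[ \|H_S(x) - H_S(y)\|_2^2 \;=\; 2 - 2\alpha^{|S|} \;=\; 2\bigl(1 - (1-\eps)^{|S|}\bigr) \;\leq\; 2|S|\,\eps \;\leq\; 2d\,\eps, \]
where the penultimate inequality is Bernoulli's. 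Taking square roots gives $\|H_S(x) - H_S(y)\|_2 \leq \sqrt{2d\eps}$, which trivially implies the claimed $O(d\sqrt{\eps})$ bound (and is in fact slightly stronger, of order $O(\sqrt{d\eps})$).

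The plan therefore has essentially one real step (the OU eigenvalue identity) and one bookkeeping step (the expansion above); there is no genuine obstacle once one recognizes that the Hermite basis diagonalizes the noise operator. If an appendix-free proof of the eigenvalue identity is desired, I would derive it in one variable via the generating function $\sum_{k} h_k(x) t^k / \sqrt{k!} = \exp(tx - t^2/2)$: substituting $\alpha x + \beta z$ for $x$ and taking $\E_z$ replaces $\exp(t(\alpha x + \beta z) - t^2/2)$ by $\exp(\alpha t x - \alpha^2 t^2/2)$, which is the generating function of $\{\alpha^k h_k(x)\}$; tensorizing over the coordinates of $S$ then yields the multivariate statement.
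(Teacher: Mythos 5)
Your proof is correct, and it takes a genuinely different route from the paper's. The paper proves the claim by brute force: it writes $H_d(x)=\prod_{i=1}^j h_{k_i}(x_i)$, expands $H_d(y)=\prod_i\bigl(h_{k_i}(x_i)+\Delta h_{k_i}(x_i,y_i)\bigr)$, and controls each univariate difference $\|h_k(x)-h_k(y)\|_2$ via a Taylor-expansion argument (Claim~\ref{cl:uv-hermite} and Lemma~\ref{lemma:herm-eps}) that needs explicit bounds on $\|h_k^{(i)}(x)x^i\|_2$, hypercontractivity, Gaussian moment estimates, and the standing assumption $\eps\le 2^{-2d-1}$, before recombining the $2^j$ cross terms by the triangle inequality. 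You instead observe that $(x,y)$ are $(1-\eps)$-correlated Gaussians and that the orthonormal Hermite basis diagonalizes the Ornstein--Uhlenbeck operator, $\E_z[H_S(y)\mid x]=(1-\eps)^{|S|}H_S(x)$, which yields the exact identity $\|H_S(x)-H_S(y)\|_2^2=2\bigl(1-(1-\eps)^{|S|}\bigr)\le 2d\eps$. This is shorter, needs no restriction on $\eps$ for the claim itself, and gives the stronger bound $O(\sqrt{d\eps})$ in place of $O(d\sqrt{\eps})$; your generating-function derivation of the eigenvalue identity is also correct (with the paper's orthonormal normalization one has $\sum_k h_k(x)t^k/\sqrt{k!}=\exp(tx-t^2/2)$, and completing the square under $\E_z$ gives $\exp(\alpha tx-\alpha^2t^2/2)$ since $\alpha^2+\beta^2=1$). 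In fact the same computation applied directly to $q(x,z)=p(x)-p(y)$ would give $\|q\|_2^2=\sum_S\widehat{p}(S)^2\cdot 2\bigl(1-(1-\eps)^{|S|}\bigr)\le 2d\eps$, subsuming all of Section~\ref{sec:qnorm} (including Lemma~\ref{lem:orthog}) in one line; what the paper's more pedestrian argument buys is only that it avoids invoking the semigroup eigenvalue property, at the cost of a weaker constant and the extra hypothesis on $\eps$.
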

Repeated applications of Claim~\ref{clm:mv-hermite} now yield
\begin{eqnarray*}
\E_{x,z}[q^2] &=& \sum_{S\in \mathcal{S}} \widehat{p}^2(S) \E_{x,z}\big[\big( H_S(x)-H_S(y) \big)^2\big]\\
              &\leq& \sum_{S\in \mathcal{S}} \widehat{p}^2(S) \cdot O(d^2 \cdot \eps) = O(d^2 \cdot \eps)
\end{eqnarray*}
concluding the proof.

\subsubsection{Proof of Claim~\ref{clm:mv-hermite}}

We can assume wlog that
\[ H_d(x) = \prod_{i=1}^j h_{k_i}(x_i) \]
where $j \in [d]$, $k_i \geq 1$, and $\sum_{i=1}^j k_i = d.$

For $i \in [j]$, we denote by $\Delta h_{k_i}(x_i, y_i) = h_{k_i}(y_i) - h_{k_i}(x_i)$. Then we can write
\begin{eqnarray*}
H_d(y) &=& \prod_{i=1}^j h_{k_i}(y_i) = \prod_{i \in [j]} \big( h_{k_i}(x_i) + \Delta h_{k_i}(x_i, y_i) \big) \\
       &=&  H_d(x) + \sum_{\emptyset \neq I \subseteq [j]} \prod_{i \in I}\Delta h_{k_i}(x_i, y_i) \cdot  \prod_{i \in [j]\setminus I} h_{k_i}(x_i) .\\
\end{eqnarray*}
We will need the following claim whose proof lies in the next subsection:
\begin{claim} \label{cl:uv-hermite}
Let $h_d(x)$ be a degree-$d$ univariate Hermite polynomial. Then
$$ \| \Delta h_d(x,y) \|_2 =    \|h_d(x)-h_d(y) \|_2  \leq  8 \sqrt{d} \cdot \sqrt{\eps}.$$
\end{claim}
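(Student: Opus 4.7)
My plan is to reduce the claim to a clean one-line computation using the fact that the (normalized) univariate Hermite polynomials are eigenfunctions of the Ornstein--Uhlenbeck noise operator. Expanding the square and exploiting linearity of expectation gives
\[
\|h_d(x)-h_d(y)\|_2^2 = \E[h_d(x)^2] - 2\,\E[h_d(x)h_d(y)] + \E[h_d(y)^2].
\]
The first step is to observe that $x$ and $y=\alpha x+\beta z$ are each marginally $\N(0,1)$, since $\alpha^2+\beta^2 = (1-\eps)^2 + (2\eps-\eps^2) = 1$. By the orthonormality convention the paper is using for the Hermite basis, the two outer terms are both equal to $1$. So the whole calculation reduces to controlling the cross term $\E[h_d(x)h_d(y)]$.

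The key step is then to identify the pair $(x,y)$ as a jointly Gaussian pair with correlation exactly $\rho := \alpha = 1-\eps$. Conditioning on $x$, the inner expectation $\E[h_d(y)\mid x]$ is precisely $(T_\rho h_d)(x)$, where $T_\rho$ is the Ornstein--Uhlenbeck operator. Since $h_d$ is an eigenfunction of $T_\rho$ with eigenvalue $\rho^d$ (a standard fact about the orthonormal Hermite basis, essentially Mehler's formula), we obtain
\[
\E[h_d(x)h_d(y)] = \E[h_d(x)\cdot (T_\rho h_d)(x)] = \rho^d\,\E[h_d(x)^2] = (1-\eps)^d.
\]
Plugging back in yields $\|h_d(x)-h_d(y)\|_2^2 = 2\bigl(1-(1-\eps)^d\bigr)$.

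The final step is just Bernoulli's inequality: $1-(1-\eps)^d \leq d\eps$ for all $\eps \in [0,1]$ and $d\geq 1$. This gives $\|h_d(x)-h_d(y)\|_2 \leq \sqrt{2d\eps} < 8\sqrt{d}\cdot\sqrt{\eps}$, proving the claim with a great deal of slack (the $8$ is very loose, but matches the constant the authors chose to state). I do not anticipate any real obstacle here; the only subtle point is making sure one invokes the normalized Hermite polynomials so that the eigenvalue identity $T_\rho h_d = \rho^d h_d$ applies with no extra combinatorial factors. If one preferred to avoid citing the eigenfunction property, the identity $\E[h_d(x)h_d(y)] = \rho^d$ can alternatively be derived directly from the generating-function definition of the Hermite polynomials and the moment generating function of a bivariate Gaussian.
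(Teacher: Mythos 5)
Your proof is correct, and it takes a genuinely different route from the one in the paper. The paper proves the claim by brute force: it splits $\Delta h_d(x,y)$ into a ``scaling'' piece $q_1(x)=h_d(x)-h_d((1-\eps)x)$ and a ``fresh noise'' piece $q_2(x,z)=h_d((1-\eps)x)-h_d((1-\eps)x+\beta z)$, Taylor-expands each in the perturbation, controls the terms $\|h_d^{(i)}(x)x^i\|_2$ and $\|h_d^{(i)}(x')z^i\|_2$ via the derivative identity for Hermite polynomials, hypercontractivity and Gaussian moment bounds, and only then assembles the estimate $8\sqrt{d}\sqrt{\eps}$, using the hypothesis $\eps\le 2^{-2d-1}$ along the way. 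You instead observe that $(x,y)$ is a standard bivariate Gaussian pair with correlation $\rho=1-\eps$ and invoke the eigenfunction property $T_\rho h_d=\rho^d h_d$ of the orthonormal Hermite basis (Mehler), which gives the exact identity
\[
\|h_d(x)-h_d(y)\|_2^2 \;=\; 2\bigl(1-(1-\eps)^d\bigr)\;\le\;2d\eps ,
\]
so the claim follows with constant $\sqrt{2}$ rather than $8$, with no restriction on $\eps$, and with no Taylor or hypercontractivity machinery. What your approach buys is substantial: the same one-line computation applied to a multivariate Hermite polynomial $H_S$ (an eigenfunction with eigenvalue $\rho^{|S|}$) immediately yields Claim~\ref{clm:mv-hermite}, and summing over the Hermite expansion of $p$ gives $\|q\|_2^2=\sum_S \widehat{p}(S)^2\cdot 2\bigl(1-\rho^{|S|}\bigr)\le 2d\eps$, i.e.\ Claim~\ref{qnorm} with a sharper constant, collapsing the paper's entire Section~\ref{sec:qnorm} into a short exact calculation. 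The paper's longer argument does not appear to buy anything that your argument loses; its only requirement beyond yours is the orthonormal normalization of the Hermite basis, which the paper already fixes in Appendix~\ref{ap:hermite}, and your fallback derivation of $\E[h_d(x)h_d(y)]=\rho^d$ via the generating function is a legitimate way to avoid citing the eigenfunction property as a black box.
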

The triangle inequality for norms combined with independence now yields
\[ \|H_d(x) - H_d(y) \|_2 \leq   \sum_{\emptyset \neq I \subseteq [j]} \prod_{i \in I} \| \Delta h_{k_i}(x_i, y_i) \|_2 \cdot  \prod_{i \in [j]\setminus I} \|  h_{k_i}(x_i) \|_2 \]
Noting that $\| h_{k_i}(x_i) \|_2 = 1$ for all $i$, and $\| \Delta h_{k_i}(x_i, y_i) \|_2 \leq 8 \sqrt{k_i} \cdot \sqrt{\eps}$
by Claim~\ref{cl:uv-hermite} above, we obtain

\begin{eqnarray*}
\|H_d(x) - H_d(y) \|_2 &\leq&   \sum_{\emptyset \neq I \subseteq [j]} \prod_{i \in I} \| \Delta h_{k_i}(x_i, y_i) \|_2 \\
&=& \sum_{i=1}^j \| \Delta h_{k_i}(x_i, y_i) \|_2 +
\sum_{I \subseteq [j], |I| \geq 2} \prod_{i \in I} \| \Delta h_{k_i}(x_i,y_i) \|_2 \\
&\leq&  8 \big(\sum_{i=1}^j \sqrt{k_i}\big) \cdot \sqrt{\eps} +  \sum_{|I|=2}^d \binom{d}{|I|} (8 \sqrt{d} \sqrt{\eps})^{|I|} \\
&\leq& 8 d \cdot \sqrt{\eps} + (1+8 \sqrt{d} \sqrt{\eps})^d - (1+8d^{3/2} \sqrt{\eps})\\
&\leq& O( d \cdot \sqrt{\eps})
\end{eqnarray*}
where the last inequality follows from the elementary bound $(1+8\sqrt{d}\sqrt{\eps})^d \leq 1+8d^{3/2}\sqrt{\eps}+O(d^3 \eps)$
and the fact that $\eps \leq 2^{-2d-1}$. This completes the proof of Claim~\ref{clm:mv-hermite}.

\subsubsection{Proof of Claim~\ref{cl:uv-hermite}}

We will need a crucial lemma:

\begin{lemma} \label{lemma:herm-eps}
For all $k \in [d]$ we have $$\|h_k (x) - h_k(x-\eps x) \|_2 \leq 3k \eps.$$
\end{lemma}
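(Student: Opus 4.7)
The plan is to decompose $h_k((1-\eps)x)$ in the orthonormal Hermite basis $\{h_j(x)\}_{j=0}^{k}$ and then bound $\|h_k(x) - h_k((1-\eps)x)\|_2$ via Parseval. To obtain the expansion coefficients in closed form, I would exploit the generating-function identity $e^{xt - t^2/2} = \sum_{k\geq 0} h_k(x)\, t^k/\sqrt{k!}$. Computing the bivariate expectation
\begin{align*}
\E_x\!\left[e^{(1-\eps)xt - t^2/2}\,e^{xs - s^2/2}\right] = \exp\!\left(\tfrac{((1-\eps)^2 - 1)\,t^2}{2} + (1-\eps)\,ts\right)
\end{align*}
and extracting the coefficient of $t^k s^j/\sqrt{k!j!}$ yields $\E_x[h_k((1-\eps)x)\,h_j(x)] = 0$ unless $j = k-2m$ for some $m \geq 0$, in which case it equals $\sqrt{k!/(k-2m)!}\cdot(-\mu)^m/m!\cdot (1-\eps)^{k-2m}$, where $\mu \eqdef \eps(1-\eps/2)\leq \eps$. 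In particular the $h_k$-coefficient of $h_k((1-\eps)x)$ is $(1-\eps)^k$, as expected from the leading-term computation.

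Parseval then expresses $\|h_k(x) - h_k((1-\eps)x)\|_2^2$ as the ``diagonal'' piece $(1-(1-\eps)^k)^2$ plus a tail sum over $m \geq 1$ of $\binom{k}{2m}\binom{2m}{m}\mu^{2m}(1-\eps)^{2(k-2m)}$. The diagonal is at most $(k\eps)^2$ via the elementary inequality $(1-\eps)^k \geq 1-k\eps$. For the tail I would use $\mu \leq \eps$, $(1-\eps) \leq 1$, and $\binom{2m}{m}\leq 4^m$ to bound it by $\sum_{m\geq 1}\binom{k}{2m}(2\eps)^{2m}$, which by the standard identity $\sum_m \binom{k}{2m}y^{2m} = \tfrac12[(1+y)^k + (1-y)^k]$ equals $\tfrac12[(1+2\eps)^k+(1-2\eps)^k] - 1 \leq \cosh(2k\eps) - 1$. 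The standing hypothesis $\eps \leq 2^{-2d-1}$ together with $k \leq d$ forces $2k\eps \leq 1$, and in that range $\cosh(y) - 1 \leq y^2$; so the tail is at most $4k^2\eps^2$. Combining, $\|h_k(x) - h_k((1-\eps)x)\|_2^2 \leq k^2\eps^2 + 4k^2\eps^2 = 5k^2\eps^2$, and taking square roots yields $\sqrt{5}\,k\eps < 3k\eps$.

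I do not anticipate significant conceptual obstacles. The only step that requires care is the clean coefficient extraction from the bivariate generating function, but this is a routine Hermite computation. A more pedestrian alternative would be to write $h_k(x) - h_k((1-\eps)x) = \sqrt{k}\int_{1-\eps}^1 x\,h_{k-1}(tx)\,dt$, apply Minkowski, and bound $\|x\,h_{k-1}(tx)\|_2$ using the recurrence $y\,h_{k-1}(y) = \sqrt{k}\,h_k(y) + \sqrt{k-1}\,h_{k-2}(y)$ together with the (easily checked) estimate $\|h_j(tx)\|_2 \leq O(1)$ for $t \in [1-\eps,1]$; this gives the same bound up to constants but is slightly messier than the generating-function route, which lands exactly on the constant $3$ claimed.
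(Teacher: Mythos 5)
Your proposal is correct, and it takes a genuinely different route from the paper. The paper's proof Taylor-expands around $x$, writing $h_k(x)-h_k(x-\eps x) = -\sum_{i=1}^k h_k^{(i)}(x)(-\eps x)^i/i!$, and then bounds each term $\|h_k^{(i)}(x)x^i\|_2$ via the identity $(h_k^{(i)}(x))^2 = i!\binom{k}{i}h_{k-i}(x)^2$, Cauchy--Schwarz, $(4,2)$-hypercontractivity for $\|h_{k-i}\|_4$, and explicit Gaussian moments; the terms with $i\geq 2$ are then summed as a geometric series, which is where the standing assumption $\eps \leq 2^{-2d-1}\leq 2^{-2k}$ is really needed (the intermediate bounds like $\|h_k^{(i)}(x)x^i\|_2 \leq 2^{3k/2}i!$ are exponentially lossy and must be absorbed by the exponentially small $\eps$). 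You instead compute the \emph{exact} Hermite expansion of $h_k((1-\eps)x)$ via the bivariate generating-function (Mehler-type) computation and finish with Parseval and elementary binomial estimates; I checked your coefficient formula $\E_x[h_k((1-\eps)x)h_{k-2m}(x)] = \sqrt{k!/(k-2m)!}\,(-\mu)^m(1-\eps)^{k-2m}/m!$ with $\mu=\eps(1-\eps/2)$, the identification $\frac{k!}{(k-2m)!(m!)^2}=\binom{k}{2m}\binom{2m}{m}$, and the final arithmetic ($\sqrt{5}\,k\eps < 3k\eps$), and all are correct. What each approach buys: yours avoids hypercontractivity entirely, is exact rather than term-by-term lossy, and only needs $2k\eps\leq 1$ (comfortably implied by the standing hypothesis, but vastly weaker than $\eps\leq 2^{-2k}$), so it is both sharper and valid on a wider range of $\eps$; the paper's argument is cruder but requires only coarse derivative and moment bounds and no closed-form expansion. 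Your fallback route via $h_k(x)-h_k((1-\eps)x)=\sqrt{k}\int_{1-\eps}^1 x\,h_{k-1}(tx)\,dt$, Minkowski, and the three-term recurrence is also sound and is in spirit closer to a first-order version of the paper's Taylor argument, again without hypercontractivity.
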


\begin{proof}
Note that $h_k(x-\eps x)$ is a degree-$k$ polynomial in $x$. Hence, by Taylor's theorem we deduce
\[
h_k(x)-h_k(x-\eps x) = - \sum_{i=1}^k h_k^{(i)}(x) (-\eps x)^i/i!   .
\]
The triangle inequality for norms now yields
\[
\|  h_k(x)-h_k(x-\eps x) \|_2 \leq \sum_{i=1}^k (\eps^{i}/i!) \cdot \| h_k^{(i)}(x) x^{i}\|_2 .
\]
It thus suffices to bound the term $\| h_k^{(i)}(x) x^{i} \|_2$. Recalling that $(h_k^{(i)}(x))^2 = i! \binom{k}{i}
(h_{k-i}(x))^2$ we have
\[
\E_x [(h_k^{(i)}(x))^2 x^{2i}] =  i! \binom{k}{i} \cdot \E_x [h_{k-i}^2(x) x^{2i}]  .
\]
For $i=1$, using the well-known relation
\[ \sqrt{k} h_k(x) + \sqrt{k-1}h_{k-2}(x) = xh_{k-1}(x) \]
and the orthonormality of the $h_i$'s, an easy calculation gives $\E_x[ h^2_{k-1}(x) x^2] = 2k-1$; hence,
$$\| h'_k(x) x \|_2
\leq \sqrt{2}k.$$ For $i>1$, by Cauchy-Schwartz we get
$$\E_x [h_{k-i}^2(x) x^{2i}] \leq \sqrt{\E_x [h_{k-i}^4(x)] \cdot \E_x [x^{4i}]}.$$
We now proceed to bound the RHS. By hypercontractivity, the first term can be bounded as follows
\[ \| h_{k-i} \|_4^2 \leq 3^{k-i} \| h_{k-i} \|_2^2 = 3^{k-i} .\]
For the second term we recall that, for $x \sim \N$, we have $\E_x [x^{4i}] =\frac{(4i)!}{2^{2i} (2i)!}$. Using the elementary
inequality $(2j)!/j! < 2^{2j} j!$ we conclude
\[ \E_x [h_{k-i}^2(x) x^{2i}] \leq 3^{k-i} \cdot 2^{i} \sqrt{(2i)!} \leq 3^k \cdot (4/3)^i \cdot i! \leq 4^k i! \]
hence, $$ \| h_k^{(i)}(x) x^{i}\|_2 \leq \sqrt{\binom{k}{i}} 2^k i! \leq 2^{3k/2} \cdot i! .$$

Therefore,
\[
\|h_k(x)-h_k(x-\eps x)\|_2 \leq \sqrt{2}k \cdot \eps + \eps \cdot 2^{3k/2} \cdot \sum_{j=1}^{k-1} \eps^{j}
           \leq 3k \cdot \eps
\]
where we used the fact $\eps \leq 2^{-2d} \leq 2^{-2k}$ which yields $\sum_{j=1}^{k-1} \eps^{j} \leq \sum_{j=1}^{\infty}
2^{-2kj} \leq 2^{-2k+1}$. The proof of the lemma is now complete.
\end{proof}

We now proceed to complete the proof of our claim. Let us write
\[
\Delta h_d(x,y) = h_d(x)-h_d(y) = q_1 (x) + q_2(x,z)
\]
where $q_1(x) = h_d(x) - h_d(x-\eps x)$ and $q_2(x,z) =h_d(x-\eps x) - h_d(x-\eps x + \beta z)$.

By the triangle inequality for norms it holds that
\[
\| \Delta h_d(x,y)\|_2 \leq \| q_1 \|_2 + \| q_2 \|_2
\]
hence it suffices to bound each of the terms in the RHS.

By Lemma~\ref{lemma:herm-eps} it follows that
\[ \| q_1 \|_2 \leq 3d \eps .\]
For the second term, we will show that
\[ \|q_2\|_2 \leq 5 \sqrt{d} \cdot \sqrt{\eps}.\]
Note that this suffices to complete the proof, since by our assumption on $\eps$, we have $d \cdot \eps<1$, which implies that
\[ \|\Delta h_d(x,y)\|_2 \leq 8 \sqrt{d} \sqrt{\eps}\]
as desired.

Now observe that $h_d(x-\eps x + \beta z)$ is a degree-$d$ polynomial in $x, z$. Let us denote $x' = (1-\eps)x$. By Taylor's
theorem we can write
\[ h_d(x'+\beta z) = h_d(x') + \sum_{i=1}^d (\beta^i/i!) h_d^{(i)}(x')z^i \]
or
\[
q_2(x, z) = - \sum_{i=1}^d (\beta^i/i!) h_d^{(i)}(x')z^i.
\]
By triangle inequality
\[
\|q_2 \|_2 \leq \sum_{i=1}^d (\beta^i/i!) \| h_d^{(i)}(x')  z^i \|_2
\]
For the terms in the RHS by independence we get
\[
\| h_d^{(i)}(x')  z^i \|_2 = \|h_d^{(i)}(x')\|_2  \cdot \|z^i \|_2
\]
For the second term above we have that $\|z^i \|_2 \leq 2^{i/2} \cdot \sqrt{i!}$.

Recalling that $h_d^{(i)}(x')^2 = i! \binom{d}{i} h_{d-i}^2(x')$ for the first term we have $$\|h_d^{(i)}(x')\|_2 = \sqrt{i!}
\sqrt{\binom{d}{i}} \cdot \|h_{d-i}(x')\|_2.$$ Since $x' = x-\eps x$ we apply Lemma~\ref{lemma:herm-eps} for $k=d-i$ and get
\[ \|h_{d-i}(x')\|_2 \leq \|h_{d-i}(x)\|_2 + 3(d-i)\eps \leq 2 \]
where the second inequality uses the assumption on the range of $\eps$.

Therefore,
\begin{eqnarray*}
\|q_2 \|_2 &\leq& \sum_{i=1}^d 2^{i/2+1} \sqrt{\binom{d}{i}} \beta^i \\
           &\leq& 4\sqrt{d \eps} + \beta \cdot \sum_{i=2}^d 2^{i/2+1} \sqrt{\binom{d}{i}} \beta^{i-1}\\
           &\leq& 4\sqrt{d \eps} + \sqrt{2\eps} \cdot \sum_{i=2}^d 2^{i/2+1} \sqrt{\binom{d}{i}} 2^{-d(i-1)} \\
           &\leq& 5\sqrt{d \eps}
\end{eqnarray*}

This completes the proof of Claim~\ref{cl:uv-hermite}.

\section{Boolean Average Sensitivity}
\label{sec:booleanas}

Let $\AS(n,d)$ denote the maximum possible average sensitivity of
any degree-$d$ PTF over $n$ Boolean variables. In this section we
prove the claimed bound in \theoremref{thm:boolas}:
\begin{equation}
\label{eqn:booleanas} \AS(n,d) \leq 2^{O(d)} \cdot \log n \cdot n^{1-1/(4d+2)}
\end{equation}

For $d=1$ (linear threshold functions) it is well known that
$\AS(n,1)=2^{-n}{n\choose n/2} = \Theta(\sqrt{n})$.  Also, notice
that the RHS of (\ref{eqn:booleanas}) is larger than $n$ for
$d=\omega(\sqrt{\log n})$, yielding a trivial bound of $\AS(n,d)\leq
n$.  Therefore throughout this section we shall assume $d$ satisfies
$2 \leq d \leq O(\sqrt{\log n})$.

\subsection{Overview of proof}
\label{sec:booloverview}

The high-level approach to proving \theoremref{thm:boolas} is a
combination of a case analysis and a recursive bound.

For certain types of PTFs (``$\tau$-regular'' PTFs; see Section \ref{sec:regprelim} for a precise definition) we argue directly
that the average sensitivity is small, using arguments similar to the Gaussian case together with the invariance principle.  In
particular, we show:

\begin{claim}
\label{claim:regularas} Suppose $f=\sgn(p)$ is a $\tau$-regular degree-$d$ PTF where $\tau\eqdef n^{-(4d+1)/(4d+2)}$. Then,
$$\AS(f) \leq O(d\cdot n^{1-1/(4d+2)})$$
\end{claim}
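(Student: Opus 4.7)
The plan is a Boolean analogue of the argument used for \theoremref{thm:gaussas}: bound $\Inf_i(f)$ coordinate-by-coordinate using a decomposition of $p$ in $x_i$, then sum using convexity. The invariance principle is used to import the Gaussian anti-concentration and tail bounds into the Boolean setting.

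Normalize $p$ so that $\Var(p) = \sum_{|S|>0}\widehat{p}(S)^2 = 1$. For each coordinate $i$, write $p(x) = p^{(i)}_0 + x_i\cdot p^{(i)}_1$, where $p^{(i)}_0$ and $p^{(i)}_1$ are multilinear polynomials depending only on the coordinates $\{x_j : j\neq i\}$, and note that $p^{(i)}_1$ has degree at most $d-1$ with $\|p^{(i)}_1\|_2^2 = \Inf_i(p) \eqdef \tau_i \leq \tau$. If $f(x)\neq f(x^{\oplus i})$ then $p(x)$ and $p(x^{\oplus i}) = p(x) - 2x_i\,p^{(i)}_1$ have opposite signs, so since they differ by $2|p^{(i)}_1|$ we must have $|p(x)| \leq 2|p^{(i)}_1|$. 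Hence
$$\Inf_i(f) \leq \Pr_x\bigl[|p(x)| \leq 2|p^{(i)}_1|\bigr].$$

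For a threshold $t$ chosen below, split according to whether $|p^{(i)}_1|$ exceeds $t\sqrt{\tau_i}$:
$$\Inf_i(f) \leq \Pr\bigl[|p^{(i)}_1| > t\sqrt{\tau_i}\bigr] + \Pr\bigl[|p(x)| \leq 2t\sqrt{\tau_i}\bigr].$$
The first summand is at most $\exp(-\Omega(t^{2/(d-1)}))$ by the degree-$(d-1)$ Chernoff bound (\theoremref{thm:deg-d-chernoff}), valid once $t > e^{d-1}$. For the second, I apply \theoremref{thm:invariance} at the two endpoints $\pm 2t\sqrt{\tau_i}$, replacing $p(x)$ by $p(\mathcal{G})$ at an additive cost of $O(d\tau^{1/(4d+1)})$; then \theoremref{thm:carberywright} applied to $p - \widehat{p}(\emptyset)$ (whose $L_2$-norm is at least $\sqrt{\Var(p)} = 1$) bounds the Gaussian probability by $O\bigl(d(t\sqrt{\tau_i})^{1/d}\bigr)$. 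Consequently,
$$\Inf_i(f) \leq O\bigl(d\tau^{1/(4d+1)}\bigr) + O\bigl(d\,t^{1/d}\,\tau_i^{1/(2d)}\bigr) + \exp\bigl(-\Omega(t^{2/(d-1)})\bigr).$$

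Summing over $i\in[n]$: the first term contributes $n\cdot O(d\tau^{1/(4d+1)}) = O(d\cdot n^{1-1/(4d+2)})$ for $\tau = n^{-(4d+1)/(4d+2)}$. The second term is handled by combining $\sum_i \tau_i \leq d$ (from $\sum_i \Inf_i(p) = \sum_{|S|>0}|S|\widehat{p}(S)^2 \leq d$) with the concavity of $x\mapsto x^{1/(2d)}$ to give $\sum_i \tau_i^{1/(2d)} \leq d^{1/(2d)} n^{1-1/(2d)}$, so the anti-concentration sum is $O\bigl(d^{1+1/(2d)}t^{1/d} n^{1-1/(2d)}\bigr)$. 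Choosing $t = \Theta\bigl((\log n/d)^{(d-1)/2}\bigr)$ (well above $e^{d-1}$ since $d \leq O(\sqrt{\log n})$), the per-coordinate tail is at most $n^{-1/(4d+2)}$, so the tail sum is $O(n^{1-1/(4d+2)})$, and $t^{1/d} = O(\sqrt{\log n})$, making the anti-concentration sum $\tilde O(d\sqrt{\log n}\cdot n^{1-1/(2d)})$, which is absorbed since $n^{1-1/(2d)} \ll n^{1-1/(4d+2)}$. Putting everything together yields $\AS(f) \leq O(d\cdot n^{1-1/(4d+2)})$.

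The main obstacle is the delicate calibration of the three error terms by the choice of $\tau$: the dominant contribution comes from the invariance penalty $d\tau^{1/(4d+1)}$ summed over $n$ coordinates, and the prescribed $\tau = n^{-(4d+1)/(4d+2)}$ is precisely the value that balances this sum against the target bound $n^{1-1/(4d+2)}$. A secondary technical point is that \theoremref{thm:invariance} only gives a Kolmogorov (CDF) bound, so controlling the probability that $p$ lies in a short interval around $0$ requires invoking the principle at both endpoints of the interval.
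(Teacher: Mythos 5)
Your proposal is correct and takes essentially the same route as the paper's proof (via \lemmaref{lemma:regular}): bound $\Inf_i(f) \leq \Pr_x[|p(x)| \leq 2|D_i p(x)|]$, handle the anti-concentration term by the invariance principle plus Carbery--Wright and the derivative tail by the degree-$d$ Chernoff bound; the paper merely uses the single uniform threshold $\tau^{1/4}$ (using that regularity gives $\Inf_i(p) \leq d\tau$ for every $i$) in place of your coordinate-dependent threshold and concavity summation, which yields the same bound with less bookkeeping. Two cosmetic repairs: Carbery--Wright should be applied to $p$ itself (whose $L_2$-norm is at least $\sqrt{\Var(p)}=1$), not to $p-\widehat{p}(\emptyset)$, since the event concerns $|p|$ rather than $|p-\widehat{p}(\emptyset)|$; and under the paper's definition of $\tau$-regularity one only gets $\tau_i \leq \tau\cdot\Inf(p) \leq d\tau$ rather than $\tau_i \leq \tau$, which affects nothing beyond constant factors in the invariance cost.
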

\claimref{claim:regularas} follows directly from Lemma \ref{lemma:regular}, which we prove in \sectionref{sec:regular}.

For PTFs that are not $\tau$-regular, we show that there is a
not-too-large value of $k$ (at most $K\eqdef 2d\log n/\tau$), and a
collection of $k$ variables (the variables whose influence in $p$
are largest), such that the following holds: if we consider all
$2^k$ subfunctions of $f$ obtained by fixing the variables in all
possible ways, a ``large'' (at least $1/2^{O(d)}$) fraction of the
restricted functions have low average sensitivity. More precisely,
we show:

\begin{claim} \label{claim:irregas} Let $K \eqdef 2d\log n/\tau$ where
  $\tau\eqdef n^{-(4d+1)/(4d+2)}$. Suppose $f=\sgn(p)$ is a
  degree-$d$ PTF that is not $\tau$-regular. Then for some $1 \leq k
  \leq K,$ there is a set of $k$ variables with the following
  property: for at least a $1/2^{O(d)}$ fraction of all $2^k$
  assignments $\rho$ to those $k$ variables, we have $$\AS(f_\rho)
  \leq O(d\cdot (\log n)^{1/4}\cdot n^{1-1/(4d+2)})$$
\end{claim}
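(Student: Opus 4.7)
The plan is to adapt the critical-index technique of \cite{Servedio:07cc,DGJ+09} from halfspaces to degree-$d$ PTFs: restrict the top high-influence variables of $p$ and reduce to the regular case \claimref{claim:regularas} on a $1/2^{O(d)}$-fraction of restrictions. Assume without loss of generality $\|p\|_2 = 1$ and sort the variables so that $\Inf_1(p) \geq \Inf_2(p) \geq \cdots$. I would define the critical index $k$ as the smallest $k \geq 1$ satisfying $\Inf_{k+1}(p) \leq \tau \cdot \sigma^2_{k+1}$, where $\sigma^2_j \eqdef \sum_{i \geq j} \Inf_i(p)$. Since $f$ is not $\tau$-regular one has $k \geq 1$. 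At each pre-critical index, $\sigma^2_{i+1} < (1-\tau)\sigma^2_i$, so $\sigma^2_{k+1} < d \cdot e^{-\tau k}$; combined with $\sigma_1^2 \leq d$ and truncating once $\sigma^2_{k+1}$ drops below $1/\poly(n)$, this gives $k \leq K = 2d \log n / \tau$.

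With the top $k$ coordinates fixed, I would analyze the $2^k$ restrictions $\rho$ by intersecting a \emph{regularity} event with a \emph{non-degeneracy} event. For regularity, each $\Inf_i(p_\rho)$ with $i>k$ is a nonnegative degree-$2d$ polynomial in $\rho$ with expectation $\Inf_i(p) \leq \tau \cdot \sigma^2_{k+1}$; Markov's inequality plus a union bound over the $\leq n$ surviving coordinates gives that with probability $1 - 2^{-\Omega(d)}$, every $\Inf_i(p_\rho)$ is at most some $\tau''$ that exceeds $\tau$ only by a $(\log n)^{O(1)}$ factor. For non-degeneracy, I would apply the Boolean anti-concentration bound \theoremref{thm:aushas} to a suitable low-degree polynomial in $\rho$: to $\rho \mapsto \|p_\rho^{\geq 1}\|_2^2$ when $p$ has substantial tail variance ($\sum_{S \not\subseteq [k]} \widehat{p}(S)^2 \geq 1/2$), or to the head polynomial $p_{\text{head}}(\rho) = \sum_{S \subseteq [k]} \widehat{p}(S) \rho_S$ otherwise. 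This yields a $1/2^{O(d)}$-fraction of $\rho$ on which \emph{either} $\|p_\rho^{\geq 1}\|_2 \geq 1/2^{O(d)}$, so that $p_\rho / \|p_\rho^{\geq 1}\|_2$ is a correctly normalized $\tau''$-regular polynomial, \emph{or} $|\widehat{p_\rho}(\emptyset)|$ dominates the tail and $\sgn(p_\rho)$ is identically constant with $\AS(f_\rho) = 0$.

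Intersecting the regularity and non-degeneracy events preserves a $1/2^{O(d)}$-fraction of good $\rho$. On each such $\rho$ one either has $\AS(f_\rho) = 0$ trivially, or \claimref{claim:regularas} applied to the normalized $\tau''$-regular polynomial $p_\rho / \|p_\rho^{\geq 1}\|_2$ on $\leq n$ variables gives the bound $\AS(f_\rho) \leq O(d \cdot (\log n)^{1/4} \cdot n^{1-1/(4d+2)})$; the $(\log n)^{1/4}$ factor tracks the $(\tau'')^{1/(4d+1)}$ invariance-principle loss (\theoremref{thm:invariance}) internal to that claim, inflated by the $(\log n)^{O(1)}$ slack in $\tau''$. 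The main obstacle is coordinating three competing parameters — the critical-index bound $k \leq K$, the Markov inflation from $\tau$ to $\tau''$, and the anti-concentration constant $1/2^{O(d)}$ — so that all three conclusions are achieved on the same $1/2^{O(d)}$-fraction of $\rho$ while still producing the $(\log n)^{1/4} \cdot n^{1-1/(4d+2)}$ exponent at the end.
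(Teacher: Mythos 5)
Your overall architecture (sort by influence, restrict the top variables up to a critical index, use \theoremref{thm:aushas} on a polynomial in $\rho$ to get a $1/2^{O(d)}$-fraction of good restrictions, then invoke \claimref{claim:regularas}) matches the paper's, but two of your steps fail as stated. First, the regularity-preservation step: Markov plus a union bound over the $\leq n$ surviving coordinates cannot give ``every $\Inf_i(p_\rho)$ is at most a $(\log n)^{O(1)}$-inflation of $\Inf_i(p)$, except with probability $2^{-\Omega(d)}$.'' Markov gives only $\Pr[\Inf_i(p_\rho) > t\cdot\Inf_i(p)] \leq 1/t$ per coordinate, so a union bound over up to $n$ coordinates forces $t = \Omega(n\cdot 2^{d})$, i.e.\ a $\poly(n)$ inflation of the regularity parameter, which makes the final bound $O(d\cdot n\cdot(\tau'')^{1/(4d+1)})$ trivial (larger than $n$). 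What is actually needed is that $\Inf_\ell(p_\rho)$ is a degree-$2d$ polynomial in $\rho$ whose $2$-norm is at most $3^d\Inf_\ell(p)$ (by $(4,2)$-hypercontractivity), so that the degree-$d$ Chernoff bound (\theoremref{thm:deg-d-chernoff}) gives per-coordinate failure probability $n^{-\omega(1)}$ at inflation $(3\log n)^d$ --- this is exactly the paper's \lemmaref{lem:smalldevinf} and \lemmaref{lem:inf2norm}, and it is not replaceable by Markov.

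Second, your tail-weight dichotomy does not cover the middle regime, and the ``otherwise the function is constant'' branch is false there. If the critical index is $k\leq K$ and the tail weight $W=\sum_{S\not\subseteq[k]}\widehat{p}(S)^2$ lies strictly between roughly $n^{-d}$ and $1/2$, your case (ii) applies, but a lower bound $|\widehat{p_\rho}(\emptyset)|\geq 2^{-O(d)}$ does not dominate $\sum_{S\neq\emptyset}|\widehat{p_\rho}(S)|$ (by Cauchy--Schwarz that sum is only controlled by $n^{d/2}\sqrt{W}$, so constancy needs $W\lesssim n^{-d}$), and $\sgn(p_\rho)$ is in general not constant; meanwhile your case (i) lower bound $\|p_\rho^{\geq 1}\|_2^2\geq 1/2$ is unavailable. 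The paper's split is on the critical index, not the tail weight: when $k\leq K$ it applies \theoremref{thm:aushas} to the tail-influence polynomial $A(\rho)=\sum_{j>k}\Inf_j(p_\rho)$ and compares it to its \emph{own expectation} $\sum_{j>k}\Inf_j(p)$, which yields $\tau'$-regularity with $\tau'=(3\log n)^d\tau$ whatever the size of the tail weight; the constancy argument is reserved for critical index $>K$, where restricting $K=2d\log n/\tau$ variables forces the tail influence down to $d/n^{2d}$ by the geometric-decay lemma (\lemmaref{lemma:expshrink}), and only then does Cauchy--Schwarz (again using the hypercontractive influence-inflation bound, not Markov) give $\sum_{S\neq\emptyset}|\widehat{p_\rho}(S)|\leq n^{-\Omega(1)}\ll 2^{-O(d)}$. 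Relatedly, your assertion that the critical index itself is at most $K$ is not correct; the truncation-at-$K$ move is fine, but it lands you precisely in the regime where the restricted tail need not be regular and the constancy route, with the ingredients above, is required.
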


The proof of \claimref{claim:irregas} is given in Section \ref{sec:pfofirreg}. We do this by generalizing the ``critical
index'' case analysis from \cite{Servedio:07cc}.  We define a notion of the $\tau$-critical index of a degree-$d$ polynomial; a
$\tau$-regular polynomial $p$ is one for which the $\tau$-critical index is 0. If the $\tau$-critical index of $p$ is some
value $k \leq 2d\log n/\tau$, we restrict the $k$ largest-influence variables (see Section \ref{sec:small}).  If the
$\tau$-critical index is larger than $2d\log n/\tau$, we restrict the $k=2d\log n/\tau$ largest-influence variables in $p$ (see
\sectionref{sec:large}).

\subsubsection{Proof of main result (\theoremref{thm:boolas})
  assuming \claimref{claim:regularas} and \claimref{claim:irregas}}
\label{sec:recurse}

Given these two claims it is not difficult to obtain the final result. In \claimref{claim:irregas}, we note that the $k$
restricted variables may each contribute at most $1$ to the average sensitivity of $f$ (recall that average sensitivity is
equal to the sum of influences of each variable), and that the total influence of the remaining variables on $f$ is equal to
the expected average sensitivity of $f_\rho$, where the expectation is taken over all $2^k$ restrictions $\rho$.  Since each
function $f_{\rho}$ is itself a degree-$d$ PTF over at most $n$ variables, we have the following recursive constraint on
$\AS(n,d)$:
\begin{eqnarray*}
  \AS(n,d) \leq \max\{&&O(d\cdot n^{1-1/(4d+2)}),\\
  &&\max_{1 \leq k \leq K, \ \ 1/2^{O(d)} \leq \alpha \leq 1} \{k +
  \alpha \cdot O(d\cdot (\log n)^{1/4}\cdot n^{1-1/(4d+2)}) +
  (1-\alpha)\AS(n,d)\}\}.
\end{eqnarray*}
It is easy to see that the maximum possible value of $\AS(n,d)$ subject to the above constraint is at most the maximum possible
value of $\AS'(n,d)$ that satisfies the following weaker constraint:
\[
\AS'(n,d) \leq K + \left(1-\frac{1} {2^{O(d)}}\right)\AS'(n,d)
\]
which is satisfied by $\AS'(n,d) \leq 2^{O(d)} \cdot \log n \cdot n^{1-1/(4d+2)}$.

\subsection{Regularity and the critical index of polynomials}
\label{sec:regprelim}

In \cite{Servedio:07cc} a notion of the ``critical index'' of a linear form was defined and subsequently used in
\cite{OdonnellServedio:08,DiakonikolasServedio:09,DGJ+09}. We now give a generalization of the critical index notion for
polynomials.

\begin{definition}
Let $p: \bits^n \to \R$ and $\tau>0$. Assume the variables are ordered such that $\Inf_i(f)\geq \Inf_{i+1}(f)$ for all $i \in
[n-1]$.  The {\em $\tau$-critical index} of $f$ is the least $i$ such that:
\begin{equation}
\frac{\Inf_{i+1}(p)}{\sum_{j=i+1}^n\Inf_j(p)}\leq \tau. \label{eq:reg}
\end{equation}
If (\ref{eq:reg}) does not hold for any $i$ we say that the $\tau$-critical index of $p$ is $+ \infty.$ If $p$ is has
$\tau$-critical index 0, we say that $p$ is {\em $\tau$-regular}.
\end{definition}

The following simple lemma will be useful for us. It says that the total influence $\sum_{i=j+1}^n \Inf_i (p)$ goes down
exponentially as a function of $j$ prior to the critical index:

\begin{lemma} \label{lemma:expshrink}
Let $p: \bits^n \to \R$ and $\tau>0$. Let $k$ be the $\tau$-critical index of $p$. For $0\leq j \leq k$ we have
\[
\sum_{i=j+1}^n \Inf_i (p) \leq (1 - \tau)^j \cdot \Inf(p).
\]
\end{lemma}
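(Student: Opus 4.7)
The proof is a direct induction on $j$, using the defining inequality of the critical index. First I would introduce the shorthand $S_j := \sum_{i=j+1}^n \Inf_i(p)$, so that $S_0 = \Inf(p)$ and the goal is to show $S_j \leq (1-\tau)^j \Inf(p)$ for all $0 \leq j \leq k$.

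Next I would unpack the critical-index definition: since $k$ is the \emph{least} index at which (\ref{eq:reg}) holds, the reverse strict inequality must hold at every smaller index. That is, for every $0 \leq i \leq k-1$,
\[
\Inf_{i+1}(p) > \tau \cdot \sum_{j'=i+1}^n \Inf_{j'}(p) = \tau \cdot S_i.
\]
Rearranging, $S_{i+1} = S_i - \Inf_{i+1}(p) < (1-\tau)\, S_i$, which is a one-step contraction valid on the range $0 \leq i \leq k-1$.

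With this recursion in hand, the induction is immediate. The base case $j=0$ is the identity $S_0 = \Inf(p) = (1-\tau)^0 \Inf(p)$. For the inductive step, suppose $S_j \leq (1-\tau)^j \Inf(p)$ and $j+1 \leq k$; then the recursion applies at index $i = j$, giving
\[
S_{j+1} < (1-\tau)\, S_j \leq (1-\tau)^{j+1} \Inf(p),
\]
which completes the induction. The bound holds throughout $0 \leq j \leq k$ because the recursion is valid on exactly this range of indices.

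There is no real obstacle here; the lemma is essentially a restatement of the critical-index definition in a cumulative form. The only point worth being careful about is aligning the index ranges: the recursion is available for $i \leq k-1$, which is precisely enough to propagate the bound up through $j = k$.
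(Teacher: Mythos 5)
Your proof is correct and follows essentially the same route as the paper: the minimality of the critical index gives the one-step contraction $\sum_{i=j+1}^n \Inf_i(p) \leq (1-\tau)\sum_{i=j}^n \Inf_i(p)$ for each $j \leq k$, and iterating (inducting) yields the bound. Your write-up is just a slightly more explicit version of the paper's argument, with the index-range bookkeeping spelled out.
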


\begin{proof}
The lemma trivially holds for $j=0.$  In general, since $j$ is at most $k$, we have that
\[
\Inf_j(p) \geq \tau \cdot \sum_{i=j}^n \Inf_i(p),
\]
or equivalently
\[
\sum_{i=j+1}^n \Inf_{i}(p) \leq (1 - \tau) \cdot \sum_{i=j}^n \Inf_{i}(p)\] which yields the claimed bound.
\end{proof}

Let $p: \bits^n \to \R$ be a degree-$d$ polynomial. We note here that the total influence of $p$ is within a factor of $d$ of
the sum of squares of the non-constant coefficients of $p$:
\[ \sum_{S \neq \emptyset} \widehat{p}(S)^2 \leq \sum_{i=1}^n\sum_{S\ni i}
\widehat{p}(S)^2 =\sum_{i=1}^n \Inf_i(p)= \sum_{S\subseteq  [n]} |S|\cdot \widehat{p}(S)^2 \leq d\sum_{S \neq
\emptyset}\widehat{p}(S)^2,
\] where the final inequality holds since $\widehat{p}(S) \neq 0$
only for sets $|S| \leq d.$

\subsection{Restrictions and the influences of variables in polynomials}

Let $p: \bits^n \to \R$ be a degree-$d$ polynomial. The goal of this section is to understand what happens to the influences of
a variable $x_\ell$, $\ell > k$, when we do a random restriction to variables $x_1,\dots,x_k.$

We start with the following elementary claim:
\begin{claim}
Let $\rho$ be a randomly chosen assignment to the variables $x_1,\dots,x_k$.  Fix any $S \subseteq \{k+1,\dots,n\}$. Then for
any polynomial $p: \bits^n \to \R$ we have

\[
\widehat{p_\rho}(S) = \sum_{T \subseteq [k]} \widehat{p}(S \cup T)\rho_T,\]

and so we have
\begin{equation}
\E_\rho[\widehat{p_\rho}(S)^2] = \sum_{T \subseteq [k]} \widehat{p}(S \cup T)^2. \label{eq:eprhosquare}
\end{equation}
\end{claim}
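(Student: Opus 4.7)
The plan is to prove both parts by a direct Fourier-expansion computation, which is quite routine. First I would expand $p$ in its multilinear Fourier representation $p(x) = \sum_{U \subseteq [n]} \widehat{p}(U)\, x_U$, and then partition each subset $U$ according to how it splits across the fixed and free coordinates: write $U = T \cup S$ where $T = U \cap [k]$ and $S = U \cap \{k+1,\dots,n\}$. Since $x_U = x_T \cdot x_S$ for disjoint $T$ and $S$, after substituting $x_i = \rho_i$ for $i \in [k]$, the monomial $x_U$ becomes $\rho_T \cdot x_S$. Grouping the resulting sum by $S$ gives
\[
p_\rho(x) \;=\; \sum_{S \subseteq \{k+1,\dots,n\}} \Bigl(\,\sum_{T \subseteq [k]} \widehat{p}(S \cup T)\,\rho_T\Bigr)\, x_S.
\]
Since $p_\rho$ depends only on the free variables, and the characters $\{x_S\}_{S \subseteq \{k+1,\dots,n\}}$ form an orthonormal basis for functions on $\{-1,1\}^{n-k}$, the inner expression is precisely $\widehat{p_\rho}(S)$, establishing the first identity.

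Next, to deduce the second identity, I would square the first:
\[
\widehat{p_\rho}(S)^2 \;=\; \sum_{T,T' \subseteq [k]} \widehat{p}(S \cup T)\,\widehat{p}(S \cup T')\,\rho_T\,\rho_{T'},
\]
and then take expectation over the uniform choice of $\rho \in \{-1,1\}^k$. The key fact is $\rho_T \cdot \rho_{T'} = \rho_{T \triangle T'}$, and $\mathbb{E}_\rho[\rho_{T \triangle T'}] = \mathbf{1}[T = T']$ (since a nonempty character has zero mean under the uniform distribution). Therefore all cross terms vanish and only the diagonal $T = T'$ survives, which gives $\mathbb{E}_\rho[\widehat{p_\rho}(S)^2] = \sum_{T \subseteq [k]} \widehat{p}(S \cup T)^2$, as claimed.

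There is no real obstacle here; both steps are purely mechanical consequences of multilinearity and the orthonormality of the Fourier basis. The only thing to be careful about is bookkeeping with disjoint unions (ensuring $T \subseteq [k]$ and $S \subseteq \{k+1,\dots,n\}$ are disjoint, which they are by construction), so that $x_{S \cup T} = x_T \cdot x_S$ without any squared variables appearing. No reference to hypercontractivity, anti-concentration, or anything beyond linearity of expectation and basic Fourier orthogonality is required.
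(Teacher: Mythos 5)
Your proof is correct and is exactly the routine Fourier computation the paper has in mind: the paper states this claim without proof as an elementary fact, and your argument (split each set $U$ into its head part $T \subseteq [k]$ and tail part $S$, group by $S$, then kill the cross terms via $\E_\rho[\rho_T \rho_{T'}] = \mathbf{1}[T = T']$) is the standard justification. Nothing is missing.
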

In words, all the Fourier weight on sets of the form $S \,\cup \{$some restricted variables$\}$ ``collapses'' down onto $S$ in
expectation. A corollary of this is that in expectation, the influence of an unrestricted variable $x_\ell$ does not change
when we do a restriction:

\begin{corollary} \label{cor:expinf}
Let $\rho$ be a randomly chosen assignment to the variables $x_1,\dots,x_k$.  Fix any $\ell \in \{k+1,\dots,n\}$. Then for any
polynomial $p: \bits^n \to \R$ we have
\[
\E_\rho[\Inf_\ell(p_\rho)] = \Inf_\ell(p).
\]
\end{corollary}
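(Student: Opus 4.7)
The plan is to derive Corollary~\ref{cor:expinf} as an immediate consequence of Equation~\eqref{eq:eprhosquare}. The core observation is that the influence of $x_\ell$ on $p_\rho$ is, by definition, the sum of squared Fourier coefficients $\widehat{p_\rho}(S)^2$ over sets $S \subseteq \{k+1,\ldots,n\}$ containing $\ell$, and equation~\eqref{eq:eprhosquare} already tells us the expected value of each such squared coefficient in terms of Fourier coefficients of the original $p$.

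Concretely, I would first write
\[
\Inf_\ell(p_\rho) \;=\; \sum_{\substack{S \subseteq \{k+1,\ldots,n\} \\ S \ni \ell}} \widehat{p_\rho}(S)^2,
\]
take expectations over $\rho$, and apply linearity of expectation together with~\eqref{eq:eprhosquare} to obtain
\[
\E_\rho[\Inf_\ell(p_\rho)] \;=\; \sum_{\substack{S \subseteq \{k+1,\ldots,n\} \\ S \ni \ell}} \sum_{T \subseteq [k]} \widehat{p}(S \cup T)^2.
\]
The final step is a bijective bookkeeping argument: every subset $U \subseteq [n]$ with $\ell \in U$ decomposes uniquely as $U = S \cup T$ with $S = U \cap \{k+1,\ldots,n\} \ni \ell$ and $T = U \cap [k]$, and conversely any such pair $(S,T)$ gives a distinct $U \ni \ell$. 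Therefore the double sum on the right collapses to $\sum_{U \ni \ell} \widehat{p}(U)^2 = \Inf_\ell(p)$, which is exactly the claimed identity.

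There is really no obstacle here; the only thing to be careful about is ensuring the decomposition $U = S \cup T$ is disjoint so that the two sums index distinct Fourier coefficients with no double-counting, which holds because $[k]$ and $\{k+1,\ldots,n\}$ partition $[n]$. The statement is the natural ``in expectation, restrictions preserve influence'' fact that will be combined later with the critical-index machinery in Section~\ref{sec:small} to track how influences evolve under the carefully chosen restriction.
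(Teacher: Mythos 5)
Your proposal is correct and matches the paper's own proof essentially verbatim: the paper likewise expands $\Inf_\ell(p_\rho)$ as the sum of $\widehat{p_\rho}(S)^2$ over $S \ni \ell$, $S \subseteq \{k+1,\dots,n\}$, applies \eqref{eq:eprhosquare} term by term, and collapses the resulting double sum via the unique decomposition $U = S \cup T$ into $\sum_{U \ni \ell} \widehat{p}(U)^2 = \Inf_\ell(p)$. Nothing is missing.
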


\begin{proof}
\begin{eqnarray*}
\E_\rho[\Inf_\ell(p_\rho)] &=& \E_\rho\left[\sum_{\ell \in S \subseteq
\{k+1,\dots,n\}} \widehat{p_\rho}(S)^2\right]\\
&=& \sum_{T \subseteq [k]} \sum_{\ell \in S \subseteq
\{k+1,\dots,n\}} \widehat{p}(S \cup T)^2\\
&=& \sum_{U \ni \ell} \widehat{p}(U)^2 = \Inf_\ell(p).
\end{eqnarray*}
\end{proof}

\subsubsection{Influences of low-degree polynomials behave nicely under
  restrictions}

In this subsection we prove the following lemma: For a low-degree polynomial, a random restriction with very high probability
does not cause any variable's influence to increase by more than a polylog$(n)$ factor.

\begin{lemma}\label{lem:smalldevinf}
  Let $p(x_1,\dots,x_n)$ be a degree-$d$ polynomial. Let $\rho$ be a
  randomly chosen assignment to the variables $x_1,\dots,x_k$.  Fix
  any $t>e^{2d}$ and any $\ell \in [k+1,n]$. With probability at least
  $1-\exp(-\Omega(t^{1/d}))$ over the choice of $\rho$, we have
$$\Inf_\ell(p_\rho)\leq t\cdot 3^d  \Inf_\ell(p).$$
In particular, for $t=\log^dn$, we have that with probability at least $1-n^{-\omega(1)}$, every variable $\ell\in [k+1,n]$ has
$\Inf_\ell(p_\rho)\leq (3\log n)^d \cdot \Inf_\ell(p)$. \ignore{\lnote{This is
  Lemma 8 of the PTF regularity lemma paper (Section 4.1). Previously
  we had a weaker bound of $\Inf_\ell(p_\rho)\leq \log^{2d}(n)
  \cdot\Inf_\ell(p)$.}}
\end{lemma}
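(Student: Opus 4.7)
The plan is to view $Q(\rho) \eqdef \Inf_\ell(p_\rho)$ as a polynomial in the restriction variables $\rho_1,\dots,\rho_k$, compute its expectation exactly, control its $2$-norm via hypercontractivity, and then apply the degree-$d$ Chernoff bound (\theoremref{thm:deg-d-chernoff}).

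\textbf{Step 1: Express $\Inf_\ell(p_\rho)$ as a low-degree polynomial in $\rho$.}
Write $\Inf_\ell(p_\rho)=\sum_{S\ni\ell,\, S\subseteq\{k+1,\dots,n\}} Y_S(\rho)^2$ where $Y_S(\rho)\eqdef\widehat{p_\rho}(S)=\sum_{T\subseteq[k]}\widehat{p}(S\cup T)\,\rho_T$, using the claim preceding \cref{cor:expinf}. Since only sets $U=S\cup T$ with $|U|\le d$ contribute and $|S|\ge 1$, each $Y_S$ is a polynomial in $\rho$ of degree at most $d-|S|\le d-1$. Consequently $Q(\rho)$ is a polynomial in $\rho$ of degree at most $2d-2$.

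\textbf{Step 2: Bound $\|Q\|_2$ by $3^d \Inf_\ell(p)$.}
By \cref{cor:expinf}, $\mu\eqdef\E_\rho[Q]=\Inf_\ell(p)$, and $\E[Y_S^2]=\sum_{T\subseteq[k]}\widehat{p}(S\cup T)^2$ so that $\sum_{S\ni\ell}\E[Y_S^2]=\mu$. Applying \theoremref{cor:deg-d-hyper} with $q=4$ to the degree-$(d-|S|)$ polynomial $Y_S$ gives $\E[Y_S^4]\le 9^{\,d-|S|}\E[Y_S^2]^2\le 9^d\E[Y_S^2]^2$. Then by Cauchy--Schwarz,
\[
\E[Q^2]=\sum_{S,S'}\E[Y_S^2 Y_{S'}^2]\le\sum_{S,S'}\sqrt{\E[Y_S^4]\E[Y_{S'}^4]}\le 9^d\Bigl(\sum_S\E[Y_S^2]\Bigr)^2=9^d\mu^2,
\]
so $\|Q\|_2\le 3^d\mu$ and hence $\|Q-\mu\|_2\le 3^d\mu$ as well.

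\textbf{Step 3: Concentration via the degree-$d$ Chernoff bound.}
Since $Q-\mu$ has degree at most $2d$ and $2$-norm at most $3^d\mu$, \theoremref{thm:deg-d-chernoff} yields, for any $t>e^{2d}$,
\[
\Prx_\rho\bigl[\,Q(\rho)-\mu\ge t\cdot 3^d\mu\,\bigr]\le \exp\bigl(-\Omega(t^{2/(2d)})\bigr)=\exp\bigl(-\Omega(t^{1/d})\bigr).
\]
On this good event $\Inf_\ell(p_\rho)=Q(\rho)\le (t+1)\cdot 3^d\,\Inf_\ell(p)\le t\cdot 3^{d+1}\,\Inf_\ell(p)$; absorbing constants yields the claimed $t\cdot 3^d\,\Inf_\ell(p)$ bound by a mild rescaling of $t$. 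Finally, taking $t=\log^d n$ gives a per-coordinate failure probability $\exp(-\Omega(\log n))=n^{-\Omega(1)}$, and a union bound over the at most $n$ indices $\ell\in[k+1,n]$ (which can be made smaller than any inverse polynomial by a slightly larger choice of $t$) gives the ``in particular'' conclusion.

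\textbf{Main obstacle.} The only nontrivial step is Step 2, the bound $\|Q\|_2\le 3^d\Inf_\ell(p)$: one needs to recognize that the individual ``conditional Fourier coefficients'' $Y_S$ are themselves low-degree polynomials in $\rho$ of degree $d-|S|$, use hypercontractivity on each, and exploit the fact that $\E[Y_S^2]$ sum to exactly $\Inf_\ell(p)$ so that the Cauchy--Schwarz estimate closes up to give $\E[Q^2]\le 9^d(\E Q)^2$ rather than something much weaker. Once this $L_2$ control is in hand, the remaining tail bound and union bound are standard.
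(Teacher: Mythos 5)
Your proposal is correct and takes essentially the same route as the paper's proof: view $\Inf_\ell(p_\rho)$ as a degree-$2d$ polynomial in $\rho$, bound its $2$-norm by $3^d\,\Inf_\ell(p)$ using $(4,2)$-hypercontractivity on each coefficient $\widehat{p_\rho}(S)$ together with Corollary~\ref{cor:expinf}, and then invoke Theorem~\ref{thm:deg-d-chernoff}. The only cosmetic differences are that the paper gets the $2$-norm bound via the triangle inequality $\|\sum_S \widehat{p_\rho}(S)^2\|_2 \le \sum_S \|\widehat{p_\rho}(S)^2\|_2$ (equivalent to your Cauchy--Schwarz expansion of $\E[Q^2]$) and applies the tail bound directly to $\Inf_\ell(p_\rho)$ rather than its centered version, which avoids your $(t+1)$-to-$t$ rescaling.
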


\begin{proof}
  Since $\Inf_\ell(p_\rho)$ is a degree-$2d$ polynomial in $\rho$,
  \lemmaref{lem:smalldevinf} follows as an immediate consequence of
  \theoremref{thm:deg-d-chernoff} if we can upper bound
  $||\Inf_\ell(p_\rho)||_2.$ We use the bound in
  \lemmaref{lem:inf2norm}, stated and proven below.
\end{proof}
\begin{lemma}\label{lem:inf2norm}
  Let $p(x_1,\dots,x_n)$ be a degree-$d$ polynomial. Let $\rho$ be a
  randomly chosen assignment to the variables $x_1,\dots,x_k$, and let
  $\ell \in [k+1,n]$. Then $\Inf_\ell(p_\rho)$ is a degree-$2d$
  polynomial in variables $\rho_1,\dots,\rho_k$, and
$$||\Inf_\ell(p_\rho)||_2 \leq 3^d\cdot \Inf_\ell(p).$$
\end{lemma}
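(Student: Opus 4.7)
The plan is to combine three ingredients: the explicit Fourier expansion of $\widehat{p_\rho}(S)$ in terms of $\rho$, the Parseval-type identity already recorded in (\ref{eq:eprhosquare}), and the standard degree-$d$ hypercontractive bound (\theoremref{cor:deg-d-hyper}) applied one Fourier coefficient at a time.

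First I would write
\[
\Inf_\ell(p_\rho) \;=\; \sum_{\ell \in S \subseteq \{k+1,\dots,n\}} q_S(\rho)^2,
\qquad\text{where } q_S(\rho) \eqdef \widehat{p_\rho}(S) = \sum_{T \subseteq [k]} \widehat{p}(S \cup T)\,\rho_T.
\]
Since $\widehat{p}(S\cup T)$ vanishes unless $|S|+|T|\leq d$, each $q_S$ is a polynomial in $\rho$ of degree at most $d-|S|\leq d-1$; hence $q_S(\rho)^2$ has degree at most $2d-2$ in $\rho$, and the sum $\Inf_\ell(p_\rho)$ is visibly a polynomial in $\rho$ of degree at most $2d$, as claimed.

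Next I would apply the triangle inequality in $L_2(\rho)$ to separate the sum over $S$:
\[
\|\Inf_\ell(p_\rho)\|_2 \;\leq\; \sum_{S \ni \ell} \|q_S^2\|_2 \;=\; \sum_{S \ni \ell} \|q_S\|_4^{\,2}.
\]
This is the key reduction: we never need to control $\|\Inf_\ell(p_\rho)\|_2$ via a ``reverse hypercontractive'' bound on the full degree-$2d$ polynomial $\Inf_\ell(p_\rho)$ (which is awkward because $\Inf_\ell(p_\rho)$ is a sum of squares, not a single square). Instead, we apply \theoremref{cor:deg-d-hyper} to each individual degree-$(d-1)$ polynomial $q_S$ with $q=4$, which gives $\|q_S\|_4^{\,2} \leq 3^{d-1}\|q_S\|_2^{\,2} \leq 3^{d}\|q_S\|_2^{\,2}$.

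Finally, I would invoke (\ref{eq:eprhosquare}) to evaluate $\|q_S\|_2^{\,2} = \E_\rho[\widehat{p_\rho}(S)^2] = \sum_{T \subseteq [k]} \widehat{p}(S \cup T)^2$, and then combine and re-index $U = S\cup T$ to get
\[
\sum_{S \ni \ell} \|q_S\|_2^{\,2} \;=\; \sum_{S \ni \ell,\,S \subseteq \{k+1,\dots,n\}} \;\sum_{T \subseteq [k]} \widehat{p}(S \cup T)^2 \;=\; \sum_{U \ni \ell} \widehat{p}(U)^2 \;=\; \Inf_\ell(p).
\]
Chaining the three displays yields $\|\Inf_\ell(p_\rho)\|_2 \leq 3^d \,\Inf_\ell(p)$. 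The main (and only real) technical obstacle is the temptation to apply hypercontractivity directly to $\Inf_\ell(p_\rho)$ as a degree-$2d$ polynomial; the fix is the per-$S$ triangle inequality step, which is what makes the argument go through cleanly.
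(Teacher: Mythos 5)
Your proof is correct and follows essentially the same route as the paper: a triangle inequality over the sets $S \ni \ell$, the $(4,2)$-hypercontractive bound of \theoremref{cor:deg-d-hyper} applied to each $\widehat{p_\rho}(S)$, and the identity $\E_\rho[\widehat{p_\rho}(S)^2] = \sum_{T\subseteq[k]}\widehat{p}(S\cup T)^2$ (equivalently \corollaryref{cor:expinf}) to recover $\Inf_\ell(p)$. Your explicit check of the degree in $\rho$ is a small extra nicety, but the argument is the same.
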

\begin{proof}
The triangle inequality tells us that we may bound the $2$-norm of each squared-coefficient separately:
$$||\Inf_\ell(p_\rho)||_2 \leq \sum_{\ell\in S \subseteq [k+1,n]}
||\widehat{p}_\rho(S)^2||_2.$$ Since $\widehat{p}_\rho(S)$ is a degree-$d$ polynomial, Bonami-Beckner (i.e.,
$(4,2)$-hypercontractivity) tells us that
$$||\widehat{p}_\rho(S)^2||_2 = || \widehat{p}_\rho(S) ||_4^2 \leq 3^d
||\widehat{p}_\rho(S)||_2^2,$$ hence
$$||\Inf_\ell(p_\rho)||_2 \leq 3^d \sum_{\ell\in S \subseteq [k+1,n]}
||\widehat{p}_\rho(S)||_2^2 = 3^d\cdot \Inf_\ell(p)$$ where the last equality is by \corollaryref{cor:expinf}.
\end{proof}

\subsection{The regular case}
\label{sec:regular}

In this section we prove that regular degree-$d$ PTF's have low average sensitivity.  In particular, we show:

\begin{lemma}
\label{lemma:regular} Fix $\tau=n^{-\Theta(1)}$. Let $f$ be a $\tau$-regular degree-$d$ PTF. Then,
\[ \AS(f) \leq O(d \cdot n \cdot \tau^{1/(4d+1)})\]
\end{lemma}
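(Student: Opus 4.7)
The plan is to reduce the Boolean setting to the Gaussian setting via the invariance principle (\theoremref{thm:invariance}), so that the concentration/anti-concentration dichotomy underlying the Gaussian argument (\lemmaref{lemma:GIbound}) carries over. Since $\AS(f) = \sum_{i=1}^n \Inf_i(f)$, it suffices to establish the per-coordinate bound $\Inf_i(f) \leq O(d \cdot \tau^{1/(4d+1)})$ and sum. First I would normalize $p$ so that $\Var[p] = \sum_{0<|S|\leq d}\widehat{p}(S)^2 = 1$; since $\sum_i \Inf_i(p) \leq d\,\Var[p]$, the $\tau$-regularity hypothesis then yields $\Inf_i(p) \leq \tau d$ for every~$i$.

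Fix a coordinate $i$ and write $p$ as a linear form in $x_i$, namely $p(x) = a_i(x_{-i}) + x_i\cdot b_i(x_{-i})$ with $a_i,b_i$ multilinear in the remaining $n-1$ variables. Using $x_i^2 = 1$ one computes $p(x)\cdot p(x^{\oplus i}) = a_i(x_{-i})^2 - b_i(x_{-i})^2$, hence
\[ \Inf_i(f) \;=\; \Pr_x\!\left[\,|a_i(x_{-i})| \;<\; |b_i(x_{-i})|\,\right] .\]
Parseval gives $\|b_i\|_2^2 = \Inf_i(p) \leq \tau d$ and $\Var[a_i] = 1 - \Inf_i(p) \geq 1 - \tau d$. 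For a threshold $s>0$ to be tuned later, I union-bound
\[ \Inf_i(f) \;\leq\; \Pr\!\left[\,|b_i(x_{-i})|>s\,\right] + \Pr\!\left[\,|a_i(x_{-i})|\leq s\,\right] .\]
The first probability is handled directly on the Boolean cube by the degree-$d$ Chernoff bound (\theoremref{thm:deg-d-chernoff}), yielding $\exp(-\Omega((s/\sqrt{\tau d})^{2/d}))$. For the second, each variable influence of $a_i$ satisfies $\Inf_j(a_i)\leq \Inf_j(p) \leq \tau d$, and after normalizing to unit variance (working with $\widetilde a_i \eqdef (a_i - \widehat{a}_i(\emptyset))/\sqrt{\Var[a_i]}$) the influences remain $O(\tau d)$ since $\Var[a_i] = \Omega(1)$. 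Applying the invariance principle (\theoremref{thm:invariance}) to $\widetilde a_i$ and unfolding gives
\[ \Pr\!\left[|a_i(x_{-i})|\leq s\right] \;\leq\; \Pr_{\mathcal{G}}\!\left[|a_i(\mathcal{G})|\leq s\right] + O\!\left(d\,(\tau d)^{1/(4d+1)}\right),\]
and Carbery--Wright (\theoremref{thm:carberywright}) bounds the Gaussian side by $O(d\, s^{1/d})$ (using $\|a_i\|_2 = \Omega(1)$).

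Setting $s = \tau^{d/(4d+1)}$ balances the two terms: the Carbery--Wright contribution becomes $O(d\,\tau^{1/(4d+1)})$, matching the invariance error up to constants (absorbing $d^{1/(4d+1)} = O(1)$), while the Chernoff tail $\exp(-\Omega(\tau^{-\Omega(1)}))$ is super-polynomially smaller for $\tau = n^{-\Omega(1)}$. Hence $\Inf_i(f) \leq O(d\,\tau^{1/(4d+1)})$, and summing over $i \in [n]$ yields the desired $\AS(f) \leq O(dn\,\tau^{1/(4d+1)})$.

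The main technical care lies in the invariance step: \theoremref{thm:invariance} demands a polynomial of unit variance with uniformly small influences, so we must work with $\widetilde a_i$ rather than $a_i$ and convert the event $\{|a_i|\leq s\}$ into a two-sided CDF event for $\widetilde a_i$ before invoking the sup-norm guarantee. A degenerate case where $|\widehat{p}(\emptyset)|$ is close to $\|p\|_2$ (so that $f$ is nearly constant and $\AS(f)$ is already trivially small) must be peeled off as a separate base case to ensure that $\Var[a_i]$ stays bounded below by a constant, but this is a routine check. Apart from this bookkeeping, the argument is a direct Boolean translation of the Gaussian proof of \lemmaref{lemma:GIbound}.
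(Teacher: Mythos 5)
Your proof is correct and follows essentially the same route as the paper: bound each $\Inf_i(f)$ by a union of an anti-concentration event (handled via the invariance principle plus Carbery--Wright) and a concentration event for the low-norm derivative part (handled via the degree-$d$ Chernoff bound), then sum over the $n$ coordinates. The only cosmetic differences are that the paper compares $|p(x)|$ directly against $|2D_i p(x)|$ with threshold $t=\tau^{1/4}$ and applies invariance/Carbery--Wright to $p$ itself, whereas you apply them to the $x_i$-independent part $a_i$ (suitably renormalized) with threshold $\tau^{d/(4d+1)}$; both choices give the same $O(d\,\tau^{1/(4d+1)})$ per-coordinate bound.
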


\claimref{claim:regularas} follows directly from the above lemma, recalling we choose $\tau \eqdef n^{-(4d+1)/(4d+2)}$.
However, the lemma will also be useful in the ``small critical index'' case for a slightly larger regularity parameter $\tau$.

\begin{proof}
  Let $f : \bits^n \to \R$ be a degree-$d$ PTF, i.e. $f=\sign(p)$
  where $p$ is $\tau$-regular.  We may assume that $p$ is normalized
  such that $\sum_{0 < |S|\leq d}\widehat{p}(S)^2=1$.

  First we note that flipping the $i$-th bit of an input $x\in\bn$
  changes the value of $p$ by the magnitude of its partial derivative
  with respect to $i$:
  \[ 2D_ip(x) = 2\sum_{S\ni i}\widehat{p}(S)x_{S-\{i\}} \] It follows
  that:
  \[ \Inf_i(f) \leq \Pr_{x\in\bn}[|p(x)|\leq|2D_ip(x)|] \]

  Therefore, bounding from above the influence of variable $i$ in $f$
  can be done by showing the following:

\begin{enumerate}
\item $p(x)$ has small magnitude, $|p(x)|\leq t$ for some threshold
  $t$, with small probability.
\item $2D_ip(x)$ has large magnitude, $|2D_ip(x)| \geq t$, with small
  probability.
\end{enumerate}

We bound the probability of the first event using the anti-concentration property of regular low-degree polynomials, as implied
by the invariance principle along with \theoremref{thm:carberywright}. For the second event we use the tail bound for
degree-$d$ polynomials (\theoremref{thm:deg-d-chernoff}).

We will take our threshold $t$ to be $t \eqdef \tau^{1/4}$, where $\tau$ is the regularity parameter of $p$.

\subsubsection{Bounding the probability of the first event}

By the $\tau$-regularity of $p$, for all $i \in [n]$ we have $\Inf_i(p) \leq \tau \cdot \Inf (p) \leq d \cdot \tau$ where the
last inequality follows by the assumed normalization. With this bound, the invariance principle (\theoremref{thm:invariance})
tells us that $\Pr_{x\in\bn}[|p(x)|\leq \tau^{1/4}]$ differs from
$\Pr_{\mathcal{G}_1,\ldots,\mathcal{G}_n}[|p(\mathcal{G})|\leq\tau^{1/4}]$ by at most $O(d\cdot (d\tau)^{1/(4d+1)}) = O(d\cdot
\tau^{1/(4d+1)})$. Applying the anti-concentration bound of Carbery and Wright for polynomials in Gaussian random variables
(\theoremref{thm:carberywright}), we get:
\begin{eqnarray*}
  \Pr_x[|p(x)|\leq \tau^{1/4}] &\leq&
  \Pr_{\mathcal{G}_1,\ldots,\mathcal{G}_n}[|p(\mathcal{G})|\leq\tau^{1/4}]
  + O(d \tau^{1/(4d+1)}) \\
  &\leq& O(d\cdot \tau^{1/4d}) + O(d \cdot \tau^{1/(4d+1)}) \\
  &=& O(d\cdot \tau^{1/(4d+1)}).
\end{eqnarray*}

\subsubsection{Bounding the probability of the second event}

Next we consider $\Pr_x[|2D_ip(x)|\geq\tau^{1/4}]$. Note that $2D_ip$ is a degree-$(d-1)$ polynomial whose $l_2$ norm is small:
\[ \| 2D_i p \| = 2 \sqrt{\sum_{S\ni i}\widehat{p}(S)^2} = 2 \sqrt{
  \Inf_i(p)} \leq 2\sqrt{d\cdot\tau}.  \] By
(\theoremref{thm:deg-d-chernoff}), we get that
\begin{eqnarray*}
\Pr_x[|2D_ip(x)|\geq \tau^{1/4}] &\leq& \Pr_x[|2D_ip(x)|\geq
\tau^{-1/4}/(2\sqrt{d}) \cdot \| 2D_i p \| ]\\
&\leq& \exp(-\tau^{-1/(2d)}/(2\sqrt{d})^{2/d}) = \exp(-\Theta(1)\cdot\tau^{-1/(2d)}) \ll O(d\cdot \tau^{1/(4d+1)}).
\end{eqnarray*}
(In the second inequality, we were able to apply the concentration bound since, by our assumptions on $d$ and $\tau$, we indeed
have that $\tau^{-1/4}/(2 \sqrt{d}) > e^d.$)

Hence, we have shown that:
\begin{eqnarray*}
  \Inf_i(f) &\leq& \Pr_{x\in\bn}[|p(x)|\leq|2D_ip(x)|] \\
  &\leq& \Pr_x[|p(x)|\leq\tau^{1/4}]  +  \Pr_x[|2D_ip(x)|\geq
  \tau^{1/4}] \\
&=& O(d \cdot \tau^{1/(4d+1)}).
\end{eqnarray*}

Since this holds for all indices $i \in [n]$, we have the following bound on the average sensitivity of $f=\sgn(p)$:
\[ \AS(f) \leq O(d \cdot n \cdot \tau^{1/(4d+1)}). \]

\end{proof}

\subsection{The small critical index case} \label{sec:small}

Let $f=\sign(p)$ be such that the $\tau$-critical index of $p$ is some value $k$ between $1$ and $K= 2d\log n/\tau$. By
definition, the sequence of influences $\Inf_{k+1}(p),\dots,\Inf_n(p)$ is $\tau$-regular. We essentially reduce this case to
the regular case for a regularity parameter $\tau'$ somewhat larger than $\tau$.

Consider a random restriction $\rho$ of all the variables up to the critical index.  We will show the following:

\begin{lemma}
For a $1/2^{O(d)}$ fraction of restrictions $\rho$, the sequence of influences $\Inf_{k+1}(p_\rho),$ $\dots,$ $\Inf_n(p_\rho)$
is $\tau'$-regular, where $\tau' \eqdef (3\log n)^d\cdot \tau$.\ignore{ \lnote{Lemma 7 of the PTF regularity lemma paper gives
  $\tau'=\tau\cdot O(d\ln\littlefrac{1}{\tau})^d$. Given our choice of
  $\tau = \poly(\littlefrac{1}{n})$ both lemmas yield essentially the
  same $\tau'$.} }
\end{lemma}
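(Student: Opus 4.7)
The plan is to control two quantities separately: an upper bound on $\max_{\ell > k}\Inf_\ell(p_\rho)$ and a lower bound on the total $Q(\rho)\eqdef \sum_{j=k+1}^{n}\Inf_j(p_\rho)$, each holding with probability at least $1/2^{O(d)}$, and then combine them using the defining inequality $\Inf_{k+1}(p)\leq \tau\sum_{j>k}\Inf_j(p)$ of the $\tau$-critical index.

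For the upper bound I would invoke Lemma~\ref{lem:smalldevinf} with $t=\log^d n$: with probability $1-n^{-\omega(1)}$ over $\rho$, every $\ell > k$ satisfies $\Inf_\ell(p_\rho)\leq (3\log n)^d\Inf_\ell(p)\leq (3\log n)^d\Inf_{k+1}(p)$, using that the coordinates are sorted by influence in $p$. This pins the numerator.

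For the lower bound on $Q(\rho)$ I would expand
$$Q(\rho) = \sum_{\emptyset \neq S\subseteq [k+1,n]}|S|\cdot P_S(\rho)^2, \qquad P_S(\rho)\eqdef \widehat{p_\rho}(S) = \sum_{T\subseteq[k]}\widehat{p}(S\cup T)\,\rho_T,$$
so that $Q$ is a non-negative polynomial in $\rho$ of degree at most $2d$ and each $P_S$ has degree at most $d$. Corollary~\ref{cor:expinf} gives $\E_\rho[Q]=\sum_{j>k}\Inf_j(p)$. Cauchy--Schwarz on each cross term yields $\E[P_{S_1}^2 P_{S_2}^2]\leq \sqrt{\E[P_{S_1}^4]\,\E[P_{S_2}^4]}$, and $(4,2)$-hypercontractivity (Theorem~\ref{cor:deg-d-hyper}) applied to each $P_{S_i}$ gives $\E[P_{S_i}^4]\leq 9^d\,\E[P_{S_i}^2]^2$. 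Summing over $(S_1,S_2)$ produces $\E[Q^2]\leq 9^d\,\E[Q]^2$, so by the Paley--Zygmund inequality
$$\Pr_\rho\!\left[Q(\rho)\geq \tfrac{1}{2}\E[Q]\right] \;\geq\; \frac{1}{4\cdot 9^d} \;=\; \frac{1}{2^{O(d)}}.$$

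A union bound over the two good events, followed by the critical-index inequality, then gives, with probability $1/2^{O(d)}$,
$$\frac{\max_{\ell>k}\Inf_\ell(p_\rho)}{Q(\rho)} \;\leq\; \frac{(3\log n)^d\,\Inf_{k+1}(p)}{\tfrac{1}{2}\sum_{j>k}\Inf_j(p)} \;\leq\; 2(3\log n)^d\,\tau \;=\; 2\tau',$$
which is $\tau'$-regularity up to an absolute constant (absorbed into the $2^{O(d)}$ slack, or by a trivial rescaling of $\tau'$). The main obstacle is the anti-concentration of $Q$ around its mean: because $Q$ is a sum of squares and hence non-negative, a two-sided small-ball result like Theorem~\ref{thm:aushas} is not the right tool — one needs a one-sided lower-tail bound tailored to non-negative low-degree polynomials, which is exactly what Paley--Zygmund combined with $L^4$-hypercontractivity delivers at the cost of the $2^{O(d)}$ probability.
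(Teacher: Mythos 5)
Your proof is correct, and its skeleton is the same as the paper's: control every tail influence of $p_\rho$ via Lemma~\ref{lem:smalldevinf}, lower-bound the tail sum $\sum_{j>k}\Inf_j(p_\rho)$ with probability $2^{-O(d)}$, take a union bound, and finish with the critical-index inequality $\Inf_{k+1}(p)\leq \tau\sum_{j>k}\Inf_j(p)$. Where you genuinely diverge is the anti-concentration step for the denominator. The paper applies Theorem~\ref{thm:aushas} to the \emph{centered} degree-$2d$ polynomial $A-\E_\rho[A]$, where $A(\rho)=\sum_{j>k}\Inf_j(p_\rho)$, obtaining $\Pr_\rho[A>\E_\rho[A]]\geq 2^{-O(d)}$; so your remark that Theorem~\ref{thm:aushas} is ``not the right tool'' because $Q\geq 0$ is slightly off --- its first conclusion is exactly a one-sided upper-tail bound, and centering disposes of the non-negativity issue. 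Your substitute --- Cauchy--Schwarz over pairs of coefficients, $(4,2)$-hypercontractivity (Theorem~\ref{cor:deg-d-hyper}) giving $\E[Q^2]\leq 9^d(\E[Q])^2$, then Paley--Zygmund --- is valid: each $\widehat{p_\rho}(S)$ is a polynomial of degree at most $d$ in $\rho$, $Q\geq 0$, and $\E_\rho[Q]=\sum_{j>k}\Inf_j(p)$ by Corollary~\ref{cor:expinf}. This route is more self-contained, since it avoids invoking the anti-concentration bound of \cite{DFKO06, aushas09} as a black box (in effect you re-derive the one-sided statement you need by a second-moment argument). The price is the threshold $\tfrac12\E[Q]$ instead of $\E[Q]$, so you establish $2\tau'$-regularity rather than $\tau'$-regularity; as you note, this constant is harmless downstream (Lemma~\ref{lemma:regular} only needs $\tau'=n^{-\Theta(1)}$ and the final bounds are big-O), and it could be removed by running Lemma~\ref{lem:smalldevinf} with $t=\log^d n/2$. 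One small point to make explicit: the union bound should be stated as $\Pr[\text{both events}]\geq \tfrac14 9^{-d}-n^{-\omega(1)}=2^{-O(d)}$, which uses the standing assumption $d=O(\sqrt{\log n})$ of this section.
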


By our choice of $\tau = n^{-(4d+1)/(4d+2)}$, we have that $\tau' = n^{-\Theta(1)}$, and so we may apply
\lemmaref{lemma:regular} to these restrictions to conclude that the associated PTFs have average sensitivity at most $O(d \cdot
n \cdot ({\tau'})^{1/(4d+1)})$.

\begin{proof}

Since the sequence of influences $\Inf_{k+1}(p),\dots,\Inf_n(p)$ is $\tau$-regular, we have
\[  \frac{\Inf_{i}(p)}{\sum_{j=k+1}^n \Inf_{j}(p)} \leq \tau  \]
for all $i \in [k+1, n]$.

We want to prove that for a $1/2^{O(d)}$ fraction of all $2^k$ restrictions $\rho$ to $x_1, \ldots, x_k$ we have

\[  \frac{\Inf_{i}(p_{\rho})}{\sum_{j=k+1}^n \Inf_{j}(p_{\rho})} \leq \tau'  \]

for all $i \in [k+1, n]$.

To do this we proceed as follows: \lemmaref{lem:smalldevinf} implies that, with very high probability over the random
restrictions, we have $\Inf_{i}(p_{\rho}) \leq (3\log n)^d\cdot \Inf_{i}(p)$, for all $i \in [k+1, n]$. We need to show that
for a $1/2^{O(d)}$ fraction of all restrictions the denominator of the fraction above is at least $\sum_{j=k+1}^n \Inf_{j}(p)$
(its expected value). The lemma then follows by a union bound.

We consider the degree-$2d$ polynomial $A(\rho_1,\dots,\rho_k) \eqdef \sum_{j=k+1}^n \Inf_{j}(p_{\rho})$ in variables
$\rho_1,\dots,\rho_k.$ The expected value of $A$ is $\E_\rho[A] = \sum_{j=k+1}^n \Inf_{j}(p) = \widehat{A}(\emptyset)$. We
apply the \theoremref{thm:aushas} for $B = A - \widehat{A}(\emptyset)$. We thus get $\Pr_{\rho} [B > 0 ] > 1/2^{O(d)}$.  We
thus get $\Pr_{\rho} [A > \E_\rho[A] ] > 1/2^{O(d)}$ and we are done.
\end{proof}

\subsection{The large critical index case}
\label{sec:large}

Finally we consider PTFs $f=\sign(p)$ with $\tau$-critical index greater than $K= 2d\log n/\tau$.  Let $\rho$ be a restriction
of the first $K$ variables $\mathcal{H} = \{1,\dots,K\}$; we call these the ``head'' variables. We will show the following:

\begin{lemma}
For a $1/2^{O(d)}$ fraction of restrictions $\rho$, the function $\sign(p_\rho(x))$ is a constant function.
\end{lemma}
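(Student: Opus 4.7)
The plan is to show that after the random restriction $\rho$ of the head variables $\mathcal{H} := [K]$, the ``head part'' $p^{\mathcal{H}}(\rho) := \sum_{T \subseteq \mathcal{H}} \widehat{p}(T)\,\rho_T = \E_y[p_\rho(y)]$ almost always dominates the fluctuation $\|p_\rho - p^{\mathcal{H}}(\rho)\|_\infty$ as a function of the tail variables $y \in \{-1,1\}^{n-K}$, thereby forcing $\sign(p_\rho)$ to be literally a constant function of $y$. WLOG normalize $p$ so $\Var[p]=1$ (since $\sign(p)$ is scale-invariant). The key quantitative input is that the $\tau$-critical index exceeds $K=2d\log n/\tau$, which by \lemmaref{lemma:expshrink} forces super-polynomial shrinkage of the tail Fourier mass:
\[
\sum_{i > K} \Inf_i(p) \ \leq\ (1-\tau)^K\cdot \Inf(p) \ \leq\ d\cdot e^{-\tau K} \ =\ d\cdot n^{-2d}.
\]

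For the lower bound on $|p^{\mathcal{H}}(\rho)|$, observe that $q := p^{\mathcal{H}} - \widehat{p}(\emptyset)$ is a degree-$d$ multilinear polynomial with $\E[q]=0$ and
\[
\Var[q] \ =\ 1 - \sum_{U:\,U \cap \mathcal{T} \neq \emptyset} \widehat{p}(U)^2 \ \geq\ 1 - \sum_{i > K}\Inf_i(p) \ \geq\ 1/2
\]
(for $n$ sufficiently large). Applying \theoremref{thm:aushas} to both $q/\|q\|_2$ and $-q/\|q\|_2$ and picking the sign that agrees with $\sign(\widehat{p}(\emptyset))$, we conclude that the event $A := \{|p^{\mathcal{H}}(\rho)| \geq 1/2^{O(d)}\}$ holds with probability at least $1/2^{O(d)}$.

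For the upper bound on the fluctuation, \equationref{eq:eprhosquare} gives
\[
\E_\rho\!\left[\Var_y[p_\rho(y)]\right] \ =\ \sum_{U:\,U \cap \mathcal{T} \neq \emptyset} \widehat{p}(U)^2 \ \leq\ d\cdot n^{-2d}.
\]
Combined with the crude but sufficient bound $\|r\|_\infty \leq n^{d/2}\|r\|_2$ for any degree-$d$ multilinear polynomial $r$ on at most $n$ variables (Cauchy--Schwarz on its Fourier expansion), Markov's inequality shows that the event $B := \{\|p_\rho - p^{\mathcal{H}}(\rho)\|_\infty < 1/2^{O(d)}\}$ (with a strictly smaller implicit constant than the one in $A$) has failure probability $o(1/2^{O(d)})$. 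A union bound then yields $\Pr_\rho[A \cap B] \geq 1/2^{O(d)}$, and on $A \cap B$ we have $\sign(p_\rho(y)) = \sign(p^{\mathcal{H}}(\rho))$ for every $y$, as required.

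The one delicate point is the calibration of the implicit constants in these two $1/2^{O(d)}$ factors: since \theoremref{thm:aushas} only supplies a $1/2^{O(d)}$ \emph{lower} bound on $\Pr[A]$, the Markov bound on $\Var_y[p_\rho(y)]$ must be strong enough to push $\Pr[B^c]$ strictly below this. Tracing through the computation requires the noise scale $n^{d/2}\sqrt{\E_\rho[\Var_y[p_\rho]]}$ to be much smaller than $1/2^{O(d)}$, which is exactly why the $n^{-2d}$ shrinkage afforded by the large-critical-index hypothesis is needed -- an $n^{-d}$ bound would not suffice.
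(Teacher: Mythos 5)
Your proposal is correct, and it follows the same skeleton as the paper's proof: split at the first $K=2d\log n/\tau$ variables, use \lemmaref{lemma:expshrink} to get the $d/n^{2d}$ bound on the tail Fourier weight, apply \theoremref{thm:aushas} to the head polynomial to get $|p'(\rho)|\geq 2^{-O(d)}$ for a $2^{-O(d)}$ fraction of restrictions, and then argue that the tail part of $p_\rho$ is pointwise negligible so the sign is constant. The one place you genuinely diverge is the last step: the paper controls the tail via \lemmaref{lem:smalldevinf} (a high-probability, per-variable bound $\Inf_\ell(p_\rho)\leq (3\log n)^d\Inf_\ell(p)$, proved with the degree-$d$ Chernoff bound) and then Cauchy--Schwarz on the $\ell_1$ norm of the tail coefficients of $p_\rho$, whereas you apply Markov's inequality directly to $\Var_y[p_\rho]$, whose expectation over $\rho$ is computed exactly from \equationref{eq:eprhosquare}, and then use the crude bound $\|r\|_\infty\leq n^{d/2}\|r\|_2$. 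Your route is more elementary and self-contained (it needs only \corollaryref{cor:expinf}, not hypercontractivity), at the cost of giving only a bound in probability on the aggregate variance rather than the per-variable influence control the paper reuses from the small-critical-index case; both routes need exactly the same slack, namely the $n^{-2d}$ shrinkage beating the $n^{d/2}$ loss together with $d=O(\sqrt{\log n})$ so that $2^{O(d)}$ and $\mathrm{polylog}(n)$ factors are absorbed, and you correctly flag this calibration. A minor bonus of your write-up is the explicit handling of the constant term when invoking \theoremref{thm:aushas} (applying it to $\pm q/\|q\|_2$ and choosing the sign matching $\sgn(\widehat{p}(\emptyset))$), which the paper dispatches with only a rescaling footnote.
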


\begin{proof}
By \lemmaref{lemma:expshrink}, the surviving variables $x_{K+1},\ldots,x_n$ have very small total influence in $p$:
\begin{equation}
\sum_{i=K+1}^n\Inf_i(p) = \sum_{i=K+1}^n\sum_{S\ni i}\widehat{p}(S)^2 \leq (1-\tau)^{K} \cdot \Inf(p) \leq d/n^{2d}.
\label{eq:hi}
\end{equation} Therefore, if we let $p'$ be the truncation of $p$
comprising only the monomials with all variables in $\mathcal{H}$,
\[ p'(x_1,\ldots,x_k) = \sum_{S\subset\mathcal{H}} \widehat{p}(S)x_S \] we
know that almost all of the original Fourier weight of $p$ is on the coefficients of $p'$:
\[ 1 \geq \mathop{\sum_{S\subset\mathcal{H}}}_{|S|>0}\widehat{p}(S)^2 \geq
1 - \sum_{i=K+1}^n \Inf_i(p) \geq 1 - d/n^{2d}\]

We now apply \theoremref{thm:aushas} to $p'$ \footnote{after a very slight
  rescaling so the non-constant Fourier coefficients of $p'$ have sum
  of squares equal to 1; this does not affect the bound we get because of the big-O.} and get:
\[ \Pr_{x\in\{-1,1\}^K}[|p'(x)|\geq 1/2^{O(d)}] \geq 1/2^{O(d)}. \]

In words, for a $1/2^{O(d)}$ fraction of all restrictions $\rho$ to $x_1,\dots,x_K$, the value $p'(\rho)$ has magnitude at
least $1/2^{O(d)}$.

For any such restriction, if the function $f_\rho(x) = \sign(p_\rho(x_{K+1},\dots,x_n))$ is not a constant function it must
necessarily be the case that:
\[
\sum_{0 < |S|\subseteq \{x_{K+1},\dots,x_n\}} |\widehat{p_\rho}(S)| \geq 1/2^{O(d)}
\]

As noted in (\ref{eq:hi}), each tail variable $\ell > K$ has very small influence in $p$:
\[ \Inf_\ell(p) \leq \sum_{i=K+1}^n\Inf_i(p) =
d/n^{2d}\]

Applying \lemmaref{lem:smalldevinf}, we get that for the overwhelming majority of the $1/2^{O(d)}$ fraction of restrictions
mentioned above, the influence of $\ell$ in $p_\rho$  is not much larger than the influence of $\ell$ in $p$:

\begin{equation} \Inf_\ell(p_\rho) \leq (3\log n)^{d}\cdot
  \Inf_\ell(p) \leq d\cdot (3\log n)^{d}/n^{2d} \label{eq:reallylittle}
\end{equation}

Using Cauchy-Schwarz, we have

\begin{eqnarray*}
\sum_{S \ni \ell, S \subseteq \{x_{K+1},\dots,x_n\}} |\widehat{p_\rho}(S)| &\leq& n^{d/2} \cdot \sqrt{\sum_{S \ni \ell, S
\subseteq \{x_{K+1},\dots,x_n\}} \widehat{p_\rho}(S)^2}\\
&=& n^{d/2}
\sqrt{\Inf_\ell(p_\rho)}\\
&\leq&n^{-\Omega(1)}
\end{eqnarray*}
where we have used (\ref{eq:reallylittle}) (and our upper bound on $d$). From this we easily get that

\[
\sum_{0 < |S|\subseteq \{x_{K+1},\dots,x_n\}} |\widehat{p_\rho}(S)| \leq n^{- \Omega(1)} \ll 1/2^{O(d)}
\]

We have established that for a $1/2^{O(d)}$ fraction of all restrictions to $x_1,\dots,x_K$, the function $f_\rho =
\sign(p_\rho)$ is a constant function, and the lemma is proved.
\end{proof}

\subsection{Proof of \claimref{claim:irregas}}
\label{sec:pfofirreg}

If $f$ is a degree-$d$ PTF that is not $\tau$-regular, then its $\tau$-critical index is either in the range $\{1,\dots,K\}$ or
it is greater than $K.$

In the first case (small critical index case), as shown in
\sectionref{sec:small}, we have that for a $1/2^{O(d)}$ fraction of
restrictions $\rho$ to variables $x_1,\dots,x_k$, the total influence of $f_\rho = \sign(p_\rho)$ is at most
\[
O(d \cdot n \cdot (\tau')^{1/(4d+1)}) = O(d\cdot (\log n)^{1/4} \cdot n^{1-1/(4d+2)}),\] so the conclusion of
\claimref{claim:irregas} holds in this case.

In the second case (large critical index case), as shown in
\sectionref{sec:large}, for a $1/2^{O(d)}$ fraction of restrictions
$\rho$ to $x_1,\dots,x_K$ the function $f_\rho$ is constant and hence has zero influence, so the conclusion of
\claimref{claim:irregas} certainly holds in this case as well. \qed

\section{A Fourier-Analytic Bound on Boolean Average Sensitivity} 
\label{sec:booleanas2}

In this section, we present a simple proof of the following upper bound on the
average sensitivity of a degree-$d$ PTF (\theoremref{thm:boolas2}):
$$\AS(n,d) \leq 2 n^{1-1/2^d} \mper$$

We recall here the definition of the formal derivative of a function
$f : \bn \to \R$.
\[
D_i p(x) = \sum_{S \ni i} \widehat{p}_S x_{S - \{i\}}.
\]
It is easy to see that,
\begin{equation}\label{eq:derivative}
D_i p(x) = {\frac 1 2} x_i[p(x) - p(x^{\oplus i})] = \frac{1}{2}
\left(\frac{p(x)-p(x^{\oplus i})}{x_i}\right)
\end{equation}
where ``$x^{\oplus i}$'' means ``$x$ with the $i$-th bit flipped.''

For a Boolean function $f$, we have $D_i f(x) = \pm 1$ iff flipping
the $i$th bit flips $f$; otherwise $D_i f(x)=0.$  So we have
\[
\Inf_i(f) = \E[|D_i f(x)|].
\]

\begin{lemma} \label{lem:2func} Fix $i \neq j \in [n]$.  Let $f,g :
\bn \to \R$ be functions such that $f$ is independent of the
$i$\th bit $x_i$ and
$g$ is independent of the $j$\th bit $x_j.$  Then
\[
\E_x[x_i x_j f(x) g(x)] \leq {\frac {\Inf_i(g)+\Inf_j(f)}2}.
\]
\end{lemma}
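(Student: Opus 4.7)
The plan is to expand both sides in the Fourier basis and then apply the AM-GM inequality coefficient-by-coefficient. Since $f$ is independent of $x_i$, its Fourier support lies in subsets $S \subseteq [n]$ with $i \notin S$; likewise $g$'s Fourier support avoids $j$. First I would compute
\[
\E_x[x_i x_j f(x) g(x)] = \sum_{\substack{S \not\ni i \\ T \not\ni j}} \widehat{f}(S)\,\widehat{g}(T)\; \E_x[x_i x_j x_S x_T],
\]
and recall that $\E_x[x_R] = \mathbf{1}[R = \emptyset]$ under the uniform distribution on $\{\pm1\}^n$, so the only surviving terms are those with $S \,\triangle\, T = \{i,j\}$. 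Combined with the support conditions on $f$ and $g$, this forces $j \in S$ and $i \in T$, and writing $S = U \cup \{j\}$, $T = U \cup \{i\}$ for some $U \subseteq [n] \setminus \{i,j\}$, the identity becomes
\[
\E_x[x_i x_j f(x) g(x)] = \sum_{U \subseteq [n] \setminus \{i,j\}} \widehat{f}(U \cup \{j\})\,\widehat{g}(U \cup \{i\}).
\]

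Next I would apply the elementary inequality $ab \leq \tfrac{1}{2}(a^2 + b^2)$ termwise:
\[
\E_x[x_i x_j f(x) g(x)] \leq \frac{1}{2}\sum_U \widehat{f}(U \cup \{j\})^2 + \frac{1}{2}\sum_U \widehat{g}(U \cup \{i\})^2.
\]
Finally, I would identify each of these sums as an influence. Since $f$ has no Fourier mass on sets containing $i$, $\sum_{U \subseteq [n]\setminus\{i,j\}} \widehat{f}(U \cup \{j\})^2 = \sum_{S \ni j} \widehat{f}(S)^2 = \Inf_j(f)$, and symmetrically the second sum equals $\Inf_i(g)$. This gives the claimed bound $\E_x[x_i x_j f(x) g(x)] \leq (\Inf_i(g) + \Inf_j(f))/2$.

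There is essentially no serious obstacle here: the statement is a direct consequence of Parseval-style bookkeeping together with AM-GM. The only thing to be careful about is tracking which Fourier coefficients contribute after multiplication by $x_i x_j$, and verifying that the independence assumptions on $f$ and $g$ make the residual sums exactly match $\Inf_j(f)$ and $\Inf_i(g)$ respectively. No appeal to hypercontractivity, the invariance principle, or anti-concentration is needed.
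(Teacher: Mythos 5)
Your proof is correct. It is the Fourier-side rendering of the argument the paper gives in ``physical space'': the paper conditions on $x_{-\{i,j\}}$, factors $\E_{x_i,x_j}[x_i x_j f(x)g(x)]$ into $\E_{x_i}[x_i g(x)]\cdot\E_{x_j}[x_j f(x)]$ (using exactly the independence hypotheses you use to constrain the Fourier supports), applies $ab\leq \frac{1}{2}(a^2+b^2)$ pointwise in $x_{-\{i,j\}}$, and then identifies $\E_{x_{-i}}\left[\left|\E_{x_i}[x_i g(x)]\right|^2\right]$ with $\Inf_i(g)$; you instead expand in the Fourier basis, observe that only the pairs $S=U\cup\{j\}$, $T=U\cup\{i\}$ with $U\subseteq[n]\setminus\{i,j\}$ survive, apply the same elementary inequality coefficient-wise, and identify the resulting sums with $\Inf_j(f)$ and $\Inf_i(g)$ via $\Inf_i(h)=\sum_{S\ni i}\widehat{h}(S)^2$. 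The two computations are related by Parseval applied to the functions $\E_{x_j}[x_j f]$ and $\E_{x_i}[x_i g]$ of $x_{-\{i,j\}}$, whose Fourier coefficients are precisely $\widehat{f}(U\cup\{j\})$ and $\widehat{g}(U\cup\{i\})$, so the proofs are essentially equivalent, differing only in whether AM--GM is applied pointwise or coefficient-by-coefficient. The paper's conditional-expectation formulation transfers verbatim to other product distributions where no convenient multiplicative basis is available, while your version makes the surviving cross-terms and the role of the support restrictions completely explicit; both are equally elementary, and your closing remark is right that no hypercontractivity or anti-concentration is needed here.
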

\begin{proof}
First, note that the influence of $i$\th coordinate on a function $f$
can be written as:
\begin{align}\label{eq:influence}
\Inf_i(f)  = \E_{x_{-i}}[\Var_{x_i}[f(x)]] = \E_{x}\left[ \left(\frac{|f(x^{\oplus i}) -
f(x)|}{2} \right)^2 \right]  = \E_{x_{-i}}\left[\left|\E_{x_i}[x_i
f(x)]\right|^2\right]
\end{align}

As $f$ is independent of $x_i$ and $g$ is independent of $x_j$, we can
write,
\begin{align*}
  \E_x[x_i x_j& f(x) g(x)] = \E_{x_{-\{i,j\}}} \E_{x_i,x_j}\left[x_i
    x_j f(x)g(x)\right] \\
  &= \E_{x_{-\{i,j\}}} \left[\E_{x_i}[x_i
    g(x)]\E_{x_j}[x_j f(x)]\right] \\
  &\leq \E_{x_{-\{i,j\}}} \left[ \frac{1}{2}|\E_{x_i}[x_i g(x)]|^2 +
    \frac{1}{2}|\E_{x_j}[x_j f(x)]|^2\right] \qquad
  (\text{using } ab \leq \littlefrac{1}{2}(a^2 + b^2))\\
  & \leq \frac{\Inf_j(f)+\Inf_i(g)}{2} \qquad \qquad \qquad
  \qquad\qquad \qquad \qquad (\text{using \equationref{eq:influence}})
\end{align*}
\end{proof}


\theoremref{thm:boolas2} is shown using an inductive argument over the degree
$d$.  Central to this inductive argument is the following lemma
relating the influences of a degree-$d$ PTF $\sgn(p(x))$ to the
degree-$(d-1)$ PTFs obtained by taking formal derivatives of $p$.
\begin{lemma} \label{lem:key}  For a PTF $f = \sgn(p(x))$ on $n$
variables and $i \in [n]$,
$\Inf_i(f) = \E[f(x) x_i \sgn(D_i p(x))].$
\end{lemma}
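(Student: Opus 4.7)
My plan is to prove the identity by first algebraically simplifying the integrand $f(x) x_i \sgn(D_i p(x))$ using the formula for the formal derivative from \equationref{eq:derivative}, and then pairing up inputs $x$ with their neighbors $x^{\oplus i}$.

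For the first step, since $x_i \in \{-1, +1\}$ we have $\sgn(x_i) = x_i$ and $x_i^2 = 1$. Applying \equationref{eq:derivative}, I would write
\[
x_i \sgn(D_i p(x)) \;=\; x_i \cdot \sgn\!\left(\tfrac{1}{2} x_i \bigl(p(x) - p(x^{\oplus i})\bigr)\right) \;=\; \sgn\!\bigl(p(x) - p(x^{\oplus i})\bigr),
\]
so that the quantity to be computed reduces to $\E_x\bigl[f(x)\, \sgn(p(x) - p(x^{\oplus i}))\bigr]$.

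For the second step, I would partition $\bn$ into the $2^{n-1}$ pairs of the form $\{x, x^{\oplus i}\}$ and analyze the contribution of each pair separately. If $f(x) = f(x^{\oplus i})$ (i.e. $x$ is not sensitive to the $i$-th coordinate), the two contributions $f(x)\sgn(p(x) - p(x^{\oplus i}))$ and $f(x^{\oplus i})\sgn(p(x^{\oplus i}) - p(x)) = -f(x)\sgn(p(x) - p(x^{\oplus i}))$ cancel (with the degenerate case $p(x) = p(x^{\oplus i})$ contributing $0$ anyway). If instead $f(x) \neq f(x^{\oplus i})$, then without loss of generality $f(x) = +1$ and $f(x^{\oplus i}) = -1$, which forces $p(x) > 0 > p(x^{\oplus i})$ and hence $\sgn(p(x) - p(x^{\oplus i})) = +1$; a quick check shows both members of the pair contribute $+1$, for a pair total of $+2$.

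For the final step, summing these contributions gives
\[
\sum_{x \in \bn} f(x)\, \sgn\!\bigl(p(x) - p(x^{\oplus i})\bigr) \;=\; 2 \cdot \#\{\text{sensitive pairs}\} \;=\; \#\{x : f(x) \neq f(x^{\oplus i})\},
\]
so dividing by $2^n$ yields $\Pr_x[f(x) \neq f(x^{\oplus i})] = \Inf_i(f)$, as desired. The argument is essentially a discrete integration-by-parts, and I do not anticipate any real obstacle beyond bookkeeping the $\sgn(0)$ convention, which only affects a measure-zero subset of inputs and is resolved by observing that if $p(x) = p(x^{\oplus i})$ then $D_i p(x) = 0$ and both the pair's contributions vanish.
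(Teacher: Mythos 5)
Your argument is correct and is essentially the paper's proof run in reverse: both hinge on \equationref{eq:derivative} to identify $\sgn\bigl(p(x)-p(x^{\oplus i})\bigr)$ with $x_i\,\sgn(D_ip(x))$ and on a sensitive/non-sensitive case analysis, your explicit pairing of $x$ with $x^{\oplus i}$ playing the role of the paper's change of variables $x\mapsto x^{\oplus i}$. One small caveat: on $\bn$ the set where $D_ip(x)=0$ can have positive probability (it is not ``measure zero''), so your reduction implicitly uses the convention $\sgn(0)=0$ there; equivalently, one can note directly that such pairs contribute $0$ to $\E[f(x)\,x_i\,\sgn(D_ip(x))]$ because $f(x)=f(x^{\oplus i})$ while $x_i$ changes sign, which is exactly the fix you sketch.
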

The following simple claim will be useful in the proof of the above
lemma.
\begin{claim}\label{claim:sgn}
For two real numbers $a,b$,  if $\sgn(a) \neq \sgn(b)$ then
$$ \sgn(\sgn(a)-\sgn(b)) = \sgn(a-b)$$
\end{claim}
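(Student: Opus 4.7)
The plan is to prove \claimref{claim:sgn} by a direct case analysis on the values of $\sgn(a)$ and $\sgn(b)$. Since the paper's convention is that $\sgn$ takes values in $\{-1,+1\}$, the hypothesis $\sgn(a) \neq \sgn(b)$ leaves only two cases to consider, and in each the verification is a one-line computation.

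In the first case, suppose $\sgn(a) = +1$ and $\sgn(b) = -1$. Then by definition of $\sgn$ we have $a > 0$ and $b < 0$, so $a - b > 0$ and therefore $\sgn(a-b) = +1$; meanwhile $\sgn(a) - \sgn(b) = 2$, so $\sgn(\sgn(a)-\sgn(b)) = +1$ as well. The second case, $\sgn(a) = -1$ and $\sgn(b) = +1$, is symmetric: $a < 0 < b$ yields $\sgn(a-b) = -1$, and $\sgn(a) - \sgn(b) = -2$ yields $\sgn(\sgn(a)-\sgn(b)) = -1$. In both cases the two quantities coincide, which is the desired conclusion.

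There is no real obstacle here; the only thing to note is the compatibility of the sign conventions. Intuitively the claim just says that when $a$ and $b$ sit on opposite sides of zero, the order of $a$ and $b$ agrees with the order of $\sgn(a)$ and $\sgn(b)$, which is immediate. The claim will be used immediately afterward in the proof of \lemmaref{lem:key} to rewrite $\Inf_i(f)$ in a form involving $\sgn(D_i p(x))$ via the identity \eqref{eq:derivative}.
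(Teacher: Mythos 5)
Your proof is correct and follows essentially the same route as the paper: a two-case analysis on the sign pattern, checking in each case that both sides evaluate to the same value in $\{-1,+1\}$ (the paper's convention takes $\sgn(a)=1$ when $a \geq 0$, but the argument is unaffected).
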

\begin{proof}
If $\sgn(a) = 1$ and $\sgn(b) = -1$ ($a \geq 0$, $b < 0$) then $a - b
\geq 0$.  Hence in this case, $\sgn(a-b) = 1 = \sgn(1-(-1)) =
\sgn(\sgn(a)-\sgn(b))$.  On the other hand, if $\sgn(a) = -1$ and
$\sgn(b) = 1$, then $\sgn(a-b) = -1 = \sgn((-1)-1) =
\sgn(\sgn(a)-\sgn(b))$.
\end{proof}
%
\begin{proof}[of \lemmaref{lem:key}]
The influence of the $i$\th coordinate is given by,
\begin{align}\label{eq:inflexpand}
\Inf_i (f)&=\E\left[{\frac 1 2}|f(x) - f(x^{\oplus i})|\right]
\nonumber\\
      &=\E\left[{\frac 1 2}\left(f(x) - f(x^{\oplus i})\right)\sgn\left(f(x)-f(x^{\oplus i})\right)\right]
\end{align}
Consider an $x$ for which $f(x) \neq f(x^{\oplus i})$.  In this case,
we can use \claimref{claim:sgn} to conclude:
\begin{align*}
 \sgn\left(f(x)-f(x^{\oplus i})\right) & = \sgn\left(p(x) - p(x^{\oplus i})\right) \mcom\\
& = \sgn(2x_i D_i p(x)) = x_i\sgn(D_i p(x))\mper \quad  (\text{using
\eqref{eq:derivative}})
\end{align*}
Hence for an $x$ with $f(x) \neq f(x^{\oplus i})$,
$$\left(f(x) - f(x^{\oplus
i})\right)\sgn\left(f(x)-f(x^{\oplus i})\right) = \left(f(x) - f(x^{\oplus
i})\right) x_i \sgn(D_i p(x)) \mper$$
On the other hand, if $f(x) = f(x^{\oplus i})$ then the above equation
continues holds since both the sides evaluate to $0$.
Substituting this equality into \equationref{eq:inflexpand} yields,
\begin{align*}
\Inf_i(f) = {\frac 1 2} \E\left[f(x) x_i \sgn(D_i p(x))\right] -
\frac{1}{2}\E\left[f(x^{\oplus i}) x_i \sgn(D_i p(x))  \right]\mper
\end{align*}
Notice that the $i$\th coordinate $(x^{\oplus i})_i$ of $x^{\oplus i}$ is given by
$-x_i$.  Since $D_i p$ is independent of the $i$\th
coordinate $x_i$, we have $D_i p(x) = D_i p(x^{\oplus i})$.  Rewriting
the above equation, we get
\begin{align*}
\Inf_i(f) &= {\frac 1 2} \E\left[f(x) x_i \sgn(D_i p(x))\right] +
\frac{1}{2}\E\left[f(x^{\oplus i}) (x^{\oplus i})_i \sgn(D_i
p(x^{\oplus i}))  \right] \mcom\\
&=  \E\left[f(x) x_i \sgn(D_i p(x))\right]  \qquad ((x^{\oplus i})\text{ is also uniformly distributed})
\end{align*}
\end{proof}

\begin{theorem}  \label{thm:recursiverelation}
Let $\AS(n,d)$ denote the max possible average sensitivity
of any degree-$d$ PTF on $n$ variables.  Then we have
\[ \AS(n,d) \leq \sqrt{n + n \cdot \AS(n,d-1)}.\]
\end{theorem}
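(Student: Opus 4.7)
The plan is to combine the two preceding lemmas through a Cauchy--Schwarz argument. Let $f = \sgn(p)$ be a degree-$d$ PTF and, for each $i \in [n]$, set $g_i(x) \eqdef \sgn(D_i p(x))$. Observe that $D_i p$ is a polynomial of degree at most $d-1$ that is independent of $x_i$, so each $g_i$ is a degree-$(d-1)$ PTF independent of $x_i$, and in particular $\AS(g_i) \leq \AS(n,d-1)$.

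By \lemmaref{lem:key}, the total influence can be written as a single expectation of a dot product:
\[
\AS(f) = \sum_{i=1}^n \Inf_i(f) = \sum_{i=1}^n \E[f(x)\, x_i\, g_i(x)] = \E\!\left[f(x) \cdot \sum_{i=1}^n x_i g_i(x)\right].
\]
Since $|f(x)| \leq 1$, Cauchy--Schwarz yields $\AS(f) \leq \sqrt{\E[(\sum_i x_i g_i(x))^2]}$, so it suffices to show that the inner expectation is at most $n + n \cdot \AS(n,d-1)$.

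Expanding the square gives
\[
\E\!\left[\Big(\sum_i x_i g_i(x)\Big)^{\!2}\right] = \sum_{i=1}^n \E[g_i(x)^2] + \sum_{i \neq j} \E[x_i x_j g_i(x) g_j(x)].
\]
The diagonal terms are bounded by $n$ since each $g_i$ takes values in $\{-1,0,+1\}$. For the off-diagonal terms, note that $g_i$ is independent of $x_i$ and $g_j$ is independent of $x_j$, so \lemmaref{lem:2func} applies and gives
\[
\E[x_i x_j g_i(x) g_j(x)] \leq \tfrac{1}{2}\bigl(\Inf_i(g_j) + \Inf_j(g_i)\bigr).
\]
Summing over $i \neq j$ and using symmetry between the two indices in the sum, the cross terms contribute at most $\sum_{i \neq j} \Inf_i(g_j) = \sum_j \sum_{i \neq j} \Inf_i(g_j) \leq \sum_j \AS(g_j) \leq n \cdot \AS(n,d-1)$.

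Putting the pieces together, $\E[(\sum_i x_i g_i(x))^2] \leq n + n \cdot \AS(n,d-1)$, and taking square roots gives the claimed recursive bound. The only substantive step is recognizing that after Cauchy--Schwarz the cross terms are exactly the objects that \lemmaref{lem:2func} was designed to control, and that the relevant ``other functions'' are precisely the sign-derivatives $g_i = \sgn(D_i p)$, which are themselves degree-$(d-1)$ PTFs and hence governed by $\AS(n,d-1)$; once this is seen, no calculation beyond the expansion of the square is required.
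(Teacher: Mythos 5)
Your proof is correct and follows essentially the same route as the paper: apply \lemmaref{lem:key} to write the total influence as $\E[f(x)\sum_i x_i \sgn(D_i p(x))]$, use Cauchy--Schwarz, expand the square, bound the diagonal by $n$, and control the cross terms via \lemmaref{lem:2func} together with the observation that each $\sgn(D_j p)$ is a degree-$(d-1)$ PTF, giving the $n\cdot\AS(n,d-1)$ bound. No gaps; this matches the paper's argument step for step.
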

\begin{proof}
\begin{eqnarray}
\Inf(f) &=& \sum_i \Inf_i(f) \nonumber\\
& = &  \sum_{i} \E[f(x) x_i \sgn(D_i p(x))] \quad \quad
\text{(by~\lemmaref{lem:key})}\nonumber \\
&=& \E[f(x) \sum_{i} x_i \sgn(D_i p(x))]  \nonumber \\
&\leq&
\sqrt{\E[f(x)^2]}
\cdot \sqrt{ \E[(\sum_{i} x_i \sgn(D_i p(x)))^2]} \label{eq:cs1}\\
&=& 1 \cdot\sqrt{ \E[\sum_{i,j} x_i x_j \sgn(D_i p(x))\sgn(D_j p(x))]}
\label{eq:exp}\\
&\leq& \sqrt{ \E[\sum_{i} x_i^2 \sgn(D_i p(x))^2] + \sum_{i \neq j}
\Inf_i(\sgn(D_j p(x)))} \label{eq:use2func}\\
&=& \sqrt{n +\sum_{i \neq j} \Inf_i(\sgn(D_j p(x)))}.
\label{eq:end}
\end{eqnarray}
Here (\ref{eq:cs1}) is the Cauchy-Schwarz inequality, (\ref{eq:exp}) is
expanding the square. Step (\ref{eq:use2func}) uses
\lemmaref{lem:2func} which we may apply since $D_i p(x)$ does not
depend on $x_i$.

Observe that for any fixed $j'$, we have $D_{j'} p(x)$ is a degree-$(d-1)$
polynomial and $\sgn(D_{j'} p(x))$ is a degree-$(d-1)$ PTF.  Hence, by
definition we have,
$$\sum_{i \neq {j'}} \Inf(\sgn(D_{j'} p(x))) \leq \AS(n,d-1) \mcom$$
for all $j' \in [n]$.  Therefore the quantity $\sum_{i \neq
j} \Inf(\sgn(D_j p(x))) \leq n \cdot \AS(n,d-1)$, finishing the proof.
\end{proof}

The bound on average sensitivity (\theoremref{thm:boolas2}) follows
immediately from the above recursive relation.

\begin{proof}[of \theoremref{thm:boolas2}]
  Clearly, we have $\AS(n,0) = 0$.  For $d = 1$,
  \theoremref{thm:recursiverelation} yields $\AS(n,1) \leq \sqrt{n}$.
  Now suppose $\AS(n,d) = 2n^{1-1/2^d}$ for $d \geq 1$, then by
  \theoremref{thm:recursiverelation},
$$ \AS(n,d+1) \leq \sqrt{n + n \cdot \AS(n,d)} \leq \sqrt{4
n^{2-1/2^{d}}} = 2n^{1-1/2^{d+1}}\mcom$$ finishing the proof.
\end{proof}

\section{Boolean average sensitivity vs noise sensitivity}
\label{sec:booleanns}

Our results on Boolean noise sensitivity are obtained via the following simple reduction which translates any upper bound on
average sensitivity for degree-$d$ PTFs over Boolean variables into a corresponding upper bound on noise sensitivity. This
theorem is inspired by the proof of noise sensitivity of halfspaces by Peres \cite{Peres:04}.

\begin{theorem} \label{thm:reduction}Let $\NS(\eps,d)$ denote the
  maximum noise sensitivity of a degree $d$-PTF at a noise rate of
  $\eps$.  For all $0 \leq \eps \leq 1$ if $m = \lfloor \frac{1}{\eps}
  \rfloor$ then,
$$ \NS(\eps,d) \leq \frac{1}{m} \AS(m,d) \mper$$
\end{theorem}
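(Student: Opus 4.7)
My plan is to follow a Peres-style random coupling that turns a degree-$d$ PTF on $n$ variables (with $\eps$-noise) into a random degree-$d$ PTF on $m$ variables whose average sensitivity controls the noise sensitivity. First I would dispose of the easy case $\eps \geq 1/2$ (the bound is vacuous since $\NS \leq 1/2$ and $\AS(m,d)/m \geq 1/m = 1$ for $m=1$). So assume $\eps \leq 1/2$, whence $m \geq 2$ and $1/m \leq 1/2$. Since the Fourier formula $\NS_\eps(f) = \tfrac{1}{2} - \tfrac{1}{2}\sum_S \widehat{f}(S)^2(1-2\eps)^{|S|}$ is monotonically non-decreasing in $\eps$ on $[0,1/2]$ and $\eps \leq 1/m$, it suffices to prove $\NS_{1/m}(f) \leq \tfrac{1}{m}\AS(m,d)$.

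Next I would set up the coupling. Write $f = \sgn(p)$ with $p$ of degree $d$. Sample a random block assignment $b = (b_1,\ldots,b_n) \in [m]^n$ uniformly, and define the random function $h_{b,x}(u) := f(u_{b_1}x_1,\ldots,u_{b_n}x_n)$ on $u \in \{-1,1\}^m$. The key structural observation is that substituting $x_i \mapsto u_{b_i} x_i$ into a monomial $x_S$ produces $x_S \cdot \prod_{j \in [m]} u_j^{|\{i\in S : b_i = j\}|}$, which after reducing via $u_j^2 = 1$ is a multilinear monomial in $u$ of degree at most $|S| \leq d$. Hence for every fixed $(b,x)$, $h_{b,x}$ is a degree-$d$ PTF on $m$ variables, so $\AS(h_{b,x}) \leq \AS(m,d)$.

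The heart of the argument is the identity
\[
\E_{b,x}\big[\AS(h_{b,x})\big] \;=\; m\cdot \Pr_{b,x,u,k}\big[h_{b,x}(u) \neq h_{b,x}(u^{\oplus k})\big],
\]
with $k$ uniform in $[m]$. Setting $\tilde x_i := u_{b_i}x_i$ and $\tilde y_i := (u^{\oplus k})_{b_i}x_i$, the pair $(\tilde x,\tilde y)$ satisfies $\tilde y_i = -\tilde x_i$ exactly when $b_i = k$. I would then verify that $\tilde x$ is uniform on $\{-1,1\}^n$ and independent of the ``flip set'' $S := \{i : b_i = k\}$, and that $S$ itself is distributed as an iid rate-$1/m$ subset of $[n]$ (each $i$ lies in $S$ independently with probability $1/m$, by a direct calculation averaging over $(b,k)$). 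Thus $(\tilde x,\tilde y)$ has exactly the joint law of an iid $1/m$-noise pair, so the right-hand side above equals $m \cdot \NS_{1/m}(f)$. Dividing by $m$ and combining with the bound $\AS(h_{b,x}) \leq \AS(m,d)$ finishes the proof.

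The step I expect to require the most care—really the only subtle point—is the distributional claim that the ``correlated block noise'' coupling produces iid rate-$1/m$ noise. On its face the flips look correlated, since two indices $i,j$ with $b_i = b_j$ either flip together or not at all; but averaging over the uniform choice of $k$ exactly restores independence across coordinates. The rest of the argument is routine once the coupling is in hand, and no heavy analytic machinery (anti-concentration, invariance, hypercontractivity) is needed.
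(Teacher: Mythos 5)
Your proposal is correct and is essentially the paper's own argument: the same Peres-style coupling in which a uniformly random assignment of the $n$ coordinates to $m$ blocks together with random signs turns $f$ into a random degree-$d$ PTF on $m$ variables (your $h_{b,x}$ is the paper's $f_{a,\alpha}$), flipping a random block reproduces rate-$1/m$ noise, and averaging the total influence gives $\NS_{1/m}(f) \leq \frac{1}{m}\AS(m,d)$, with monotonicity handling $\eps \leq 1/m$. One tiny remark: in your $\eps > 1/2$ (i.e.\ $m=1$) case the justification should simply be $\NS_\eps(f) \leq 1 = \AS(1,d)$, since $\NS_\eps(f) \leq 1/2$ can fail for $\eps$ close to $1$; this does not affect the conclusion.
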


\theoremref{thm:boolns} follows immediately from this reduction along
with our bounds on Boolean average sensitivity (Theorems
\ref{thm:boolas} and \ref{thm:boolas2}), so it remains for us to prove
\theoremref{thm:reduction}.


\subsection{Proof of \theoremref{thm:reduction}}
\label{sec:prasadreduction}
\ignore{
Let $\NS(\eps,d)$ denote the maximum noise sensitivity of a degree
$d$-PTF at a noise rate of $\eps$.

The following theorem inspired by the proof of noise sensitivity of halfspaces by Peres \cite{Peres:04}, establishes a
reduction from upper bounds on average sensitivity of PTFs to corresponding bounds on their noise sensitivity.
\begin{theorem} \label{thm:reduction}
For all $0 \leq \eps \leq 1$ if $m = \lfloor \frac{1}{\eps} \rfloor$
then,
$$ \NS(\eps,d) \leq \frac{1}{m} \AS(m,d) \mper$$
\end{theorem}}

Let $f(x) = \sgn(p(x))$ be a degee $d$-PTF.  Let us denote $\delta =
\frac{1}{m}$.  As $\delta \geq \eps$, by the monotonicity of noise
sensitivity we have $\NS_{\eps}(f) \leq \NS_{\delta}(f)$. In the
following, we will show that $\NS_{\delta}(f) \leq \frac{1}{m}
\AS(m,d)$ which implies the intended result.  Recall that
$\NS_{\delta}(f)$ is defined as
$$ \NS_{\delta}(f) = \Pr_{x \sim_{\delta} y} \left[ f(x) \neq f(y)
\right]\mcom$$
where $x \sim_{\delta} y$ denotes that $y$ is generated by flipping
each bit of $x$ independently with probability $\delta$.
An alternate way to generate $y$ from $x$ is as follows:
\begin{itemize} \itemsep=0ex
\item[--] Sample $r \in \{1,\ldots,m\}$ uniformly at random.
\item[--] Partition the bits of $x$ into $m = \frac{1}{\delta}$ sets
$S_1,S_2,\ldots,S_m$ by
independently assigning each bit to a uniformly random set.  Formally,
a partition $\alpha$ is specified by a function $\alpha : \{1,\ldots,n\} \to
\{1,\ldots,m\}$ mapping bit locations to their partition numbers,
i.e., $i \in S_{\alpha(i)}$.  A
uniformly random partition is picked by sampling $\alpha(i)$ for each
$i \in \{1,\ldots, n\}$ uniformly at random from $\{1,\ldots,m\}$.
\item[--] Flip the bits of $x$ contained in the set $S_r$ to obtain $y$.
\end{itemize}
Each bit of $x$ belongs to the set $S_r$ independently with
probability $\frac{1}{m} = \delta$.  Therefore, the vector $y$
generated by the above procedure can equivalently be generated by
flipping each bit of $x$ with probability $\delta$.

Inspired by the above procedure, we now define an alternate equivalent
procedure to generate the pair $x \sim_{\delta} y$.
\begin{itemize}\itemsep=0ex
\item[--] Sample $a \in \bn$ uniformly at random.
\item[--] Sample a uniformly random partition $\alpha : \{1,\ldots,n\} \to
\{1,\ldots,m\}$ of the bits of $a$.
\item[--] Sample $z \in \bits^m$ uniformly at random.
\item[--] Sample $r \in \{1,\ldots,m\}$ uniformly at random.  Let
$\tilde{z} = z^{\oplus r}$ and
\begin{align*}
x_i = a_i z_{\alpha(i)}  & & y_i = a_i \tilde{z}_{\alpha(i)}
\end{align*}
\end{itemize}
Notice that $x$ is uniformly distributed in $\bn$, since both
$a$ and $z$ are uniformly distributed in $\bn$ and $\bits^m$
respectively.  Furthermore, $\tilde{z}_i = z_i$ for all $i \neq r$ and
$\tilde{z}_r = -z_r$.  Therefore, $y$ is obtained by flipping the bits
of $x$ in the coordinates belonging to the $r$\th partition.  As the
partition $\alpha$ is generated uniformly at random, this amounts to
flipping each bit of $x$ with probability exactly $\frac{1}{m} =
\delta$.

The noise sensitivity of $f$ can be rewritten as,
\begin{align*}
\NS_{\delta}(f) &= \Pr_{a,\alpha,z,r}\left[f(x) \neq f(y)\right]
\end{align*}
For a fixed choice of $a$ and $\alpha$, $f(x)$ is a function of $z$. In this light, let us define the function $f_{a,\alpha}
:\bits^m \to \bits$ for each $a,\alpha$ as $f_{a,\alpha}(z) = f(x)$. Returning to the expression for noise sensitivity we get:
\begin{align*}
  \NS_{\delta}(f) &= \Pr_{a,\alpha,z,r}\left[f_{a,\alpha}(z) \neq
    f_{a,\alpha}(\tilde{z}) \right] \\
  &= \E_{a,\alpha,z,r}\left[\one[f_{a,\alpha}(z) \neq
    f_{a,\alpha}(z^{\oplus r})] \right] \\
  &= \E_{a,\alpha,z}\left[ \frac{1}{m} \sum_{r=1}^m
    \one\left[f_{a,\alpha}(z) \neq
      f_{a,\alpha}(z^{\oplus r}) \right]\right] \\
  &= \E_{a,\alpha}\left[ \frac{1}{m} \sum_{r=1}^m \E_{z}
    \left[\one\left[f_{a,\alpha}(z) \neq f_{a,\alpha}(z^{\oplus r})
      \right]\right]\right] \mper
\end{align*}
In the above calculation, the notation $\one[E]$ refers to the
indicator function of the event $E$.  Recall that, by definition of
influences,
$$\Inf_r(f_{a,\alpha}) = \E_{z}\left[\one\left[f_{a,\alpha}(z) \neq
f_{a,\alpha}(z^{\oplus r})\right]\right] \mcom$$ for all $r$.  Thus, we can rewrite the noise sensitivity of $f$ as
\begin{equation}\label{eqn:noisestabilityinfl}
 \NS_{\delta}(f) = \E_{a,\alpha}\left[\frac{1}{m} \sum_{r=1}^m
\Inf_r(f_{a,\alpha})\right] =
\frac{1}{m}\E_{a,\alpha}\left[\Inf(f_{a,\alpha})\right]
\mper
\end{equation}
We claim that $f_{a,\alpha}$ is a degree $d$-PTF in $m$ variables.  To
see this observe that
$$ f_{a,\alpha}(z) = \sgn(p(x_1,\ldots,x_n)) = \sgn\left(p(a_1
z_{\alpha(1)},\ldots,a_{n}z_{\alpha(n)})\right), $$
which for  a fixed choice of $a,\alpha$ is a degree $d$-PTF in $z$.
Consequently, by definition of $\AS(m,d)$ we have $\Inf(f_{a,\alpha})
\leq \AS(m,d)$ for all $a$ and $\alpha$.  Using this in
\eqref{eqn:noisestabilityinfl}, the result follows.

\section{Application to Agnostic Learning}
 \label{sec:learn}

In this section, we outline the applications of the noise sensitivity bounds presented in this work to agnostic learning of
PTFs. Specifically, we will present the proofs of \theoremref{thm:learnuniform} and \theoremref{thm:learngaussian}.  To begin
with, we recall the main theorem of \cite{KKMS:08} about the $L_1$ polynomial regression algorithm:

\begin{theorem} \label{thm:l1}
Let $\D$ be a distribution over $X \times \{-1,1\}$ (where $X \subseteq \R^n$) which has marginal $\D_X$ over $X.$  Let $\calC$
be a class of Boolean-valued functions over $X$ such that for every $f \in \calC,$ there is a degree-$d$ polynomial
$p(x_1,\dots,x_n)$ such that $\E_{x \sim \D_X}[(p(x)-f(x))^2] \leq \eps^2.$  Then given independent draws from $\D$, the $L_1$
polynomial regression algorithm runs in time poly$(n^d,1/\eps,\log(1/\delta))$ and with probability $1 - \delta$ outputs a
hypothesis $h: X \times \{-1,1\}$ such that $ \Pr_{(x,y) \sim \D}[h(x) \neq y] \leq \opt + \eps, $ where $\opt = \min_{f \in
\calC} \Pr_{(x,y) \sim \D}[f(x) \neq y].$
\end{theorem}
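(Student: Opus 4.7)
The plan is to prove Theorem~\ref{thm:l1} in three steps: (a) exhibit a degree-$d$ polynomial with low $L_1$ loss $L_1(p) \eqdef \E_{(x,y)\sim \D}[|p(x)-y|]$ using the promised $L_2$-approximator, (b) use uniform convergence to show that the empirical $L_1$-minimizer output by the algorithm has nearly the same true $L_1$ loss, and (c) convert low $L_1$ loss into low $0$-$1$ classification error via a threshold-randomization trick.

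For step (a), fix $f^* \in \calC$ achieving $\opt$ and let $p^*$ be a degree-$d$ polynomial with $\E_x[(p^*(x)-f^*(x))^2] \leq \eps^2$ as promised. Pointwise truncation of $p^*$ to $[-1,1]$ only decreases $|p^*(x)-y|$ (since $y \in \{-1,1\}$), so I may assume $|p^*|\leq 1$. By Cauchy-Schwarz, $\E[|p^*-f^*|] \leq \eps$, and the triangle inequality yields
\[
L_1(p^*) \leq \E[|p^*(x)-f^*(x)|] + \E[|f^*(x)-y|] \leq \eps + 2\opt,
\]
since $|f^*(x)-y|=2$ exactly when $f^*(x) \neq y$, an event of probability $\opt$.

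For step (b), the algorithm finds, by solving a linear program in the $O(n^d)$ monomial coefficients, the degree-$d$ polynomial $\hat p$ with $|\hat p| \leq 1$ (enforced on the sample) that minimizes the empirical loss $\tfrac{1}{m}\sum_i |p(x_i)-y_i|$. Since the per-sample loss is bounded and Lipschitz in the coefficients, and since the pseudo-dimension of the degree-$d$ polynomial class over $\R^n$ is $O(n^d)$, standard uniform-convergence arguments show that $m = \poly(n^d, 1/\eps, \log(1/\delta))$ samples suffice to guarantee, with probability $1-\delta/2$, that $L_1(\hat p) \leq L_1(p^*) + \eps/2 \leq 2\opt + 2\eps$.

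For step (c), the algorithm outputs $h(x) = \sgn(\hat p(x)-\theta^*)$ where $\theta^*$ is chosen on fresh samples via empirical error minimization over a fine grid in $[-1,1]$. The key identity is that for $\theta$ uniform on $[-1,1]$ and any fixed $x,y$ with $|\hat p(x)|\leq 1$ and $y \in \{-1,1\}$,
\[
\Pr_\theta[\sgn(\hat p(x)-\theta) \neq y] = \tfrac{1}{2}|\hat p(x)-y|,
\]
so that $\E_\theta[\Pr_{(x,y)}[\sgn(\hat p(x)-\theta) \neq y]] = L_1(\hat p)/2 \leq \opt + \eps$ and some grid threshold witnesses classification error $\leq \opt + \eps$. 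The main obstacle is the uniform-convergence step (b): the $L_1$ loss is a real-valued objective over an infinite parametric class, so direct VC-dimension arguments for Boolean-valued predictors do not apply. The resolution is to use pseudo-dimension / Rademacher-complexity bounds for the bounded polynomial class, whose $O(n^d)$ effective dimension yields the stated $\poly(n^d, 1/\eps)$ sample complexity and justifies the transfer from empirical to true $L_1$ loss.
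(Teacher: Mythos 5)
The paper does not prove this statement at all: \theoremref{thm:l1} is imported verbatim from \cite{KKMS:08} as a black box (it is the main theorem of that paper about $L_1$ polynomial regression), so there is no internal proof to compare yours against. Your sketch is essentially a reconstruction of the standard KKMS argument -- a low-$L_1$ comparator obtained from the $L_2$ approximator via Cauchy--Schwarz and the triangle inequality, empirical $L_1$ minimization over the $O(n^d)$-dimensional coefficient space, and the randomized-threshold identity $\Pr_\theta[\sgn(t-\theta)\neq y]\le \min(|t-y|,2)/2$ to pass from $L_1$ loss to classification error. The outline is the right one.

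Two places need repair as written. First, after you truncate $p^*$ to $[-1,1]$ it is no longer a degree-$d$ polynomial, so it is not a legal comparator for the program you describe; and conversely, if the constraint $|\hat p|\le 1$ is ``enforced on the sample,'' the untruncated $p^*$ may be infeasible, so the chain ``empirical loss of $\hat p$ $\le$ empirical loss of $p^*$'' is not justified. The clean fix (and essentially what KKMS do) is to run unconstrained $L_1$ regression, compare against the untruncated $p^*$ (whose loss $|p^*(x)-y|$ has bounded second moment, so its empirical mean concentrates -- only a single fixed function is involved here), and do the clipping only implicitly through the threshold step, since $\Pr_\theta[\sgn(t-\theta)\neq y]\le |t-y|/2$ holds for all real $t$. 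Second, your uniform-convergence step quotes a pseudo-dimension bound for ``the bounded polynomial class,'' but boundedness off-sample is exactly what you do not have, so the loss class is not uniformly bounded and the claim is not immediate; you must either clip the loss at $2$ (which suffices, since only $\min(|\hat p(x)-y|,2)$ matters for the thresholded classifier, and clipping only helps the comparison with $p^*$) or, more simply, apply VC uniform convergence to the Boolean class $\{\sgn(p(x)-\theta)\}$ of thresholded degree-$d$ polynomials, whose VC dimension is $O(n^d)$, for each grid value of $\theta$. With those adjustments your argument goes through and yields the stated $\poly(n^d,1/\eps,\log(1/\delta))$ bound.
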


We first consider the case where $\D_X$ is the uniform distribution
over the $n$-dimensional Boolean hypercube $\{-1,1\}^n$.  Klivans et
al. \cite{KOS:04} observed that Boolean noise sensitivity bounds are
easily shown to imply the existence of low-degree polynomial
approximators in the $L_2$ norm under the uniform distribution on
$\{-1,1\}^n$:

\begin{fact} \label{fact:KOS04}
For any Boolean function $f: \{-1,1\}^n \to \{-1,1\}$ and any value
$0 \leq \gamma < 1/2$, there is a polynomial $p(x)$ of degree at
most $d = 1/\gamma$ such that $\E[(p(x)-f(x))^2] \leq {\frac 2 {1 -
e^{-2}}} \NS_\gamma(f).$
\end{fact}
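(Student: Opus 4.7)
The plan is to prove Fact~\ref{fact:KOS04} by taking $p$ to be the degree-$d$ Fourier truncation of $f$ and then relating the tail Fourier mass of $f$ to its noise sensitivity via the standard spectral identity for the noise operator.

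First I would recall the Fourier-analytic expression for noise sensitivity: since $f : \bits^n \to \bits$ satisfies $\sum_S \widehat{f}(S)^2 = 1$, a direct expansion gives
\[
\E_{x,y}[f(x)f(y)] \;=\; \sum_{S \subseteq [n]} (1-2\gamma)^{|S|}\, \widehat{f}(S)^2,
\]
where $y$ is obtained from $x$ by flipping each bit independently with probability $\gamma$. Rearranging $\NS_\gamma(f) = \tfrac12 - \tfrac12 \E[f(x)f(y)]$ yields the key identity
\[
2\,\NS_\gamma(f) \;=\; \sum_{S} \bigl(1-(1-2\gamma)^{|S|}\bigr)\,\widehat{f}(S)^2.
\]

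Next I would define $p(x) \eqdef \sum_{|S| \leq d} \widehat{f}(S)\, x_S$ with $d = \lfloor 1/\gamma \rfloor$, so by Parseval $\E[(p(x)-f(x))^2] = \sum_{|S| > d} \widehat{f}(S)^2$. The main step is then a pointwise comparison: for every $S$ with $|S| > d \geq 1/\gamma$ we have
\[
1 - (1-2\gamma)^{|S|} \;\geq\; 1 - (1-2\gamma)^{1/\gamma} \;\geq\; 1 - e^{-2},
\]
using the elementary inequality $(1-2\gamma)^{1/\gamma} \leq e^{-2}$ for $0 \leq \gamma < 1/2$. Plugging this lower bound into the identity above restricted to the tail gives
\[
(1-e^{-2})\sum_{|S|>d}\widehat{f}(S)^2 \;\leq\; \sum_{|S|>d}\bigl(1-(1-2\gamma)^{|S|}\bigr)\widehat{f}(S)^2 \;\leq\; 2\,\NS_\gamma(f),
\]
and dividing by $1-e^{-2}$ yields the claimed bound.

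There is essentially no obstacle: the proof is a one-line Parseval computation combined with the pointwise estimate $(1-2\gamma)^{1/\gamma}\leq e^{-2}$. The only minor point to double-check is the edge case $\gamma = 0$, where $\NS_0(f)=0$ and the Fourier truncation is $f$ itself (so both sides are $0$), and the case $\gamma$ very close to $1/2$, where $d = \lfloor 1/\gamma\rfloor$ could be as small as $2$ and the bound is still valid because the pointwise inequality $1-(1-2\gamma)^{|S|}\geq 1-e^{-2}$ holds for all $|S|\geq 1/\gamma$.
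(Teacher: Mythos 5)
Your proposal is correct, and it is precisely the standard argument behind this fact: the paper itself states it as a citation to \cite{KOS:04} without reproducing a proof, and the proof in that reference is exactly your route (take $p$ to be the degree-$\lfloor 1/\gamma\rfloor$ Fourier truncation, use $2\,\NS_\gamma(f)=\sum_S(1-(1-2\gamma)^{|S|})\widehat{f}(S)^2$, and bound the tail terms via $(1-2\gamma)^{1/\gamma}\leq e^{-2}$). All steps, including the Parseval identity for the truncation error and the pointwise estimate on sets with $|S|>1/\gamma$, check out and give the constant $\frac{2}{1-e^{-2}}$ as claimed.
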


\theoremref{thm:learnuniform} follows directly from
\theoremref{thm:l1}, Fact~\ref{fact:KOS04} and
\theoremref{thm:boolns}.

Next we turn to the case where $\D_X$ is the ${\calN}(0,I_n)$ distribution over $\R^n$.  In \cite{KOS:08} observed that using
entirely similar arguments to the Boolean case, Gaussian noise sensitivity bounds imply the existence of low-degree polynomial
approximators in the $L_2$ norm:

\begin{fact} \label{fact:KOS08}
For any Boolean function $f: \{-1,1\}^n \to \{-1,1\}$ and any value
$0 \leq \gamma < 1/2$, there is a polynomial $p(x)$ of degree at
most $d = 1/\gamma$ such that $\E_{\calG \sim
\calN(0,I_n)}[(p(\calG)-f(\calG))^2] \leq {\frac 2 {1 - e^{-1}}}
\GNS_\gamma(f).$
\end{fact}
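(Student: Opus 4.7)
The proof will mirror the Boolean analogue (Fact~\ref{fact:KOS04}) with Hermite analysis over the Gaussian space replacing Fourier analysis over the hypercube. First I would expand $f$ in the orthonormal multivariate Hermite basis, writing $f = \sum_{S} \widetilde{f}(S)\, H_S$ where $S$ ranges over multi-indices in $\mathbb{N}^n$ and $|S|$ denotes the total degree; since $f$ is $\{-1,+1\}$-valued, Parseval gives $\sum_S \widetilde{f}(S)^2 = \|f\|_2^2 = 1$.

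Next I would recall how Gaussian noise sensitivity decomposes in this basis. The Ornstein--Uhlenbeck operator $U_\rho$ defined by $(U_\rho f)(x) = \E_z[f(\rho x + \sqrt{1-\rho^2}\,z)]$ diagonalizes in the Hermite basis with eigenvalues $\rho^{|S|}$. With the noise coupling in the definition of $\GNS_\gamma(f)$, the pair $(x,y)$ has correlation $1-\gamma$, so $\E[f(x)f(y)] = \langle f, U_{1-\gamma} f\rangle = \sum_S (1-\gamma)^{|S|}\widetilde{f}(S)^2$. Consequently
\[
\GNS_\gamma(f) \;=\; \tfrac{1}{2} - \tfrac{1}{2} \E[f(x)f(y)] \;=\; \tfrac{1}{2}\sum_{S}\bigl(1-(1-\gamma)^{|S|}\bigr)\widetilde{f}(S)^2.
\]

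Then I would take the approximator to be the low-degree truncation $p = \sum_{|S|\leq d}\widetilde{f}(S)\, H_S$ with $d = 1/\gamma$. By orthonormality of the Hermite basis, $\E_{\calG\sim\calN(0,I_n)}[(p(\calG)-f(\calG))^2] = \sum_{|S|>d}\widetilde{f}(S)^2$. The key point is the per-coefficient lower bound: for every multi-index with $|S|>d = 1/\gamma$ we have $(1-\gamma)^{|S|}\leq (1-\gamma)^{1/\gamma}\leq e^{-1}$, so $1-(1-\gamma)^{|S|}\geq 1-e^{-1}$. Plugging this into the formula for $\GNS_\gamma(f)$ yields
\[
\GNS_\gamma(f) \;\geq\; \tfrac{1-e^{-1}}{2}\sum_{|S|>d}\widetilde{f}(S)^2 \;=\; \tfrac{1-e^{-1}}{2}\,\E_{\calG}[(p(\calG)-f(\calG))^2],
\]
and rearranging delivers the claimed inequality.

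No step looks hard; the only thing to verify is the elementary inequality $(1-\gamma)^{1/\gamma}\leq e^{-1}$ for $\gamma\in(0,1)$, which follows from $\ln(1-\gamma)/\gamma = -1 - \gamma/2 - \gamma^2/3 - \cdots \leq -1$. The one conceptual item worth flagging is that the fact, as stated, refers to $f:\{-1,1\}^n\to\{-1,1\}$, but the expectation $\E_{\calG}$ and the quantity $\GNS_\gamma(f)$ only make sense for $f:\R^n\to\{-1,1\}$; I would read it in that corrected form and apply the Hermite decomposition accordingly, citing \cite{Bog:98} and Appendix~\ref{ap:hermite} for the diagonalization of $U_\rho$.
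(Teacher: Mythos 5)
Your proof is correct and is exactly the argument the paper intends: the paper states this fact without proof, citing \cite{KOS:08} and noting it follows by "entirely similar arguments to the Boolean case," which is precisely your Hermite-truncation argument using the eigenvalues $(1-\gamma)^{|S|}$ of the Ornstein--Uhlenbeck operator and the bound $(1-\gamma)^{1/\gamma}\leq e^{-1}$. Your reading of the hypothesis as $f:\R^n\to\{-1,1\}$ (correcting the obvious typo in the statement) is also the right one.
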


For the special case of learning under the standard multivariate Gaussian $\N^n$, \theoremref{thm:learngaussian} follows
directly from \theoremref{thm:l1}, Fact~\ref{fact:KOS08} and \theoremref{thm:gaussns}. Since our results hold for \emph{all}
degree-$d$ PTFs, the extension to arbitrary Gaussian distributions follows exactly as described in Appendix~C of \cite{KOS:08}.

\section{Discussion} \label{sec:discussion}

An obvious question left open by this work is to actually resolve
the Gotsman-Linial conjecture and show that every degree-$d$ PTF
over $\{-1,1\}^n$ has average sensitivity at most $O(d\sqrt{n}).$
\cite{GS:09} show that this would have interesting implications in
computational learning theory beyond the obvious strengthenings of
the agnostic learning results presented in this paper.

In this section we observe (Proposition~\ref{obs:equiv}) that this conjecture is in fact equivalent to a strong upper bound on
the Boolean noise sensitivity of degree-$d$ PTFs.  We further point out (Proposition~\ref{obs:gausslessthanbool}) that Gaussian
noise sensitivity of degree-$d$ PTFs is upper bounded by Boolean noise sensitivity.  Thus, we propose working on improved upper
bounds for the Gaussian noise sensitivity of degree-$d$ PTFs as a  preliminary -- in fact, necessary -- step to settling the
Gotsman-Linial conjecture. \ignore{(In fact, the observations below imply that this ``preliminary'' step is \emph{necessary} to
obtain the Gotsman-Linial conjecture: if every degree-$d$ PTF has average sensitivity at most $O(d \sqrt{n})$, then the
Gaussian noise sensitivity of every degree-$d$ PTF is at most $O(d \sqrt{\eps}).$)}

\begin{prop} \label{obs:equiv}
The following two statements are equivalent:
\begin{enumerate}
\item  Every degree-$d$ PTF over $\{-1,1\}^n$ has $\AS(f) \leq O(d
\sqrt{n}).$

\item Every degree-$d$ PTF over $\{-1,1\}^n$ has $\NS_\eps(f) \leq O(d
\sqrt{\eps})$ for all $\eps.$
\end{enumerate}
\end{prop}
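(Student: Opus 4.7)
My plan is to prove the two directions separately. The forward direction (1)$\Rightarrow$(2) will be a direct invocation of the average-sensitivity-to-noise-sensitivity reduction already proved in \theoremref{thm:reduction}, while the converse (2)$\Rightarrow$(1) will be a Fourier-analytic argument extracting the average sensitivity from the noise sensitivity at rate $\eps = 1/n$.

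For (1)$\Rightarrow$(2), I would apply \theoremref{thm:reduction}: if every degree-$d$ PTF on $m$ variables has $\AS \leq O(d\sqrt{m})$, then for any $\eps \in [0,1]$, setting $m = \lfloor 1/\eps\rfloor$ gives
\[
\NS_\eps(f) \,\leq\, \tfrac{1}{m}\AS(m,d) \,\leq\, \tfrac{1}{m}\cdot O(d\sqrt{m}) \,=\, O(d/\sqrt{m}) \,\leq\, O(d\sqrt{\eps}),
\]
using $1/m \leq 2\eps$ for $\eps \leq 1$. This direction is essentially immediate.

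For the converse (2)$\Rightarrow$(1), the idea is to evaluate the hypothesized noise sensitivity bound at $\eps = 1/n$. Using the standard Fourier formula for noise sensitivity of a Boolean function $f\isafunc$,
\[
\NS_\eps(f) \;=\; \sum_{S \subseteq [n]} \tfrac{1-(1-2\eps)^{|S|}}{2}\,\widehat{f}(S)^2,
\]
I would observe that with $\eps = 1/n$ the coefficient $\tfrac{1-(1-2/n)^{|S|}}{2}$ is pointwise at least $c\cdot |S|/n$ for some absolute constant $c > 0$ and every $|S| \in [0,n]$. This uses that $1 - e^{-x} \geq (1-e^{-1})x$ on $[0,1]$ (concavity-of-chord bound) to handle $|S| \leq n/2$, while for $|S| > n/2$ the coefficient is bounded below by a constant, which in turn dominates $|S|/n \leq 1$ up to a constant. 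Consequently,
\[
\NS_{1/n}(f) \;\geq\; c\cdot \tfrac{1}{n}\sum_{S} |S|\,\widehat{f}(S)^2 \;=\; c\cdot \tfrac{\AS(f)}{n}.
\]
Applying the hypothesis $\NS_{1/n}(f) \leq O(d\sqrt{1/n}) = O(d/\sqrt{n})$ then yields $\AS(f) \leq O(d\sqrt{n})$, as desired.

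There is no real obstacle here; both implications are short. The only mildly delicate point is the uniform lower bound $\tfrac{1-(1-2/n)^{|S|}}{2} \geq c|S|/n$ across the full range $|S| \in [0,n]$ in the converse direction, but this is a one-line concavity computation once the range is split at $|S| = n/2$. The proof does not require any of the heavy machinery developed earlier in the paper beyond \theoremref{thm:reduction} itself.
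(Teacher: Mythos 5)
Your proof is correct, and the forward direction coincides exactly with the paper's: both simply invoke \theoremref{thm:reduction} with $m = \lfloor 1/\eps\rfloor$. Where you diverge is the converse. The paper proves $2) \Rightarrow 1)$ by a purely probabilistic conditioning argument: at noise rate $1/n$ there is $\Theta(1)$ probability that exactly one bit of $x$ is flipped, and conditioned on that event the disagreement probability is exactly $\AS(f)/n$, so $\NS_{1/n}(f) \geq \Theta(1)\cdot \AS(f)/n$. You instead use the spectral identity $\NS_\eps(f) = \sum_S \frac{1-(1-2\eps)^{|S|}}{2}\widehat{f}(S)^2$ together with the pointwise bound $\frac{1-(1-2/n)^{k}}{2} \geq c\,k/n$ for $k \in [0,n]$ (splitting at $k = n/2$ and using $1-e^{-x} \geq (1-e^{-1})x$ on $[0,1]$), and then $\AS(f) = \sum_S |S|\widehat{f}(S)^2$. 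Both routes deliver the same inequality $\NS_{1/n}(f) \geq \Omega(\AS(f)/n)$ and the same conclusion after plugging in the hypothesis at $\eps = 1/n$. The paper's argument is more elementary and self-contained (no Fourier formula needed, just counting how many bits flip); your spectral version leans on the standard noise-stability expansion, which the paper assumes as background anyway, and has the mild advantage of making the comparison between $\NS_\eps$ and total influence transparent for every $\eps$ (one gets $\NS_\eps(f) \geq \Omega(\min\{\eps|S|,1\})$ weights pointwise), at the cost of the small case analysis on $|S|$ that the paper's conditioning argument avoids. Either way, the implication is sound and no machinery beyond \theoremref{thm:reduction} and standard Fourier facts is used.
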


\begin{proof}

1) $\Rightarrow$ 2):  This follows immediately from \tref{reduction}

\medskip

2) $\Rightarrow$ 1):  Let $f=\sign(p)$ be a degree-$d$ PTF.  We have
\begin{eqnarray*}
\NS_{1/n}(f) &=& \Pr_{x,y}[f(x) \neq f(y)]\\
&=&\sum_{k=0}^n \Pr_{x,y}[f(x) \neq f(y) \ | \ y\text{ flips $k$ of
$x$'s bits}] \cdot \Pr_{x,y}[y\text{ flips $k$ of $x$'s bits}]\\
&\geq& \Pr_{x,y}[f(x) \neq f(y) \ | \ y\text{ flips 1 of $x$'s
bits}] \cdot \Pr_{x,y}[y\text{ flips 1 of $x$'s bits}]\\
&\geq& (1/n)\AS(f) \cdot \Theta(1),
\end{eqnarray*}
where the last inequality holds because at noise rate $1/n$, there
is constant probability that $y$ flips exactly 1 of $x$'s bits, and
conditioned on this taking place, the probability that $f(x) \neq
f(y)$ is exactly $\AS(f)/n.$  Taking $\eps = 1/n$ in 2) and
rearranging, we get 1).
\end{proof}

\begin{prop} \label{obs:gausslessthanbool}
Let $\NS(\eps,d)$ and $\GNS_{\eps,d}$ denote the maximum noise
sensitivity of a degree $d$ PTF in the Boolean and Gaussian domains
respectively.  For all $\eps$ and $d$, we have
\[
\NS(\eps,d) \geq \GNS(\eps,d) .
\]
\end{prop}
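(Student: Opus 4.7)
The plan is to embed an arbitrary Gaussian instance into a Boolean instance of growing dimension using a central-limit substitution; the inequality then falls out of the multivariate CLT together with monotonicity of Gaussian noise sensitivity in the noise rate. I may assume $\eps \leq 1/2$ (for $\eps > 1/2$ both quantities are bounded by $1/2$ and a separate symmetry argument applies). Fix $\delta > 0$ and let $f = \sign(p)$, where $p$ is a degree-$d$ polynomial on $\R^n$ with $\|p\|_2 = 1$ and $\GNS_\eps(f) \geq \GNS(\eps,d) - \delta$. For each positive integer $N$ I would define the degree-$d$ polynomial
\[
P_N(x) \eqdef p\!\left(\tfrac{1}{\sqrt{N}}\sum_{j=1}^N x_{1,j},\; \ldots,\; \tfrac{1}{\sqrt{N}}\sum_{j=1}^N x_{n,j}\right)
\]
in $nN$ Boolean variables, and let $F_N \eqdef \sign(P_N)$, a degree-$d$ Boolean PTF.

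Next I would do the correlation bookkeeping. Under the Boolean noise process at rate $\eps$, each pair $(X_{i,j}, Y_{i,j})$ is marginally uniform on $\bits^2$ with $\E[X_{i,j} Y_{i,j}] = 1-2\eps$, and the $nN$ such pairs are independent. Applying the multivariate CLT to the $2n$-dimensional vector
\[
\left(\tfrac{1}{\sqrt{N}}\sum_{j=1}^N X_{i,j},\; \tfrac{1}{\sqrt{N}}\sum_{j=1}^N Y_{i,j}\right)_{i \in [n]}
\]
shows that it converges in distribution to $(G, G')$, where $G \sim \N^n$ and $G' = (1-2\eps)\,G + \sqrt{1-(1-2\eps)^2}\,H$ for an independent $H \sim \N^n$. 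Since $p$ is continuous, the continuous mapping theorem gives $(P_N(X), P_N(Y))$ converging in distribution to $(p(G), p(G'))$, and the latter pair carries precisely the joint law used to define $\GNS_{2\eps}(f)$.

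The main technical step will be pushing this convergence through the discontinuous $\sign$ function. For this I invoke the Carbery--Wright anti-concentration bound (\theoremref{thm:carberywright}): since $\|p\|_2 = 1$, we have $\Pr[p(G) = 0] = \Pr[p(G') = 0] = 0$, so the set $\{(u,v) : \sign(u) \neq \sign(v)\}$ has topological boundary of zero measure under the limit distribution. The Portmanteau theorem then yields
\[
\NS_\eps(F_N) = \Pr[\sign(P_N(X)) \neq \sign(P_N(Y))] \;\longrightarrow\; \Pr[\sign(p(G)) \neq \sign(p(G'))] = \GNS_{2\eps}(f)
\]
as $N \to \infty$. The hard part of the argument is exactly this passage; the clean appeal to Carbery--Wright obviates the need for a joint invariance principle over pairs of polynomials, which would otherwise be required.

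To finish, Gaussian noise sensitivity is monotone nondecreasing in the noise rate (the Mehler expansion $\la f, T_{1-\eps} f \ra = \sum_\alpha \tilde f(\alpha)^2 (1-\eps)^{|\alpha|}$ is nonincreasing in $\eps$), so $\GNS_{2\eps}(f) \geq \GNS_\eps(f)$. Chaining everything gives
\[
\NS(\eps, d) \;\geq\; \NS_\eps(F_N) \;\longrightarrow\; \GNS_{2\eps}(f) \;\geq\; \GNS_\eps(f) \;\geq\; \GNS(\eps, d) - \delta,
\]
and sending $\delta \to 0$ completes the proof.
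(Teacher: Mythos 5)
Your proof is correct and follows essentially the same route as the paper: embed the Gaussian polynomial into Boolean PTFs via normalized sums of $\pm 1$ bits and apply the multivariate CLT, the only differences being that the paper matches the Gaussian correlation $1-\eps$ directly (it uses a rerandomize-with-probability-$\eps$ noise model on the Boolean side, so no factor of $2$ or monotonicity step appears), whereas you use flip-probability $\eps$ (correlation $1-2\eps$) and then monotonicity of $\GNS$ in the noise rate, and your Carbery--Wright/Portmanteau justification for passing the weak convergence through $\sgn$ is in fact more careful than the paper's bare appeal to continuity of $p(x)p(\tilde{x})$. One small caveat: for $\eps>1/2$ your parenthetical claim that both quantities are bounded by $1/2$ is false on the Boolean side (a dictator has $\NS_\eps=\eps$), but that case remains trivial since $\GNS_\eps(f)\leq 1/2$ (the correlation $1-\eps$ is nonnegative, so $\la f,T_{1-\eps}f\ra\geq 0$) while $\NS(\eps,d)\geq \eps>1/2$.
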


\begin{proof}
Consider a degree-$d$ PTF $f = \sgn(p(x))$ in the Gaussian setting.
\ignore{Here the polynomial $p(x_1,\ldots,x_n)$ is in general not
multilinear.} We will define a sequence of degree-$d$ PTFs
$\{h_{k}\}_{k=1}^{\infty}$ over the Boolean domain.  The function
$h_k : \{-1,1\}^{nk} \to \{-1,1\}$ is on $nk$ input bits
$\{y_{i}^{(j)} | i \in [n], j \in [k]\}$ and is given by,
$$ h_k(y_{1}^{(1)},y_{1}^{(2)},\ldots,y_{n}^{(k)}) \eqdef
\sgn\left(p\left(\frac{\sum_{j \in [k]}
y_{1}^{(j)}}{\sqrt{k}},\frac{\sum_{j \in [k]}
y_{2}^{(j)}}{\sqrt{k}},\dots, \frac{\sum_{j \in [k]}
y_{n}^{(j)}}{\sqrt{k}} \right)\right) .$$

By the Central Limit Theorem, the normalized sum $\frac{\sum_{j \in
[k]} y_{i}^{(j)}}{\sqrt{k}}$ of $k$ independent random values from
$\{-1,1\}$,  tends to in distribution to the normal distribution
$\N(0,1)$ as $k \rightarrow \infty$.  Intuitively, this implies that
as $k \rightarrow \infty$, among other things the Boolean noise
sensitivity of $h_k$ approaches the noise sensitivity of $f$.
However, since $h_k$ is a Boolean PTF its noise sensitivity is
bounded by $\NS(\eps,d)$.

We now present the details of the above argument. Consider the
random variables $y = (y_{1},\dots,y_{n}), \tilde{y} =
(\tilde{y}_{1},\dots,\tilde{y}_n) \in \{-1,1\}^{n}$ generated by
setting each $y_i$ to an uniform random value in $\{-1,1\}$ and
$\tilde{y}_i$ as
$$\tilde{y}_i = \begin{cases} y_i & \text{ with probability } 1-\eps\\
                \text{uniform value in } \{-1,1\} & \text{ with
probability~} \eps.
                \end{cases}$$
It is clear that $\E[y_{i}\tilde{y}_i] = 1-\eps$ for all $i \in [n]$
and all other pairwise correlations are $0$.  Let $\{
(y^{(1)},\tilde{y}^{(1)}),\dots,(y^{(k)},\tilde{y}^{(k)})\}$ be $k$
independent samples of $(y,\tilde{y})$.  By definition of Boolean
noise sensitivity,
\begin{align*}
 \NS_{\eps}(h_k) & = \Pr[h_{k}(y) \neq h_{k}(\tilde{y})] \\
                & = \Pr\left[p\left(\frac{\sum_{j \in [k]}
y^{(j)}}{\sqrt{k}} \right) \cdot p\left(\frac{\sum_{j \in [k]}
\tilde{y}^{(j)}}{\sqrt{k}} \right) \leq 0 \right].
\end{align*}

Let $x \sim \N^n, z \sim \N^n$ be independent and let $\tilde{x} = \alpha x + \beta z$, with $\alpha = 1-\eps$ and $\beta =
\sqrt{2\eps-\eps^2}$.  By the Multidimensional Central Limit Theorem~\cite{Feller}, as $k \rightarrow \infty$  we have the
following convergence in distribution,
$$\left(\frac{\sum_{j \in [k]}
y^{(j)}}{\sqrt{k}},\frac{\sum_{j \in [k]}
\tilde{y}^{(j)}}{\sqrt{k}}\right) \xrightarrow{\mathcal{D}}
(x,\tilde{x}).$$ Since the function $a(x,\tilde{x}) = p(x) \cdot
p(\tilde{x})$ is a continous function we get
\begin{eqnarray*}
\lim_{k \rightarrow \infty} \NS_{\eps}(h_k) &=& \lim_{k \rightarrow
\infty} \Pr\left[p\left(\frac{\sum_{j \in [k]} y^{(j)}}{\sqrt{k}}
\right) \cdot p\left(\frac{\sum_{j \in [k]}
\tilde{y}^{(j)}}{\sqrt{k}} \right) \leq 0 \right]\\
&=& \Pr_{x,\tilde{x}}[p(x)p(\tilde{x}) \leq 0]\\
&=& \GNS_{\eps}(f)
\end{eqnarray*}
and the result is proved.
\end{proof}


\bibliographystyle{alpha} \bibliography{allrefs}

\appendix

\section{Basics of Hermite Analysis} \label{ap:hermite}

Here we briefly review the basics of Hermite analysis over $\R^n$
under the distribution $\N^n.$  The reader who is unfamiliar with
Hermite analysis should note the many similarities to Fourier
analysis over $\{-1,1\}^n.$

We work within $L^2(\R^n, \calN^n)$, the vector space of all
functions $f : \R^n \to \R$ such that $\E_{x \sim \calN^n}[f(x)^2] <
\infty$. This is an inner product space under the inner product
\[
\la f, g \ra = \Ex_{x \sim \calN^n} [f(x)g(x)].
\]
This inner product space has a complete orthonormal basis given by
the \emph{Hermite polynomials}. In the case $n = 1$, this basis is
the sequence of polynomials
\[
h_0(x) = 1, \quad h_1(x) = x, \quad h_2(x) = \frac{x^2-1}{\sqrt{2}},
\quad h_3(x) = \frac{x^3 - 3x}{\sqrt{6}}, \quad \dots,
\]
\[
h_j(x) = \frac{\sqrt{j!}}{(j-0)!0!2^0} x^j -
\frac{\sqrt{j!}}{(j-2)!1!2^1}x^{j-2} +
\frac{\sqrt{j!}}{(j-4)!2!2^2}x^{j-4} - \frac{\sqrt{j!}}{(j-6)!3!2^3}
x^{j-6} + \cdots
 \] which may equivalently be defined by
\[
h_j(x) = \frac{(-1)^d}{\sqrt{d!} \exp(-x^2/2)} \cdot
\frac{d^j}{dx^j} \exp(-x^2/2).
\]
We note that $h_d(x)$ is a polynomial of degree $d.$ For general
$n$, the basis for $L^2(\R^n, \calN^n)$ is formed by all products of
these polynomials, one for each coordinate.  In other words, for
each $n$-tuple $S \in \mathbb{N}^n$ we define the $n$-variate
Hermite polynomial $H_S : \R^n \to \R$ by
\[
H_S(x) = \prod_{i=1}^n h_{S_i}(x_i);
\]
then the collection $(H_S)_{S \in \N^n}$ is a complete orthonormal
basis for the inner product space.  By orthonormal we mean that
\[
\la H_S, H_T \ra = \begin{cases} 1 & \text{if $S = T$,} \\ 0 &
\text{if $S \neq T$.} \end{cases}
\]
By complete, we mean that every function $f \in L^2$ can be uniquely
expressed as
\[
f(x) = \sum_{S \in \mathbb{N}^n} \widehat{f}(S) H_S(x),
\]
where the coefficients $\widehat{f}(S)$ are real numbers and the
infinite sum converges in the sense that
\[
\lim_{d \to \infty} \E\left[\left(f(x) - \sum_{|S| \leq d} c_S
H_S(x)\right)^2\right] = 0;
\]
here we have used the notation
\[
|S| = \sum_{i=1}^n S_i,
\]
which is also the total degree of $H_S(x)$ as a polynomial.\\

We call $\widehat{f}(S)$ the \emph{$S$ Hermite coefficient of $f$}.
By orthonormality of the basis $(H_S)_{S \in \N^n}$, we have the
following:
\[
\widehat{f}(S) = \la f, H_S \ra = \E[f(x) H_S(x)];
\]
\[
\|f\|_2^2 \eqdef \la f, f \ra = \sum_{S \in \mathbb{N}^n}
\widehat{f}(S)^2 \qquad \text{(``Parseval's identity'')};
\]
\[
\la f, g \ra = \sum_{S \in \mathbb{N}^n}
\widehat{f}(S)\widehat{g}(S) \qquad \text{(``Plancherel's
identity'')}.
\]
In particular, if $f : \R^n \to \{-1,1\}$, then $\sum_S
\widehat{f}(S)^2 = 1$.

Using the definition of influence from Section~\ref{sec:basicdef}, it is not difficult to show that for any $f: \R^n \to \R$
and any $i \in [n]$, we have $\GI_i(f) = \sum_{S: S_i > 0} \widehat{f}(S)^2$ (see e.g. Lecture~4 of \cite{Mossel:05}).

\end{document}